\def\fskip#1{}
\newtheorem{theorem}{Theorem}
\newtheorem{assumption}{Assumption}
\newtheorem{corollary}{Corollary}
\newtheorem{definition}{Definition}
\newtheorem{lemma}{Lemma}
\newtheorem{remark}{Remark}
\def\1{{\bf 1}}
\newcommand{\remove}[1]{}
\def\argmin{\mathop{\rm argmin}}
\def\argmax{\mathop{\rm argmax}}
\begin{document}
\title{Maximizing Convergence Time in Network Averaging Dynamics Subject to Edge Removal}
\vspace{-0.5cm}
\author{\authorblockN{S. Rasoul Etesami*\vspace{-0.5cm}}
\thanks{Department of Industrial and Systems Engineering and Coordinated Science Laboratory, University of Illinois at Urbana-Champaign,  Urbana, IL 61801, (Email: etesami1@illinois.edu).  This work is supported by the NSF CAREER Award under Grant No. EPCN-1944403. }
}
\maketitle
\begin{abstract}
We consider the consensus interdiction problem (CIP), in which the goal is to maximize the convergence time of consensus averaging dynamics subject to removing a limited number of network edges. We first show that CIP can be cast as an effective resistance interdiction problem (ERIP), in which the goal is to remove a limited number of network edges to maximize the effective resistance between a source node and a sink node. We show that ERIP is strongly NP-hard, even for bipartite graphs of diameter three with fixed source/sink edges, and establish the same hardness result for the CIP. We then show that both ERIP and CIP cannot be approximated up to a (nearly) polynomial factor assuming exponential time hypothesis. Subsequently, we devise a polynomial-time $mn$-approximation algorithm for the ERIP that only depends on the number of nodes $n$ and the number of edges $m$,  but is independent of the size of edge resistances.  Finally,  using a quadratic program formulation for the CIP, we devise an iterative approximation algorithm to find a first-order stationary solution for the CIP and evaluate its good performance through numerical experiments.
\end{abstract}

\section{Introduction}
Robustifying network operations in the presence of non-ideal physical nature (such as switching topologies and transmission delays  \cite{olfati2004consensus,nedic2010convergence}; noisy links \cite{xiao2007distributed}; quantization \cite{etesami2016convergence,bacsar2016convergence}; random link failures \cite{kar2009distributed}), as well as securing them against misbehavior of nodes and malicious adversarial attacks have been important areas of research in network security \cite{alpcan2010network}. Most research in this arena has been focused on applications such as consensus formation, or distributed averaging, and selected behavior of the adversary that leads to disruption of the underlying operation \cite{khanafer2015robust,pasqualetti2011consensus}.  In this paper, we shall follow the same path and analyze the impact of network interdiction on the convergence time of the consensus averaging dynamics.  

One of the main motivations for considering averaging dynamics as the underlying operation is that such dynamics are building blocks for many complex network problems. For instance, distributed optimization over networks \cite{nedic2018multi,nedic2010constrained}, coverage control \cite{schwager2008consensus}, formation control \cite{olfati2006flocking}, distributed Kalman filtering \cite{carli2008distributed},  and load balancing \cite{nedic2009distributed} are some examples of control and coordination problems with proposed solutions that rely crucially on distributed averaging or consensus dynamics. Therefore, a better
understanding of the network averaging dynamics in the presence of an adversary contributes toward more efficient design methods for robustifying the network against adversarial attacks.  As one of the immediate applications of this work, one can consider the security of power networks \cite{teixeira2010networked}, in which an adversary aims to sabotage the system's operation by interdicting the power lines (e.g., to imbalance the load/voltage at different nodes). As another application, one can consider the control of network diffusion dynamics in which the goal is to slow down the speed of diffusion dynamics by removing a limited number of network edges. For instance, blocking rumor or misinformation over social networks by removing an edge set \cite{yan2019rumor,jia2020blocking,zareie2021minimizing} or containing epidemics over socio-biological networks \cite{pare2018analysis} by restricting social interactions are two typical examples of control of diffusion dynamics over networks.

In this paper, our goal is to analyze the influence of network interdiction on operations of the consensus averaging dynamics from a computational perspective. In particular, we consider an adversary whose impact on the network could be breaking some selected links to prevent or slow down communication among corresponding agents, hence increasing the convergence time to consensus.  Without the presence of an adversary, distributed averaging involves the computation of the average of the initial values stored at each node of a graph in a distributed way, where, in an evolving process, each node (agent) updates its value as a linear combination of the values of its neighbors. In discrete-time, the process can be expressed in terms of a linear difference equation $x(t+1) = P x(t), x(0) = x_0$, where $x(t)=(x_1(t),\ldots,x_n(t))'$ is the column vector stacking all the nodes' values $x_i(t), i=1,2,\ldots,n$ at time $t$ (in a graph with $n$ nodes), $x_0$ is the vector of initial values, and $P=(p_{ij})$ is a symmetric irreducible stochastic matrix, with $p_{ij} = 0$ if and only if $\{i, j\}$ is not an edge of the graph. This process converges to the true average (and therefore all nodes reach a consensus on this true average), that is, $\lim_{t\to\infty} x_i(t) = \frac{\sum_{j=1}^n x_j(0)}{n}, \forall i$. The rate of the convergence depends on the network topology and the weights $p_{ij}$'s. Now, if we bring into the picture an adversary who wants to disrupt that process, we can endow the adversary with the capability to break a limited number of $\ell$ links to prevent convergence to the true average or slow down the convergence. To measure the convergence speed, one can use a well-studied objective function \cite{khanafer2015robust,lovisari2013resistance} $\frac{1}{2}\sum_{t=0}^{\infty} k(t) \|x(t) - \bar{x}\|^2$, where $k(\cdot)$ is a positive weighting function, and the adversary wants to maximize it by deciding what $\ell$ links to break at each time instance $t$. As we show in this work, finding the adversary's optimal strategy in this setting is strongly NP-hard and cannot be approximated within a nearly polynomial factor.

In this work, we introduce two new concepts: i) \emph{consensus interdiction problem} (CIP), in which the goal is to maximize the convergence time of consensus dynamics subject to removing a limited number of network edges; and ii) \emph{effective resistance interdiction problem} (ERIP), in which the goal is to remove a limited number of network edges to maximize the effective resistance between a source and a sink node.  We show a close connection between these two problems and leverage that connection to establish our complexity and inapproximability results. In particular, we develop approximation algorithms for both ERIP and CIP. To the best of our knowledge, this work is the first to study CIP and ERIP within the context of averaging dynamics.

\subsection{Related Work}

This work is related to the modeling framework described in \cite{khanafer2015robust}, which adopted the setting of distributed averaging as network operation.  It was shown in \cite{khanafer2015robust} that under a more elaborate setting, where there is a network designer/adversary who can repair/break a limited number of edges, the game between the adversary and the network designer admits a saddle-point strategy.\footnote{It was erroneously shown in \cite{khanafer2015robust} that under this more elaborate setting, the adversary's optimal strategy could be found in polynomial time. We give a counterexample to this claim in Appendix II.} Our work is also related to the literature on network interdiction problems \cite{smith2020survey}, in which, broadly speaking, the goal is to know how sensitive a particular property of a network is with respect to changes in the graph structure (e.g., edge removal). We mention here matching interdiction \cite{zenklusen2010matching}, edge connectivity interdiction \cite{zenklusen2014connectivity}, network flow interdiction \cite{wood1993deterministic,chestnut2017hardness}, minimum spanning tree interdiction \cite{zenklusen20151}, and shortest path interdiction \cite{bar1995complexity} as some of the well-studied network interdiction problems. Expanding upon the past literature, in this paper, we introduce the ERIP and CIP. To establish the NP-hardness of the ERIP and CIP, even when we are allowed to break a subset of edges, we use a reduction from the max-clique problem \cite{karp1972reducibility}. It is worth noting that \cite{wood1993deterministic} also uses a reduction from the max-clique problem to establish the complexity of the network flow interdiction problem, in which the goal is to minimize the maximum flow by interdicting a limited number of network edges. However, establishing complexity results for the ERIP and CIP is more challenging because, unlike the network flow problem that admits a linear program formulation, the objective functions in ERIP and CIP have more general convex formulations and do not necessarily admit integral flows. Moreover, the existence of edge capacities in the network flow interdiction problem simplifies the complexity analysis as it allows one to control the flow directly via capacity adjustment. We also use another reduction from densest $k$-subgraph (D$k$S) \cite{manurangsi2017almost,chestnut2017hardness} to establish inapproximability results for both ERIP and CIP.

This paper is also related to the path interdiction problem \cite{bar1995complexity,ball1989finding,khachiyan2008short,schieber1995complexity}, in which the goal is to cut a limited number of edges in a graph to maximize the length of the shortest path between two terminal nodes. However, the effective resistance between two nodes is a complicated function of all the paths between those nodes. Therefore, it is not clear how the complexity results from the path interdiction problem can be carried over to the ERIP or CIP.  Moreover, since the objective functions in ERIP and CIP have quadratic forms, it is tempting to view these problems as a special case of sesquilinear programming \cite{toker1996complexity}. Unfortunately,  this is not true because, in sesquilinear programming, the objective function is of the form $z'Qz$, where $Q$ is a fixed positive definite matrix, and $z$ is the optimizing variable. In contrast, in the ERIP and CIP, $z$ is a fixed vector and the optimizing variable impacts entries of the matrix $Q$ in a certain pattern.  Finally, we note that the minimization version of the ERIP has been recently studied in \cite{chan2019network}, where it was shown that minimizing the effective resistance by buying at most a limited number of edges of an underlying network is NP-hard. However, minimization and maximization of the effective resistance exhibit completely different behaviors and the results from one side cannot be readily applied to the other side.  In fact, effective resistance is an important measure with a wide range of applications such as network robustness \cite{wang2014improving}, performance analysis of consensus algorithms \cite{lovisari2013resistance,etesami2016convergence}, spanning tree enumeration \cite{li2019maximizing}, commute time and mixing times \cite{levin2017markov}, graph eigenvalue optimization \cite{boyd2006convex}, epidemics over networks \cite{etesamipotential}, and power dissipation \cite{ghosh2008minimizing}. Therefore, we believe that our complexity results may directly impact many other applications.

\subsection{Organization and Notations}
The paper is organized as follows. In Section \ref{sec:introduction}, we formally introduce the CIP and establish its connection with the ERIP. In Section \ref{sec:complexity}, we show NP-hardness of the ERIP and CIP, even under a restrictive setting with fixed source/sink edges. In Section \ref{sec:hardness-approx}, we establish the hardness of approximation for the ERIP and CIP.  We devise approximation algorithms for both ERIP and CIP in Section \ref{sec:quadratic}. We provide some numerical results in Section \ref{sec:numerical} and conclude the paper in Section \ref{sec:conclusion}. Auxiliary lemmas are given in Appendix I, and a correction to the past literature is given in Appendix II.

\smallskip
\noindent
{\bf Notations:} We let $\boldsymbol{{\rm e}}_i$ be the $i$th Euclidean basis and $\boldsymbol{1}$ be a column vector of all ones. We let $J=\boldsymbol{1}\boldsymbol{1}'$ be a square matrix with all entries being $1$. A Laplacian matrix is a square matrix with nonnegative diagonal entries such that the sum of the entries in each row equals zero. We use $L^{+}$ to denote the pseudoinverse of a Laplacian matrix $L$, which is a square matrix satisfying $L^{+}L=I-\frac{J}{n}$. Given an undirected graph $\mathcal{G}=(V, E)$ with vertex set $V$ and edge set $E$, we denote the subgraph induced on $S\subseteq V$ by $\mathcal{G}[S]$. Moreover, we denote the set of edges with both endpoint in $S$ by $E[S]$, and the set of edges with only one endpoint in $S$ by $\delta(S)$. Similarly, for $S,T\subset V, S\cap T=\emptyset$, we denote the set of edges between $S$ and $T$ by $E[S,T]$. Given a network with conductance matrix $P=(p_{ij})$, i.e., an edge $\{i,j\}$ has conductance $p_{ij}$ (or resistance $r_{ij}=\frac{1}{p_{ij}}$), we denote the effective resistance between the fixed nodes $s$ and $t$ by $\mbox{R}_{\rm eff}(P)$.  In other words,  $\mbox{R}_{\rm eff}(P)$ denotes the induced voltage between $s$ and $t$ when a unit electric current is inserted into node $s$ and extracted from node $t$.


\section{Problem Formulation and Preliminary Results}\label{sec:introduction}
Let us consider the discrete-time consensus dynamics:
\begin{align}\label{eq:consensus-dynamics}
x(t+1)&=Px(t) \ \ t=0,1,2,\ldots, \cr 
x(0)&=x_0,
\end{align}
where $x_0$ is the $n$-dimensional vector of initial values (in a graph with $n$ nodes), and $P=(p_{ij})$ is an irreducible and symmetric stochastic matrix,\footnote{A nonnegative matrix is called stochastic if the sum of entries in each row equals $1$. } with $p_{ij} = 0$ if and only if $\{i, j\}$ is not an edge of the graph. It is known that the above dynamics will converge to the consensus vector given by $\bar{x}=\frac{1}{n}Jx_0$, where $J$ is the $n\times n$ matrix of all ones. A well-known performance index for measuring the speed of convergence of the consensus dynamics \eqref{eq:consensus-dynamics} to their equilibrium point $\bar{x}$ is given by the aggregate deviation of the iterates from the consensus point \cite{khanafer2015robust,lovisari2013resistance}, i.e.,   
\begin{align}\label{eq:index}
\mathcal{J}(P, x_0)=\sum_{t=0}^{\infty}\|x(t)-\bar{x}\|^2.
\end{align}  

Given an integer budget $\ell\in \mathbb{Z}_+$, and a subset of at most $\ell$ edges $\hat{E}\subseteq \big\{\{i,j\}: p_{ij}>0, i\neq j\big\}, |\hat{E}|\leq \ell$, let us define $P\setminus\hat{E}$ to be the symmetric stochastic matrix that is obtained from $P$ by removing the edges in $\hat{E}$, where removing an edge shifts the weight of that edge to its endpoints.  The reason for such a weight shift is that in distributed averaging dynamics, the sum of weights emanating from each node must always equal $1$. Thus, when an edge is broken, the endpoints of that edge will no longer observe each other, hence returning that weight into themselves as self-loops. More precisely, if $\{i,j\}\in \hat{E}$, then the $ij$-th entry of the matrix $P\!\setminus\!\hat{E}$ is given by 
\begin{align}\nonumber
&\big(P\!\setminus\!\hat{E}\big)_{ij}=\big(P\!\setminus\!\hat{E}\big)_{ji}=0,\cr 
&\big(P\!\setminus\!\hat{E}\big)_{ii}=p_{ii}+p_{ij},\cr 
&\big(P\!\setminus\!\hat{E}\big)_{jj}=p_{jj}+p_{ij}.
\end{align}
For any $\{i,j\}\notin \hat{E}$, we set $\big(P\setminus\hat{E}\big)_{ij}=p_{ij}$. It is easy to see that the modified matrix $P\setminus\hat{E}$ is also symmetric and stochastic but without the edges in $\hat{E}$. 

\begin{assumption}\label{ass:l-connectivity}
We assume that the budget $\ell$ is smaller than the network edge-connectivity so that by removing $\hat{E}$, the network
remains connected. This assumption is needed for the stability of the consensus dynamics such that \eqref{eq:index} remains finite.
\end{assumption}

In this work, we want to know that given a general symmetric stochastic matrix $P$, and an initial vector $x_0$, what is the optimal set of $\ell$ edges whose deletion from $P$ maximizes the objective function \eqref{eq:index}. In other words, we want to maximize the convergence time of the consensus dynamics \eqref{eq:consensus-dynamics} by removing at most $\ell$ edges from the underlying network. That brings us to the following optimization problem:
\noindent
\begin{algorithm}
{\bf Consensus Interdiction Problem (CIP)}: Given an irreducible and symmetric stochastic matrix $P$, an arbitrary initial vector $x_0$, and an integer budget $\ell\in \mathbb{Z}_+$, find an edge cut $\hat{E}$ of at most $\ell$ edges that solves the following optimization problem: 
\begin{align}\label{eq:optimization-problem}
&\max_{|\hat{E}|\leq \ell} \mathcal{J}\big(P\!\setminus\!\hat{E},x_0\big)=\sum_{t=0}^{\infty}\|\tilde{x}(t)-\bar{x}\|^2\cr 
&\mbox{s.t.}\ \ \ \tilde{x}(t+1)=\big(P\!\setminus\!\hat{E}\big)\tilde{x}(t), \ \tilde{x}(0)=x_0.
\end{align}
\end{algorithm}

\begin{lemma}\label{lemm:eff-cast}
Given any initial vector $x_0\in \mathbb{R}^n$, the CIP \eqref{eq:optimization-problem} is equivalent to solving the optimization problem $\max_{|\hat{E}|\leq \ell} \ x'_0\big(I-(P\!\setminus\!\hat{E})^2+\frac{J}{n}\big)^{-1}x_0$. In particular, for the specific choice of initial vector $x_0=\boldsymbol{{\rm e}}_s-\boldsymbol{{\rm e}}_t$, the CIP \eqref{eq:optimization-problem} is equivalent to solving the optimization problem $\max_{|\hat{E}|\leq \ell} \mbox{{\rm R}}_{\rm eff}\big((P\!\setminus\!\hat{E})^2\big)$.
\end{lemma}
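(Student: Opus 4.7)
The plan is to reduce the infinite sum defining $\mathcal J$ to a closed-form quadratic in $x_0$ via the spectral algebra of $Q:=P\!\setminus\!\hat E$, and then recognize the resulting inverse matrix as the Laplacian resolvent of the weighted graph whose conductances are the off-diagonal entries of $Q^2$.

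First I would set $M:=Q-\tfrac{J}{n}$ and use the symmetry and (doubly) stochastic identities $QJ=JQ=J$ and $J^2=nJ$ to verify by induction that $M^t=Q^t-\tfrac{J}{n}$ for every $t\ge 1$, together with $MJ=0$, $M\mathbf{1}=0$, and $M'=M$. Letting $x_0^{\perp}:=(I-\tfrac{J}{n})x_0$, a short check shows $\tilde x(t)-\bar x=M^t x_0^{\perp}$ for all $t\ge 0$ (the $t=0$ case is immediate, and for $t\ge 1$ one uses $M^t J=0$). By symmetry of $M$, $\|\tilde x(t)-\bar x\|^2=(x_0^{\perp})' M^{2t} x_0^{\perp}$, and Assumption~1 (which keeps the modified graph connected and, via the self-loops introduced by the weight-shift, aperiodic) guarantees that the non-Perron eigenvalues of $Q$ lie strictly inside the unit disk. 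The Neumann series $\sum_{t\ge 0}M^{2t}=(I-M^2)^{-1}$ therefore converges and yields $\mathcal J(Q,x_0)=(x_0^{\perp})'(I-M^2)^{-1}x_0^{\perp}$.

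Next I would simplify $I-M^2$: expansion gives $M^2=Q^2-\tfrac{J}{n}$, hence $I-M^2=I-Q^2+\tfrac{J}{n}=:A^{-1}$. Since $A\mathbf{1}=\mathbf{1}$ and $A'=A$, expanding $(x_0-c\mathbf{1})'A(x_0-c\mathbf{1})$ with $c=\tfrac{\mathbf{1}'x_0}{n}$ yields $\mathcal J(Q,x_0)=x_0'A x_0-\tfrac{(\mathbf{1}'x_0)^2}{n}$. The subtracted term does not depend on $\hat E$, so the CIP and $\max_{|\hat E|\le \ell}x_0'(I-(P\!\setminus\!\hat E)^2+\tfrac{J}{n})^{-1}x_0$ share the same optimizers, proving the first assertion. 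For $x_0=\boldsymbol{{\rm e}}_s-\boldsymbol{{\rm e}}_t$ one has $\mathbf{1}'x_0=0$, so no correction appears; moreover $L:=I-Q^2$ is a Laplacian (zero row sums, nonpositive off-diagonals, nonnegative diagonal entries) of the graph with conductance matrix $Q^2$. The identity $(L+\tfrac{J}{n})^{-1}=L^{+}+\tfrac{J}{n}$, verified in one line from $LL^{+}=I-\tfrac{J}{n}$, $LJ=L^{+}J=0$, and $J^2=nJ$, together with $J(\boldsymbol{{\rm e}}_s-\boldsymbol{{\rm e}}_t)=0$, then gives $x_0'A x_0=(\boldsymbol{{\rm e}}_s-\boldsymbol{{\rm e}}_t)'L^{+}(\boldsymbol{{\rm e}}_s-\boldsymbol{{\rm e}}_t)=\mbox{R}_{\rm eff}(Q^2)$, the second assertion.

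The only delicate point is the bookkeeping around the null direction $\mathbf{1}$: one must see that the single formula $M^t x_0^{\perp}$ captures $\tilde x(t)-\bar x$ for both $t=0$ and $t\ge 1$, that the matrix geometric series converges on $\mathbf{1}^{\perp}$, and that the resulting quadratic in $x_0^{\perp}$ differs from the one in $x_0$ only by an $\hat E$-independent constant. With those three observations in place, the pseudoinverse identity makes the effective-resistance interpretation immediate.
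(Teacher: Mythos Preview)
Your proof is correct and follows essentially the same route as the paper: write $M=Q-\tfrac{J}{n}$, use $M^{2}=Q^{2}-\tfrac{J}{n}$ to sum the geometric series, and then invoke the pseudoinverse identity $(L+\tfrac{J}{n})^{-1}=L^{+}+\tfrac{J}{n}$ together with $J(\boldsymbol{{\rm e}}_s-\boldsymbol{{\rm e}}_t)=0$ for the effective-resistance part. If anything, your treatment is slightly more careful than the paper's: by working with $x_0^{\perp}$ and tracking the $t=0$ term separately you obtain $\mathcal J(Q,x_0)=x_0'Ax_0-\tfrac{(\boldsymbol{1}'x_0)^2}{n}$, whereas the paper's chain of equalities tacitly uses $\tilde P^{t}-\tfrac{J}{n}=(\tilde P-\tfrac{J}{n})^{t}$ at $t=0$ (which only holds on $\boldsymbol{1}^{\perp}$); since the discrepancy is an $\hat E$-independent constant, the equivalence of the two maximization problems is unaffected either way.
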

\begin{proof}
For simplicity, let us define $\tilde{P}=P\setminus\hat{E}$. Using the definition of the consensus vector $\bar{x}=\frac{1}{n}Jx_0$ in the objective function \eqref{eq:optimization-problem}, we have
\begin{align}\nonumber
\mathcal{J}(\tilde{P},x_0)=\sum_{t=0}^{\infty}\|\tilde{x}(t)-\bar{x}\|^2=\sum_{t=0}^{\infty}\|\tilde{P}^tx_0-\frac{J}{n}x_0\|^2=\sum_{t=0}^{\infty}\|(\tilde{P}^t-\frac{J}{n})x_0\|^2.
\end{align}
Since $\tilde{P}$ is a symmetric stochastic matrix, for any $t\in \mathbb{Z}_+$, we have $\tilde{P}^tJ=J\tilde{P}^t=J$. Thus, using $(\frac{J}{n})^t=\frac{J}{n}, \forall t$, we have $\tilde{P}^t-\frac{J}{n}=(\tilde{P}-\frac{J}{n})^t, \forall t$, and we can write
\begin{align}\nonumber
\mathcal{J}(\tilde{P},x_0)&=\sum_{t=0}^{\infty}\|(\tilde{P}^t-\frac{J}{n})x_0\|^2=\sum_{t=0}^{\infty}\|(\tilde{P}-\frac{J}{n})^tx_0\|^2\cr 
&=
x'_0\big(\sum_{t=0}^{\infty}(\tilde{P}-\frac{J}{n})^{2t}\big)x_0=x'_0\big(\sum_{t=0}^{\infty}(\tilde{P}^2-\frac{J}{n})^{t}\big)x_0\cr 
&=x'_0\big(I-\tilde{P}^2+\frac{J}{n}\big)^{-1}x_0.
\end{align}
Therefore, given $P, x_0,\ell$, the CIP \eqref{eq:optimization-problem} can be written as
\begin{align}\label{eq:P-J}
\max_{|\hat{E}|\leq \ell} \ x'_0\big(I-\tilde{P}^2+\frac{J}{n}\big)^{-1}x_0,
\end{align}
which completes the first part of the proof. To show the second part, let us denote the Laplacian matrix associated with the conductance matrix $\tilde{P}^2$ by $L=I-\tilde{P}^2$. Then, using \cite[Eq. (7)]{ghosh2008minimizing}, the pseudoinverse of the Laplacian matrix $L$ is given by $L^{+}=(L+\frac{J}{n})^{-1}-\frac{J}{n}$. On the other hand, it is known \cite[Eq. (9)]{ghosh2008minimizing} that the effective resistance between nodes $s$ and $t$ in an electric network with conductance matrix $\tilde{P}^2$ (i.e., edge $\{i,j\}$ has conductance $(\tilde{P}^2)_{ij}$) is given by $\mbox{R}_{\rm eff}(\tilde{P}^2)=(\boldsymbol{{\rm e}}_s-\boldsymbol{{\rm e}}_t)'L^+(\boldsymbol{{\rm e}}_s-\boldsymbol{{\rm e}}_t)$. Therefore, we can write
\begin{align}\nonumber
\mbox{R}_{\rm eff}(\tilde{P}^2)&=(\boldsymbol{{\rm e}}_s-\boldsymbol{{\rm e}}_t)'L^+(\boldsymbol{{\rm e}}_s-\boldsymbol{{\rm e}}_t)\cr 
&=(\boldsymbol{{\rm e}}_s-\boldsymbol{{\rm e}}_t)'\Big((L+\frac{J}{n})^{-1}-\frac{J}{n}\Big)(\boldsymbol{{\rm e}}_s-\boldsymbol{{\rm e}}_t)\cr 
&=(\boldsymbol{{\rm e}}_s-\boldsymbol{{\rm e}}_t)'(L+\frac{J}{n})^{-1}(\boldsymbol{{\rm e}}_s-\boldsymbol{{\rm e}}_t)\cr 
&=(\boldsymbol{{\rm e}}_s-\boldsymbol{{\rm e}}_t)'(I-\tilde{P}^2+\frac{J}{n})^{-1}(\boldsymbol{{\rm e}}_s-\boldsymbol{{\rm e}}_t),
\end{align}
where the third equality holds because $J(\boldsymbol{{\rm e}}_s-\boldsymbol{{\rm e}}_t)=0$. Thus, if we choose the initial vector in \eqref{eq:P-J} to be $x_0=\boldsymbol{{\rm e}}_s-\boldsymbol{{\rm e}}_t$, the CIP reduces to solving $\max_{|\hat{E}|\leq \ell} \ \mbox{R}_{\rm eff}\big((P\setminus\hat{E})^2\big)$.
\end{proof}

As is shown in Lemma \ref{lemm:eff-cast}, the CIP is closely related to the effective resistance of the \emph{squared} conductance matrix $(P\setminus\hat{E})^2$. For that reason, we introduce the effective resistance interdiction problem (ERIP), which is similar to the CIP except that the interdicted conductance matrix is given by $P\setminus\hat{E}$ (rather than $(P\setminus\hat{E})^2$). In Section \ref{sec:scaling}, we will use a scaling argument to connect these two problem together.

\begin{algorithm}
\noindent
{\bf Effective Resistance Interdiction Problem (ERIP)}: Given a resistance network with conductance matrix $P$, two identified terminals $s$ and $t$, and an integer budget $\ell\in \mathbb{Z}_+$, find an edge cut $\hat{E}$ of at most $\ell$ edges that solves the optimization:
\begin{align}\label{eq:ERIP}
\max_{|\hat{E}|\leq \ell} \mbox{R}_{\rm eff}(P\!\setminus\!\hat{E}).
\end{align}
\end{algorithm}

\begin{remark}\label{rem:ERIP}
Using the same argument as in deriving \eqref{eq:P-J}, one can see that solving the ERIP is equivalent to solving the following optimization problem
\begin{align}\label{eq:E-R-J}
\max_{|\hat{E}|\leq \ell} \ (\boldsymbol{\mbox{e}}_s-\boldsymbol{\mbox{e}}_t)'\Big(I-(P\!\setminus\!\hat{E})+\frac{J}{n}\Big)^{-1}(\boldsymbol{\mbox{e}}_s-\boldsymbol{\mbox{e}}_t).
\end{align}
\end{remark}

\section{Complexity of the ERIP and CIP}\label{sec:complexity}

In this section, we first establish the strong NP-hardness of solving the ERIP and postpone its extension to the CIP to subsection \ref{sec:scaling}. Here, the strong NP-hardness means that solving the ERIP remains NP-hard even if all the entries of the conductance matrix $ P $ are bounded by a polynomial function of the number of nodes $n$. For simplicity of presentation, we work directly with edge resistances $r_{ij}=\frac{1}{p_{ij}}$ rather than edge conductances $p_{ij}$. 

\begin{definition}
Given a \emph{directed} network $\vec{\mathcal{G}}=(V,\vec{E})$ with directed edge set $\vec{E}$ and two nodes $s,t\in V$, a \emph{directed flow} from $s$ to $t$ is a nonnegative function $f:\vec{E}\to \mathbb{R}_+$ that satisfies flow conservation constraints, i.e., $\sum_{i: (i,j)\in \vec{E}}f_{ij}=\sum_{i: (j,i)\in \vec{E}}f_{ji}$, $\forall j\in V\setminus\{s,t\}$. The strength of the directed flow $f$ is defined to be $\|f\|=\sum_{i: (s,i)\in \vec{E}}f_{si}$. A \emph{unit directed flow} is a flow with strength $1$.    
\end{definition}

\begin{remark}
Given a unit directed flow $f$ from $s$ to $t$ in $\vec{\mathcal{G}}=(V,\vec{E})$, we often abuse the notation and define its (undirected) flow function $f:E\to \mathbb{R}_+$ on the undirected network $\mathcal{G}=(V,E)$ by simply the value of the directed flow on the edges regardless of the edge orientations. In other words, for each undirected edge $e=\{i,j\}\in E$ we let $f_e=f_{ij}$ or $f_e=f_{ji}$ depending on whether $(i,j)\in \vec{E}$ or $(j,i)\in \vec{E}$.
\end{remark}

Next, we state the following well-known lemma, which allows us to upper-bound the effective resistance using energy dissipation of unit flows. 

\begin{lemma}\label{lemm:thomson}(Thomson's Principle \cite[Theorem 9.10]{levin2017markov}) The effective resistance between $s$ and $t$ is the minimum energy dissipation over the network $\mathcal{G}=(V,E)$ by sending a unit of flow from $s$ to $t$, i.e.,  
\begin{align}\nonumber
\mbox{{\rm R}}_{\rm eff}(P)=\min\Big\{\sum_{e\in E}r_e f^2_e: f \ \mbox{is a unit flow from $s$ to $t$}\Big\},
\end{align}
where $r_e=\frac{1}{p_{ij}}$ denotes resistance of the edge $e=\{i,j\}\in E$. In particular, the minimum energy is achieved for the unit \emph{electrical} flow that also satisfies Ohm's laws. 
\end{lemma}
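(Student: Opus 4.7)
The plan is a standard orthogonal decomposition in the space of edge flows. First I would orient each undirected edge arbitrarily so that a flow becomes a signed function $f:\vec{E}\to \mathbb{R}$, with conservation $\sum_{i:(i,j)\in\vec{E}}f_{ij}-\sum_{i:(j,i)\in\vec{E}}f_{ji}=0$ at every $j\notin\{s,t\}$ and net outflow $1$ at $s$. Let $i^\star$ denote the electrical current when a unit current is injected at $s$ and extracted at $t$, and let $v:V\to\mathbb{R}$ be the associated node potential, so that Ohm's law $v_i-v_j=r_{ij}\,i^\star_{ij}$ holds on every oriented edge $(i,j)$. By definition, $\mathrm{R}_{\rm eff}(P)=v_s-v_t$.

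Second, I would compute the energy of the electrical flow by summation by parts. Using Ohm's law on each edge and then regrouping the sum by vertex,
\begin{align*}
\sum_{e\in E}r_e (i^\star_e)^2 \;=\; \sum_{(i,j)\in\vec{E}} (v_i-v_j)\, i^\star_{ij} \;=\; \sum_{j\in V} v_j \Big(\sum_{i:(j,i)\in\vec{E}}i^\star_{ji}-\sum_{i:(i,j)\in\vec{E}}i^\star_{ij}\Big).
\end{align*}
By Kirchhoff's current law, the bracket vanishes at every $j\notin\{s,t\}$, equals $+1$ at $s$, and $-1$ at $t$. Hence the right-hand side collapses to $v_s-v_t=\mathrm{R}_{\rm eff}(P)$, so the electrical flow attains the claimed energy value.

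Third, I would show optimality by decomposing an arbitrary unit flow $f$ as $f=i^\star+g$, where $g:=f-i^\star$ is a \emph{circulation}: it has strength $0$ and satisfies conservation at every vertex (including $s$ and $t$). Expanding,
\begin{align*}
\sum_{e\in E}r_e f_e^2 \;=\; \sum_{e\in E}r_e (i^\star_e)^2 \;+\; 2\sum_{e\in E}r_e i^\star_e g_e \;+\; \sum_{e\in E}r_e g_e^2.
\end{align*}
Replacing $r_e i^\star_e$ by the oriented potential difference $v_i-v_j$ and applying summation by parts as above now yields $\sum_e r_e i^\star_e g_e=\sum_{j\in V}v_j\,(\text{net flow of }g\text{ into }j)=0$, since $g$ is a circulation. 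The remaining term $\sum_e r_e g_e^2\ge 0$ is nonnegative and vanishes only when $g\equiv 0$, proving $\sum_e r_e f_e^2\ge \mathrm{R}_{\rm eff}(P)$ with equality iff $f=i^\star$.

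The only subtle point I expect is bookkeeping of edge orientations when passing between the oriented current $i^\star$ (which is signed) and the unsigned flow values $f_e\ge 0$ in the statement; but since the energy $r_e f_e^2$ is orientation-insensitive and the decomposition argument only uses signed quantities, this is a matter of notation rather than substance. Everything else is linear algebra on the cycle/cut decomposition of the edge space.
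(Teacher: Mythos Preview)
Your proof is correct and is exactly the standard argument for Thomson's Principle (orthogonal decomposition of any unit flow into the electrical flow plus a circulation, with the cross term vanishing via Ohm's law and summation by parts). Note, however, that the paper does not give its own proof of this lemma: it is stated as a well-known result and simply cited from \cite[Theorem~9.10]{levin2017markov}, so there is nothing in the paper to compare against. Your sketch is essentially the proof one finds in that reference, and your remark about reconciling the unsigned $f_e\ge 0$ convention in the paper's definition with the signed edge variables needed for the decomposition is the only bookkeeping point worth flagging, which you handled correctly.
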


A simple corollary of Thomson's principle is the following known result
\begin{corollary}\label{cor:ray}(Rayleigh's Monotonicity Law \cite[Theorem 9.12]{levin2017markov}) 
The $s-t$ effective resistance cannot increase if the resistance of an edge is decreased. In particular, adding an edge does not increase the effective resistance. 
\end{corollary}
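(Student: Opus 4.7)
The plan is to derive both statements of the corollary as immediate consequences of Thomson's Principle (Lemma \ref{lemm:thomson}), exploiting two elementary facts: the energy dissipation functional is monotone (for every fixed unit flow) in the vector of edge resistances, and enlarging the edge set only enlarges the feasible set of unit flows.

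For the first claim, I would fix the graph $\mathcal{G}=(V,E)$ and compare two conductance matrices $P$ and $P'$ whose edge resistances satisfy $r'_e \le r_e$ for every $e\in E$. Let $f^\star$ be the unit $s$-$t$ electrical flow associated with $P$, so that by Thomson's Principle $\mbox{R}_{\rm eff}(P)=\sum_{e\in E} r_e (f^\star_e)^2$. Since $f^\star$ is still a feasible unit $s$-$t$ flow under the new resistances $r'$, applying Lemma \ref{lemm:thomson} to $P'$ yields
\begin{align*}
\mbox{R}_{\rm eff}(P') \;\le\; \sum_{e\in E} r'_e (f^\star_e)^2 \;\le\; \sum_{e\in E} r_e (f^\star_e)^2 \;=\; \mbox{R}_{\rm eff}(P),
\end{align*}
which is exactly the monotonicity statement.

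For the ``in particular'' assertion about adding an edge, I would argue at the level of feasible sets. Let $\mathcal{G}'=(V,E\cup\{e_0\})$ be obtained from $\mathcal{G}$ by adding one edge $e_0$ with an arbitrary finite resistance $r_{e_0}$. Any unit $s$-$t$ flow $f$ on $\mathcal{G}$ extends to a unit $s$-$t$ flow $\tilde f$ on $\mathcal{G}'$ by setting $\tilde f_{e_0}=0$ and $\tilde f_e=f_e$ for $e\in E$; this extension satisfies the same conservation equations and produces the same energy dissipation. Hence the minimum in Lemma \ref{lemm:thomson} for $\mathcal{G}'$ is taken over a superset of candidate flows, and can therefore only be smaller than or equal to the minimum for $\mathcal{G}$. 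The argument extends inductively to adding any finite number of edges. I do not anticipate a real obstacle here; the only subtlety is being careful when comparing feasible sets across two graphs with different edge sets, which is handled cleanly by the zero-extension construction above.
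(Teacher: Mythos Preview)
Your proof is correct and follows exactly the approach the paper implicitly intends: the paper does not write out a proof of Corollary~\ref{cor:ray} but simply introduces it as ``a simple corollary of Thomson's principle'' with a citation, and your argument supplies precisely that derivation. There is nothing to add.
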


Let us consider the decision version of the effective resistance interdiction problem that, with abuse of notation, we denote it again by ERIP. The decision problem is given by a resistance network $\mathcal{G}=(V', E', \{r_e\}_{e\in E'})$, fixed terminals $s, t\in V$, and two positive numbers $R_0\in \mathbb{R}_+, \ell\in \mathbb{Z}_+$. The goal is to decide whether there exists a subset $\hat{E}\subseteq E'$ of at most $\ell$ edges whose removal from $\mathcal{G}'$ increases the effective resistance between $s$ and $t$ to a value higher than $R_0$, i.e., $\mbox{R}_{{\rm eff}}(\mathcal{G}'\setminus \hat{E})\ge R_0$. 

Next, we describe the network construction in our complexity reduction. Given an arbitrary undirected graph $\mathcal{G}=(V, E), r\in \mathbb{Z}_+$ with $|V|=n$ nodes and $|E|=m$ edges, we construct a resistance network $\mathcal{G}'=(V', E',\{r_e\}_{e\in E'})$ with $|V'|=n+m+2$ nodes and $|E'|=3m+n$ edges as follows. For each undirected edge $\{i,j\}\in E$, we put one vertex $v_{ij}=v_{ji}$ on the left side of a bipartite graph, and for each node $i\in V$, we put one vertex $v_i$ on the right side of that bipartite graph. We denote the vertices in the left and right side of that bipartite graph by $V_L=\{v_{ij}: \{i,j\}\in E\}$ and $V_R=\{v_i: i\in V\}$, respectively. We connect $v_{ij}$ to exactly two nodes $v_i$ and $v_j$, and define $E_1$ to be the set of all such edges, i.e., $E_1=\{\{v_{ij},v_i\}: \{i,j\}\in E, i\in V\}$. Moreover, we add two additional nodes $s$ and $t$, where $s$ is connected to all the vertices in $V_L$, and $t$ is connected to all the vertices in $V_R$, and we define $E_L=\{\{s,v_{ij}\}: \{i,j\}\in E\}$ and $E_R=\{\{t,v_i\}: i\in V\}$. An example of the above construction is given in Figure \ref{fig:reduction}.

\begin{figure}[t]
\vspace{-2.5cm}
\begin{center}
\includegraphics[totalheight=.25\textheight,
width=.35\textwidth,viewport=100 0 800 700]{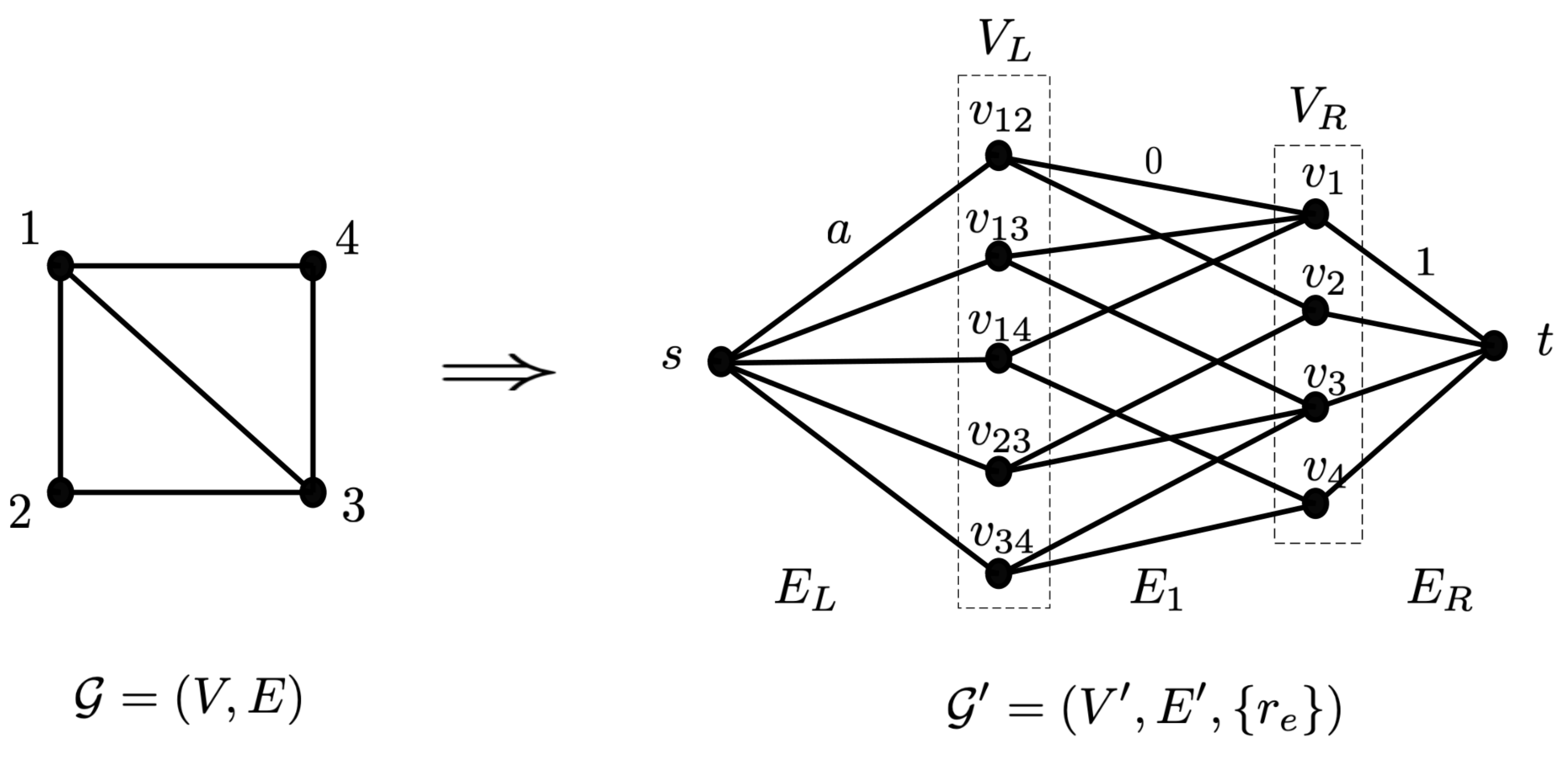} \hspace{0.4in}
\end{center}\vspace{-0.3cm}
\caption{An illustration of the reduced graph $\mathcal{G}'=(V', E', \{r_e\})$.}\label{fig:reduction}
\end{figure}



\begin{theorem}\label{thm:complexity}
The ERIP \eqref{eq:ERIP} is strongly NP-hard even if (i) the network is bipartite with diameter three, and (ii) the edges adjacent to the source node $s$ and the sink node $t$ are fixed and cannot be removed.  
\end{theorem}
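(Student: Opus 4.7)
My plan is to reduce from the max-clique decision problem (which is strongly NP-hard) to the ERIP. Given an instance $(G=(V,E),k)$ of max-clique, I use the network $\mathcal{G}'$ constructed immediately above the theorem statement, assigning polynomially bounded resistances: unit resistance on the internal bipartite edges $E_1$, and suitably chosen resistances on the fixed terminal edges $E_L \cup E_R$ (whose role is discussed below). I take the adversary's budget to be $\ell = k(k-1)=2\binom{k}{2}$ and choose the threshold $R_0$ to equal the effective resistance attained in the forward direction.

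The \emph{forward} direction is constructive. If $G$ has a clique $S$ of size $k$, the adversary removes the edge set $\hat{E} = \bigl\{\{v_{ij}, v_i\},\{v_{ij}, v_j\}: \{i,j\}\in E[S]\bigr\}$; this uses exactly $2\binom{k}{2}=\ell$ edges, all in $E_1$ and none adjacent to $s$ or $t$. As a result, the $\binom{k}{2}$ vertices $v_{ij}$ with $\{i,j\}\in E[S]$ become dangling leaves of $s$ (still connected to $s$ but no longer to $V_R$). A direct Ohm's-law computation exploiting the structural symmetry of the residual network then gives $\mbox{R}_{\rm eff}(\mathcal{G}'\setminus \hat{E})\geq R_0$.

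For the \emph{reverse} direction, I assume $G$ has no $k$-clique and must show $\mbox{R}_{\rm eff}(\mathcal{G}'\setminus \hat{E})<R_0$ for every $\hat{E}\subset E_1$ with $|\hat{E}|\leq \ell$. My approach is to upper bound $\mbox{R}_{\rm eff}(\mathcal{G}'\setminus \hat{E})$ via Thomson's principle (Lemma \ref{lemm:thomson}) by exhibiting a unit $s$-$t$ flow whose energy is strictly below $R_0$. The key combinatorial fact is that $\hat{E}$ can fully disable (both incident $E_1$ edges removed) at most $\binom{k}{2}$ vertices of $V_L$; if the adversary fully disables $\binom{k}{2}$ vertices, the corresponding $G$-edges form a subgraph $H$ with $|E(H)|=\binom{k}{2}$, and the absence of a $k$-clique forces $|V(H)|\geq k+1$ (otherwise $H$ would be a copy of $K_k$). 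This spread ensures that after the removal, the remaining $V_L$ vertices still connect to at least $n-k$ vertices of $V_R$, enabling a unit flow of energy strictly below $R_0$.

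The \emph{main obstacle} is to handle adversary strategies that mix full and partial disablement, since partial disablements (removing only one of the two incident $E_1$ edges) do not correspond cleanly to a subgraph of $G$. The tuned resistances on $E_L$ and $E_R$ play the role of penalizing partial disablements by a margin dominated by the benefit of full disablement: when $r_{E_L},r_{E_R}$ are chosen sufficiently large, partial disablement of a $v_{ij}$ only increases its branch resistance by an $r_{E_1}$-scale term, whereas the same two edges spent on full disablement cut off an entire $v_{ij}$-branch of resistance at least $r_{E_L}+r_{E_R}$. This lets me reduce the adversary's strategy, up to sub-threshold slack, to a pure full-disablement strategy parameterized by a subgraph $H\subseteq G$, at which point the combinatorial bound $|V(H)|\geq k+1$ drives the flow construction above and finishes the argument.
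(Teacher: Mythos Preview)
Your proposal has the right global architecture (reduce from max-clique via the bipartite gadget $\mathcal{G}'$, use Thomson's principle for the reverse bound), but the specific budget choice $\ell=2\binom{k}{2}$ together with the strategy ``remove the clique edges'' does \emph{not} yield a sound reduction.

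The gap is in how the combinatorial fact is tied to the effective resistance. Your key observation---that any $\binom{k}{2}$-edge subgraph $H$ of a clique-free $G$ satisfies $|V(H)|\ge k+1$---concerns the \emph{removed} edge set. But the effective resistance of $\mathcal{G}'\setminus\hat E$ is governed by the \emph{remaining} bipartite structure, and the vertex-spread of $H$ says essentially nothing useful about it. Concretely, your forward direction disconnects the $\binom{k}{2}$ clique $V_L$-vertices, yet the remaining $m-\binom{k}{2}$ $V_L$-vertices (the non-clique edges of $G$) will typically still reach \emph{all} $n$ vertices of $V_R$: a vertex $v_i\in V_R$ is cut off only if every $G$-edge at $i$ is a clique edge, i.e.\ the clique is an isolated component. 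So the claim ``remaining $V_L$ connects to $\ge n-k$ vertices of $V_R$'' is true in both the yes- and no-instance, and cannot separate them. Moreover, your $R_0$ is defined as ``the value attained in the forward direction,'' but that value depends on the \emph{residual} (non-clique) structure of $G$, not on $k$ alone, so it is not a polynomial-time computable threshold. With any choice of $r_{E_L},r_{E_R}$ large enough to force full disablement (as you argue), the dominant term in the resistance after fully disabling any $\binom{k}{2}$ $V_L$-vertices is $\approx r_{E_L}/(m-\binom{k}{2})$, which is identical whether or not $G$ has a clique; the residual $E_R$-term also fails to distinguish, since both cases leave all of $V_R$ reachable in general.

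The paper runs the reduction in the opposite direction: the budget is $\ell=2\bigl(m-\binom{r}{2}\bigr)$, the adversary \emph{keeps} the clique edges and removes the rest, and the $E_1$ edges are given resistance~$0$ (not~$1$). Then the \emph{surviving} $E_1$-structure is exactly the clique, collapsing to $\binom{r}{2}$ parallel $s$-branches and only $r$ parallel $t$-branches, so $R_0=\frac{a}{\binom{r}{2}}+\frac{1}{r}$ is explicit and independent of which clique is chosen. In the no-clique case the same $\binom{r}{2}$ surviving $V_L$-vertices must span $\ge r+1$ vertices on the right---this is precisely your combinatorial fact applied to the \emph{kept} edges---and a Thomson flow (plus a convexity argument over the connected components) then forces the resistance strictly below $R_0$. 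The zero resistance on $E_1$ is what makes the forward direction an exact equality and keeps $R_0$ computable.
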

\begin{proof}
We use reduction from the max-clique decision problem \cite{karp1972reducibility}, wherein an instance is given by an undirected graph $\mathcal{G}=(V, E)$ and an integer $r\in \mathbb{Z}_+$, and the goal is to determine whether $\mathcal{G}$ contains a clique of size $r$. We reduce it to an instance $\mathcal{G}'=(V', E', \{r_e\}_{e\in E'})$ of the ERIP, in which the network is constructed as above, and $\ell=2\big(m-{r \choose 2 }\big)$, $R_0=\frac{a}{{r \choose 2 }}+\frac{1}{r}$, where $a=n^4$ is a large constant. We also set 
\begin{align}\nonumber
r_e=\begin{cases}
0 \ &\mbox{if} \ e\in E_1,\\
a \ &\mbox{if} \ e\in E_L,\\
1 \ &\mbox{if} \ e\in E_R.\\
\end{cases}
\end{align}
Note that by condition (ii) only the edges in $E_1$ are allowed to be removed, and the edges in $E_L\cup E_R$ are fixed.

Let us first assume that $\mathcal{G}$ has a clique $C\subset V$ of $r$ vertices and let $\hat{E}=\{\{v_{ij},v_i\}: i,j\notin C \}$ be the set of all the $\ell=2(m-{r \choose 2 })$ edges corresponding to the nonclique edges. Then, we have $\mbox{R}_{{\rm eff}}(\mathcal{G}'\setminus \hat{E})=R_0$. The reason is that all the remaining edges $E'\setminus \hat{E}$ have zero resistance, in which case all the nodes $\{v_{ij}, v_i: i,j\in C\}$ can be shortcut and collapsed to a single vertex $u$. As a result, the $s-t$ effective resistance in $\mathcal{G}'\setminus \hat{E}$ is the same as that in a network of three nodes $s,u,t$, where there are ${r \choose 2}$ parallel edges of resistance $a$ between $s$ and $v$, and there are $r$ parallel edges of resistance $1$ between $v$ and $t$. Therefore, if $\mathcal{G}$ has a clique of size $r$, then there is a ``Yes" answer to the corresponding decision instance of the ERIP.

Conversely, suppose that $\mathcal{G}$ does not contain a clique of size $r$. We will show that for any edge cut $\hat{E}\subseteq E_1, |\hat{E}|=\ell$, we have $\mbox{R}_{{\rm eff}}(\mathcal{G}'\setminus \hat{E})<R_0$. To that end, we first note that if $E_1\setminus \hat{E}$ spans $q\ge {r \choose 2}+1$ vertices $\{v_{i_1j_1},\ldots,v_{i_qj_q}\}\subseteq V_L $, we can upper-bound $\mbox{R}_{{\rm eff}}(\mathcal{G}'\setminus \hat{E})$ as follows: let $f$ be a unit $s-t$ flow in $\mathcal{G}'\setminus \hat{E}$ that sends $\frac{1}{q}$-unit of flow over each of the edges $\{s,v_{i_1j_1}\},\ldots,\{s,v_{i_qj_q}\}$, and route it to $t$ arbitrarily by respecting the flow conservation constraints. By using Thomson's principle,  we can upper-bound $\mbox{R}_{{\rm eff}}(\mathcal{G}'\setminus \hat{E})$ using the cost of such a flow as
\begin{align}\label{eq:0-1-n-flow}
\mbox{R}_{{\rm eff}}(\mathcal{G}'\setminus \hat{E})&\leq \sum_{e\in E_L}a f^2_e +\sum_{e\in E_1\setminus \hat{E}}0\cdot f^2_e+\sum_{e\in E_R}1 \cdot f^2_e\cr 
&=q (\frac{a}{q^2})+\sum_{e\in E_R} f^2_e\leq \frac{a}{q}+1\cr 
&\leq \frac{a}{{r \choose 2}+1}+1<R_0-\frac{1}{r},
\end{align}
where the second inequality holds because $f$ is a unit flow such that $\sum_{e\in E_R} f^2_e\leq (\sum_{e\in E_R}f_e)^2=1$, and the last inequality holds because $a=n^4\ge {r \choose 2}({r \choose 2}+1)$. Therefore, the edges in $E_1\!\setminus\! \hat{E}$ span at most ${r \choose 2}$ vertices from $V_L$. On the other hand, $E_1\!\setminus\! \hat{E}$ contains exactly $2{r \choose 2}$ edges and each vertex in $V_L$ is incident to exactly two edges in $E_1$. As a result $E_1\!\setminus\! \hat{E}$ spans exactly ${r \choose 2}$ vertices from $V_L$, which also implies that $E_1\!\setminus\! \hat{E}$ spans at least $r+1$ nodes from $V_R$.

Let $C_1, C_2, \ldots, C_d$ for some $d\ge 1$ be the connected components of the induced graph $\mathcal{G'}[E_1\setminus \hat{E}]$, and let $n_k=|C_k\cap V_L|$ and $m_k=|C_k\cap V_R|$. From above, we have
\begin{align}\label{eq:m-n}
\sum_{k=1}^dn_k={r \choose 2},\ \ \ \  \ \ \ \sum_{k=1}^dm_k\ge r+1, \  \ \ \  n_k\leq {m_k \choose 2}\ \forall k,
\end{align}
where the last inequality holds because each connected component $C_k$ can be at most a clique with $m_k$ nodes. Therefore, $\mbox{R}_{{\rm eff}}(\mathcal{G}'\setminus \hat{E})$ can be computed by contracting each connected component $C_k$ to a single vertex $u_k$, and using series/parallel laws to get
\begin{align}\label{eq:final-effective}
\frac{1}{\mbox{R}_{{\rm eff}}(\mathcal{G}'\setminus \hat{E})}=\sum_{k=1}^{d}\frac{1}{\frac{a}{n_k}+\frac{1}{m_k}}.
\end{align}
If $d=1$, then, $n_1={r \choose 2}, m_1\ge r+1$, and we have
\begin{align}\label{eq:single-component}
\mbox{R}_{{\rm eff}}(\mathcal{G}'\setminus \hat{E})&=\frac{a}{n_1}+\frac{1}{m_1}\leq \frac{a}{{r \choose 2}}+\frac{1}{r+1}<R_0.
\end{align}
For $d\ge 2$ components, we can upper-bound $\mbox{R}_{{\rm eff}}(\mathcal{G}'\setminus \hat{E})$ using Thomson's principle for the equivalent contracted graph. In the contracted graph, consider a unit $s-t$ flow that sends $\frac{n_k}{{r \choose 2}}$ amount of flow over the branch $s\to u_k\to t$ with resistance $\frac{a}{n_k}+\frac{1}{m_k}$. Using Thomson's principle, we have
\begin{align}\label{eq:energy-flow-m-n}
\mbox{R}_{{\rm eff}}(\mathcal{G}'\setminus \hat{E})&\leq \sum_{k=1}^d(\frac{a}{n_k}+\frac{1}{m_k})\Big(\frac{n_k}{{r \choose 2}}\Big)^2=\frac{a}{{r \choose 2}}+\frac{1}{{r \choose 2}^2}\sum_{k=1}^d\frac{n^2_k}{m_k}\cr 
&\leq \frac{a}{{r \choose 2}}+\frac{1}{{r \choose 2}^2}\sum_{k=1}^d\frac{2n^2_k}{1+\sqrt{1+8n_k}},
\end{align}
where the equality holds because $\sum_{k=1}^dn_k={r \choose 2}$, and the last inequality uses $m_k\ge \frac{1+\sqrt{1+8n_k}}{2}$ due to \eqref{eq:m-n}.
In order to show $\mbox{R}_{{\rm eff}}(\mathcal{G}'\!\setminus\! \hat{E})<R_0$, it is enough to show that
\begin{align}\label{eq:max-value-flow}
\max_{\substack{n_k\ge1, \forall k\\ \sum n_k={r \choose 2}}}h(n_1,\ldots,n_d)=\sum_{k=1}^d\frac{2n^2_k}{1+\sqrt{1+8n_k}}< \frac{{r \choose 2}^2}{r},
\end{align} 
where the constraints $\{n_k\ge1, \forall k, \ \sum_{k=1}^dn_k={r \choose 2}\}$ define an integral polytope with exactly $d$ extreme points: each extreme point is obtain by setting $d-1$ of the variables to $1$, and the last variable to ${r \choose 2}-(d-1)$. Moreover, $h(\cdot)$ is a strictly convex function whose Hessian is a positive-definite diagonal matrix with the $k$th diagonal entry $\frac{\partial^2h}{\partial n^2_k}=\frac{2(1+6n_k)}{(1+8n_k)^{\frac{3}{2}}}>0$. Since a convex function achieves its maximum value at an extreme point of a polytope, using \eqref{eq:max-value-flow} and the above characterization of extreme points, we have
\begin{align}\label{eq:final-complexity-r}
\max_{\substack{n_k\ge1, \forall k\\ \sum_kn_k={r \choose 2}}}h(n_1,\ldots,n_d)&=\frac{d-1}{2}+\frac{2\big({r \choose 2}-(d-1)\big)^2}{1+\sqrt{1+8\big({r \choose 2}-(d-1)\big)}}\cr 
&\leq \frac{1}{2}+\frac{2\big({r \choose 2}-1\big)^2}{1+\sqrt{1+8\big({r \choose 2}-1\big)}}\cr 
&= \frac{1}{2}+\frac{\big({r \choose 2}-1\big)}{4}\big(\sqrt{1+8({r \choose 2}-1)}-1\big)\cr 
&<\frac{1}{2}+\frac{1}{2}\big({r \choose 2}-1\big)(r-1-\frac{1}{r})\cr 
&=\frac{{r \choose 2}^2}{r}-\frac{(3r+1)(r-2)}{4r}\leq \frac{{r \choose 2}^2}{r}, 
\end{align}
where the first inequity holds because the right side expression is a decreasing function of $d$,\footnote{The derivative of this function with respect to $d$ equals $\frac{3}{4}+\frac{1-3(4r(r-1)-8d+9)}{8\sqrt{4r(r-1)-8d+9}}<0, \forall d=2,\ldots,r$.} and achieves its maximum for $d=2$ components. The second inequality uses the relation $1+8\big({r \choose 2}-1\big)< (2r-1-\frac{2}{r})^2, \forall r\ge 2$, which establishes \eqref{eq:max-value-flow}.
\end{proof}

\subsection{Complexity of the CIP}\label{sec:scaling}
In this section, we extend the hardness result from the ERIP to the CIP using a series of reductions. We first replace the $0$-resistance edges in $E_1$ with edges of small resistance $n^{-2}$. We then show that scaling the conductances to form a symmetric stochastic matrix will preserve the NP-hardness. Finally, we use another polynomial scaling to show that the same hardness result holds for squared stochastic conductances.

\smallskip
\noindent
{\bf (I) Changing Resistance to Conductance:}  Any edge with polynomially bounded positive resistance $r_e$ can be replaced by an edge with polynomially bounded conductance $p_e=\frac{1}{r_e}$. Moreover, if an edge $e$ does not exist in $E'$, i.e., $r_e=\infty$, we have $p_e=0$. The only issue is with 0-resistance edges in $E_1$, which result in edges of unbounded conductance. However, this issue can be resolved by noting that the complexity result of Theorem \ref{thm:complexity} is robust with respect to small perturbations in the size of the resistances $\{r_e\}_{e\in E'}$. More precisely, if we replace each edge $e\in E_1$ by an edge of resistance $r_e=\frac{1}{n^2}$, then all the analysis in the proof of Theorem \ref{thm:complexity} carry over verbatim. The only difference is that $\mbox{R}_{\rm eff}(\mathcal{G}'\setminus \hat{E})$ in \eqref{eq:0-1-n-flow} can be upper-bounded by $\frac{a}{{r \choose 2}+1}+\frac{1}{n^2}+1$, which is again less than $R_0=\frac{a}{{r \choose 2}}+\frac{1}{r}$. Moreover, an extra term of size at most $\sum_{k=1}^d (2mn^2_k)/(n{r \choose 2})^2\leq \frac{2m}{n^2}$ will be added to the right side of \eqref{eq:energy-flow-m-n}. However, the effect of such a term to the final inequality in \eqref{eq:final-complexity-r} is at most $\frac{2m}{n^2}-\frac{(3r+1)(r-2)}{4r}$, which is strictly negative for any $r> 3$. As a result, the final inequality in \eqref{eq:final-complexity-r} still holds. Therefore, the perturbed instance with $\frac{1}{n^2}$-resistance edges (instead of $0$-resistance edges) is strongly NP-hard, where now the conductance of an edge in the perturbed instance belongs to the set $\{0, n^{-4}, 1, n^2\}$. Henceforth, we can only work with the perturbed instance.  

\smallskip
\noindent
{\bf (II) Symmetric Stochastic Conductance Matrices:} The restriction to symmetric stochastic conductance matrices does not make the problem any easier. The reason is that if we scale each conductance in the perturbed instance $\mathcal{G}'=(V', E')$ by the same factor of $n^{-3}$, the results of the previous section remain valid except that all the derivations are scaled by $n^{3}$. Using part (I), each edge in the scaled network has a conductance of at most $n^2\cdot n^{-3}=n^{-1}$. Therefore, for any node $i\in V'$, the sum of the conductances adjacent to node $i$ can be at most $\sum_{j: \{i,j\}\in E'}p_{ij}\leq 1$. Thus, by adding a self-loop to each node $i\in V'$ with conductance $p_{ii}=1-\sum_{j: \{i,j\}\in E'}p_{ij}$,\footnote{Note that adding self-loops has no effect on the effective resistance computations.} without loss of generality, one can assume that the input conductance matrix is a symmetric stochastic matrix. Based on these observations, we have the following theorem.

\begin{theorem}\label{thm:scaling-resistance}
Solving $\max_{|\hat{E}|\leq \ell}\mbox{{\rm R}}_{\rm eff}\big((Q \setminus\hat{E})^2\big)$ is strongly NP-hard, even for symmetric stochastic conductance inputs $Q$, defined over bipartite graphs of diameter three with fixed source/sink edges. In particular,  the CIP is strongly NP-hard.
\end{theorem}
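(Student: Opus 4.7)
The plan is to compose the transformations in Parts (I)--(II) with one additional polynomial contraction of $Q$ toward the identity, making $\|I - Q\|_2$ polynomially small, and then exploit a convergent Neumann expansion for $(I - Q^2)^+$ to show that $\mbox{R}_{\rm eff}(Q^2)$ differs from $\tfrac{1}{2}\mbox{R}_{\rm eff}(Q)$ only by an additive constant and a polynomially small correction. The NP-hardness gap from Theorem \ref{thm:complexity}, suitably amplified by the rescalings, then dominates this correction, so the reduction carries through.

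Concretely, let $Q_{\mathrm{II}}$ denote the symmetric stochastic matrix produced by Parts (I) and (II), and set $Q = (1-\beta) I + \beta Q_{\mathrm{II}}$ with $\beta = n^{-1}$. Then $Q$ is symmetric stochastic, is supported on the same bipartite diameter-three graph with the same fixed source/sink edges, and $L := I - Q = \beta(I - Q_{\mathrm{II}})$ satisfies $\|L\|_2 \leq 2\beta = 2/n$ (since the Laplacian of any stochastic matrix has norm at most $2$). Also $\mbox{R}_{\rm eff}(Q \!\setminus\! \hat{E}) = \beta^{-1}\mbox{R}_{\rm eff}(Q_{\mathrm{II}} \!\setminus\! \hat{E})$, so the YES/NO decision gap from Theorem \ref{thm:complexity} is amplified by a polynomial factor through Parts (II) and the extra contraction, giving a polynomially large gap in $\mbox{R}_{\rm eff}(Q \!\setminus\! \hat{E})$. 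Using $I - Q^2 = (2I - L)L$, the invertibility of $(2I - L)$ and its commutativity with $L$, together with $L^+ L^k = L^{k-1}$ for $k \geq 2$ and $L^+ L = I - J/n$, yield
\begin{equation*}
(I - Q^2)^+ \;=\; L^+(2I - L)^{-1} \;=\; \tfrac{1}{2}L^+ \,+\, \tfrac{1}{4}\bigl(I - \tfrac{J}{n}\bigr) \,+\, \sum_{j=1}^\infty \tfrac{L^j}{2^{j+2}}.
\end{equation*}
Sandwiching between $(\mathbf{e}_s - \mathbf{e}_t)'$ and $(\mathbf{e}_s - \mathbf{e}_t)$ and using $J(\mathbf{e}_s - \mathbf{e}_t) = 0$ gives
\begin{equation*}
\mbox{R}_{\rm eff}(Q^2) \;=\; \tfrac{1}{2}\mbox{R}_{\rm eff}(Q) \,+\, \tfrac{1}{2} \,+\, \Delta, \qquad |\Delta| \,\leq\, \tfrac{\|L\|_2}{4 - 2\|L\|_2} \,=\, O(1/n).
\end{equation*}

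Hence the polynomially large gap in $\mbox{R}_{\rm eff}(Q\!\setminus\!\hat{E})$ translates, via the affine relation above, into a gap of the same polynomial order in $\mbox{R}_{\rm eff}((Q\!\setminus\!\hat{E})^2)$ that \emph{strictly} dominates the $O(1/n)$ correction $\Delta$, so YES/NO instances stay distinguishable for the squared problem. All entries of $Q$ remain polynomially bounded and the bipartite diameter-three structure with fixed source/sink edges is preserved at every step, establishing strong NP-hardness of $\max_{|\hat{E}|\leq \ell}\mbox{R}_{\rm eff}((Q\!\setminus\!\hat{E})^2)$. Strong NP-hardness of the CIP then follows immediately from Lemma \ref{lemm:eff-cast} by setting $x_0 = \mathbf{e}_s - \mathbf{e}_t$, which reduces the CIP to precisely this problem. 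The main technical obstacle is the quantitative bookkeeping that ensures the Neumann correction is polynomially smaller than the amplified gap; the extra contraction by $\beta = n^{-1}$ is chosen precisely to secure this separation, while preserving all the graph-theoretic properties required by the theorem.
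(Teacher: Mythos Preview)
Your argument is correct and takes a genuinely different route from the paper's proof. Both proofs start from the symmetric stochastic matrix produced by Parts (I)--(II) and contract it toward the identity, but they diverge in how they control the passage from $Q$ to $Q^2$.

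The paper sets $\epsilon = n^{-21}$ and expands $Q^2 = (1-\epsilon)^2 I + 2\epsilon(1-\epsilon)P + \epsilon^2 P^2$, then uses Rayleigh's monotonicity law and an explicit parallel-edge estimate to argue that the $\epsilon^2 P^2$ term perturbs the effective resistance by at most an additive $n^{20}$, which is absorbed by the amplified gap. Your approach instead factors the Laplacian as $I - Q^2 = L(2I-L)$, exploits the commutativity and invertibility of $2I-L$ to write $(I-Q^2)^+ = L^+(2I-L)^{-1}$, and expands $(2I-L)^{-1}$ as a Neumann series. Using $L^+L = I-J/n$ and $L^+L^k = L^{k-1}$ for $k\ge 2$ (valid since $JL=0$), you obtain the exact affine relation $\mbox{R}_{\rm eff}(Q^2) = \tfrac{1}{2}\mbox{R}_{\rm eff}(Q) + \tfrac{1}{2} + \Delta$ with $|\Delta| = O(\|L\|_2)$. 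Because the interdiction preserves the contraction structure, $Q\setminus\hat{E} = (1-\beta)I + \beta(Q_{\mathrm{II}}\setminus\hat{E})$, the same bound on $\|L'\|_2$ applies uniformly over edge cuts.

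What each approach buys: the paper's method is more elementary---it never leaves the electrical-network picture and uses only Rayleigh monotonicity---but at the cost of a very small contraction parameter ($n^{-21}$) and some ad hoc bookkeeping. Your spectral argument is cleaner, yields a tighter constant ($\beta=n^{-1}$ suffices), and makes the affine dependence of $\mbox{R}_{\rm eff}(Q^2)$ on $\mbox{R}_{\rm eff}(Q)$ explicit; it would also transfer more directly to higher powers $Q^k$. The one place where you rely on the paper is the size of the YES/NO gap in $Q_{\mathrm{II}}$: you need it to exceed $O(1/n^2)$ so that after the $\beta^{-1}=n$ amplification it dominates $\Delta=O(1/n)$. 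This is indeed the case (the gap in $Q_{\mathrm{II}}$ is $\Omega(1)$, as the paper's own proof records), so your quantitative separation goes through.
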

\begin{proof}
From observations (I) and (II), we know that for symmetric stochastic conductance inputs $P$ with off-diagonal entries in $\{0,n^{-7}, n^{-3},n^{-1}\}$, it is strongly NP-complete to decide whether there exists an edge cut $|\hat{E}|\leq \ell$ such that $\mbox{R}_{\rm eff}(P\setminus\hat{E})\ge R_0n^3$, where the extra factor $n^3$ is because of step (II). We reduce this problem to the case of squared conductance matrices. To that end, let $\epsilon=n^{-21}$ be a small positive number, and define $Q=(1-\epsilon)I+\epsilon P$. Note that $Q$ is also a symmetric stochastic conductance matrix whose off-diagonal entries are polynomially bounded in terms of $n$. Moreover, both $P$ and $Q$ have the same set of edges so that any edge cut in $P$ is also a feasible edge cut in $Q$. We show that there is a ``Yes" answer to the ERIP with conductance matrix $P$, budget $\ell$, and resistance threshold $R_0n^3$, if and only if there is an edge cut $|\hat{E}|=\ell$ for the conductance matrix $Q$ such that $\mbox{R}_{\rm eff}\big((Q \setminus \hat{E})^2\big)\ge \frac{R_0n^3}{2\epsilon(1-\epsilon)}-n^{20}.$ 

First, we note that $Q^2=(1-\epsilon)^2I+2\epsilon(1-\epsilon)P+\epsilon^2P^2$. It is worth noting that $(1-\epsilon)^2I$ only changes the conductance of the self-loops and has no effect on the $s-t$ effective resistance. Since $\mbox{R}_{\rm eff}(2\epsilon(1-\epsilon)P)=\frac{\mbox{R}_{\rm eff}(P)}{2\epsilon(1-\epsilon)}$, we can write
\begin{align}\label{eq:lower-upper-n20}
\frac{\mbox{R}_{\rm eff}(P)}{2\epsilon(1-\epsilon)}- n^{20}\leq \mbox{R}_{\rm eff}(Q^2)\leq \frac{\mbox{R}_{\rm eff}(P)}{2\epsilon(1-\epsilon)},
\end{align}
where the upper bound is by Rayleigh's monotonicity law (Corollary \ref{cor:ray}) as $Q^2$ has more edges than $2\epsilon(1-\epsilon)P$ (i.e., the edges due to the term $\epsilon^2P^2$). The lower bound holds because all the entries of $\epsilon^2P^2$ are bounded above by $\epsilon^2$. Since $\frac{\mbox{R}_{\rm eff}(P)}{2\epsilon(1-\epsilon)}<\frac{n^{7}+n^3+n}{2\epsilon(1-\epsilon)}\leq n^7\epsilon^{-1}$, even if all the (at most) $n^6$ edges in $\epsilon^2P^2$ are added in parallel between $s$ and $t$ to the network $2\epsilon(1-\epsilon)P$, the  $s-t$ effective resistance cannot decrease by more than $n^{20}$. More precisely, $\mbox{R}_{\rm eff}(Q^2)$ is at least
\begin{align}\nonumber
\frac{\frac{\epsilon^{-2}}{n^6}\cdot \frac{\mbox{R}_{\rm eff}(P)}{2\epsilon(1-\epsilon)}}{\frac{\epsilon^{-2}}{n^6}+\frac{\mbox{R}_{\rm eff}(P)}{2\epsilon(1-\epsilon)}}&\ge \frac{\mbox{R}_{\rm eff}(P)}{2\epsilon(1-\epsilon)}-\Big(\frac{n^7\epsilon^{-1}}{\frac{\epsilon^{-2}}{n^6}+n^7\epsilon^{-1}}\Big)\frac{\mbox{R}_{\rm eff}(P)}{2\epsilon(1-\epsilon)}\cr 
&\ge \frac{\mbox{R}_{\rm eff}(P)}{2\epsilon(1-\epsilon)}-\Big(\frac{n^{13}\epsilon^{-1}}{\epsilon^{-2}+n^{13}\epsilon^{-1}}\Big)n^7\epsilon^{-1}\cr 
&\ge \frac{\mbox{R}_{\rm eff}(P)}{2\epsilon(1-\epsilon)}-n^{20}. 
\end{align}

Suppose that there is a ``Yes" answer to the ERIP with conductance matrix $P$. Then, there exists an edge cut $\hat{E}$ such that $\mbox{R}_{\rm eff}\big(P\setminus \hat{E}\big)\ge R_0n^3$. If we remove the same edge cut from $Q$, using the lower bound in \eqref{eq:lower-upper-n20} adapted for $Q\setminus \hat{E}$ and $P\setminus \hat{E}$, 
\begin{align}\nonumber
\mbox{R}_{\rm eff}\big((Q\setminus \hat{E})^2\big)&\ge \frac{\mbox{R}_{\rm eff}(P\setminus \hat{E})}{2\epsilon(1-\epsilon)}- n^{20}\ge \frac{R_0n^3}{2\epsilon(1-\epsilon)}-n^{20}.
\end{align}
Conversely, if the answer to the ERIP with input $P$ is ``No", using \eqref{eq:0-1-n-flow} and \eqref{eq:final-complexity-r} in the proof of Theorem \ref{thm:complexity}, for any edge cut $\hat{E}$, we have $\mbox{R}_{\rm eff}(P\setminus \hat{E})< (R_0-n^{-3})n^3$. Thus, using the upper bound in \eqref{eq:lower-upper-n20}, for any edge cut $\hat{E}$ in $Q$, we get
\begin{align}\nonumber
\mbox{R}_{\rm eff}\big((Q\setminus \hat{E})^2\big)\leq \frac{\mbox{R}_{\rm eff}(P\setminus \hat{E})}{2\epsilon(1-\epsilon)}< \frac{R_0n^3-1}{2\epsilon(1-\epsilon)}< \frac{R_0n^3}{2\epsilon(1-\epsilon)}-\frac{\epsilon^{-1}}{2}\leq \frac{R_0n^3}{2\epsilon(1-\epsilon)}-n^{20},
\end{align} 
which completes the reduction.
\end{proof}

\section{Hardness of Approximation for the ERIP and CIP}\label{sec:hardness-approx}
 
In this section, we provide strong inapproximability results for the ERIP and CIP. To that end, we borrow some ideas from \cite{chestnut2017hardness} to connect inapproximability of the ERIP to that of the network flow interdiction problem. The reduction is from the densest-$k$-subgraph (D$k$S) problem, wherein the goal is to find a $k$-vertex subgraph of $\mathcal{G}=(V,E)$, which has the maximum number of edges. The D$k$S is strongly NP-hard and does not admit any polynomial-time approximation scheme \cite{khot2006ruling}. Moreover, assuming the exponential time hypothesis \cite{impagliazzo2001complexity} (i.e., assuming nonexistence of a subexponential-time algorithm for solving 3SAT), there is no polynomial-time algorithm that approximates D$k$S to within $n^{\frac{1}{(\log \log n)^c}}$ factor of the optimum, where $c>0$ is a constant independent of $n$ \cite{manurangsi2017almost}. On the positive side, the best known polynomial-time approximation algorithm for D$k$S is due to \cite{bhaskara2010detecting} with an approximation ratio of $O(n^{\frac{1}{4}})$. Here, by an $\alpha$-approximation algorithm ($\alpha\ge 1$), we refer to a polynomial-time algorithm that satisfies $\mbox{OPT}(I)\leq \alpha \mbox{ALG}(I)$ for any input instance $I$, where OPT$(I)$ and ALG$(I)$ denote the objective values obtained by the maximum solution and the approximation algorithm, respectively.

Given an instance of the D$k$S, we reduce it to an instance $\mathcal{G}'=(V', E', \{r_e\}_{e\in E'})$ of the ERIP. The network $\mathcal{G}'$ is constructed as before (see Figure \ref{fig:reduction}) with edge resistances $r_e=1, \forall e\in E_R$ and $r_e=\delta, \forall e\in E_L\cup E_1$, where $\delta=n^{-4}$ is a small constant. As is shown in Lemma \ref{lemm:paralel}, by adding polynomially many parallel edges to $E_R$, any optimal edge cut must only remove edges from $E_L\cup E_1$. Henceforth, we assume that only the $\delta$-resistance edges in $E_L\cup E_1$ are removable, and the edges in $E_R$ are fixed. We note that a distinction between the reductions in Theorem \ref{thm:complexity} and the following theorem (Theorem \ref{thm:approximability}) is that the reduction in Theorem \ref{thm:approximability} holds under a less restrictive condition that allows removable source/sink edges. As a result, one can establish a stronger inapproximability result, as is shown below.

\begin{theorem}\label{thm:approximability}
Let $\epsilon\in(0, 1)$ be a constant. There is no $O(n^{\epsilon})$-approximation algorithm for the ERIP even for bipartite graphs of diameter three unless there is an $O(n^{4\epsilon})$-approximation algorithm for the D$k$S. In particular, assuming the exponential time hypothesis, the ERIP cannot be approximated within a factor better than $n^{\frac{1}{4(\log \log n)^{c}}}$ for some constant $c>0$.   
\end{theorem}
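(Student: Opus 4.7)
My plan is to reduce the densest $k$-subgraph (D$k$S) problem to ERIP so that any $\alpha$-approximation for ERIP on the reduced instance yields an $\alpha^2$-approximation for D$k$S. Since the reduced ERIP instance has $N=|V'|=n+m+2=O(n^2)$ vertices, an $O(N^\epsilon)$-approximation translates to an $O(n^{4\epsilon})$-approximation for D$k$S, from which the stated ETH-based inapproximability follows via the D$k$S lower bound of \cite{manurangsi2017almost}. Throughout I use the construction of $\mathcal{G}'=(V',E',\{r_e\})$ described just before the theorem, with Lemma~\ref{lemm:paralel} ensuring that $E_R$-edges are never removed by any ERIP optimum.

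The first step is a combinatorial identity relating cuts to subgraph density. For any removable cut $\hat{E}\subseteq E_L\cup E_1$, let $S'\subseteq V$ be the set of indices $i$ such that $v_i$ remains connected to $s$ through $(E_L\cup E_1)\setminus \hat{E}$. I claim $|E[S']|\geq m-|\hat{E}|$: for each $\{i,j\}\in E$ with $i\notin S'$, the only $s$-to-$v_i$ route in $\mathcal{G}'$ passes through $v_{ij}$, forcing $\hat{E}$ to contain at least one of $\{s,v_{ij}\}$ or $\{v_{ij},v_i\}$; these witnesses are distinct across different edges of $E\setminus E[S']$ because $v_{ij}$ uniquely identifies the pair $\{i,j\}$.

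The second step is to show $\mbox{R}_{\rm eff}(\hat{E})=1/|S'|+o(1/n)$. The lower bound $\mbox{R}_{\rm eff}(\hat{E})\geq 1/|S'|$ follows by Rayleigh's monotonicity after contracting all remaining $\delta$-resistance edges to zero, which collapses the $s$-component into a super-node connected to $t$ by $|S'|$ unit resistors in parallel. For the matching upper bound I apply Thomson's principle to a simple unit flow that splits $1/|S'|$ units evenly among the vertices of $S'$, routing each share through a single tiny-resistance path to $V_R\cap S'$ and then across one $E_R$-edge to $t$; the energy dissipation is $1/|S'|+O(m\delta)=1/|S'|+O(n^{-2})$ using $\delta=n^{-4}$. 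Consequently $\mbox{OPT}_R$ with budget $\ell$ equals $1/k_{\min}(\ell)$ up to lower-order terms, where $k_{\min}(\ell)=\min\{|S|:|E[S]|\geq m-\ell\}$.

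Finally, given a D$k$S instance $(\mathcal{G},k)$ with optimum $D^*$, I build an ERIP instance for each candidate $D\in\{1,\ldots,m\}$ with budget $\ell=m-D$ and run the assumed $\alpha$-approximation. For the correct guess $D=D^*$, feasibility of the densest $k$-subgraph gives $k_{\min}(\ell)\leq k$ and $\mbox{OPT}_R\geq 1/k$; the approximate cut then yields an active set $S'$ with $|S'|\leq(1+o(1))\alpha k$ and $|E[S']|\geq D^*$. If $|S'|\leq k$ I pad $S'$ arbitrarily to size $k$, preserving all $D^*$ edges; otherwise a derandomized uniform $k$-subset of $S'$ has edge count at least $D^*\binom{|S'|-2}{k-2}/\binom{|S'|}{k}\geq D^*/\alpha^2$. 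Taking the best candidate $D$ produces a $k$-subgraph with $\geq D^*/\alpha^2$ edges, giving the claimed $\alpha \mapsto \alpha^2$ blow-up. The main technical obstacles I anticipate are (i) the $\delta$-correction bookkeeping for $\mbox{R}_{\rm eff}$, which must be tight enough that the $\alpha$-approximation on resistance transfers cleanly to a near-$\alpha$ bound on $|S'|$; and (ii) handling the degenerate ranges $|S'|<k$ and $\alpha k>n$, where the random-subset estimate is either unnecessary or vacuous and must be treated separately.
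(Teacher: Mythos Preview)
Your proposal is correct and follows the paper's reduction from D$k$S via the same bipartite construction $\mathcal{G}'$ and the same random $k$-subset extraction at the end. The one genuine structural difference is how you handle an arbitrary (non-structured) interdiction set $\hat{E}$: the paper invokes an auxiliary restructuring lemma (Lemma~\ref{lemm-hardness}) that converts any $\hat{E}$ into a canonical cut $\hat{E}_S$ with $|\hat{E}_S|\le|\hat{E}|$ and $\mbox{R}_{\rm eff}(\mathcal{G}'\setminus\hat{E}_S)\ge \mbox{R}_{\rm eff}(\mathcal{G}'\setminus\hat{E})-4\delta$, whereas you bypass this by reading off the active set $S'$ directly and proving $|E[S']|\ge m-|\hat{E}|$ and $R_{\rm eff}(\hat{E})\in[1/|S'|,1/|S'|+O(m\delta)]$ via Rayleigh/Thomson. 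Your shortcut is cleaner and avoids the somewhat delicate edge-swapping argument of Lemma~\ref{lemm-hardness}; you also make the guess-and-check over $D\in\{1,\dots,m\}$ explicit, which the paper leaves implicit by writing the budget as $m-\ell^*$.

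Two small comments. First, your sentence ``the only $s$-to-$v_i$ route in $\mathcal{G}'$ passes through $v_{ij}$'' is not literally true (there is one such length-two route for every neighbor of $i$); what you need, and what your counting actually uses, is that for each $\{i,j\}\in E\setminus E[S']$ with $i\notin S'$ the \emph{particular} path $s\to v_{ij}\to v_i$ must be blocked, giving a witness in $\hat{E}$ incident to the unique vertex $v_{ij}$. Second, in your Thomson upper bound the paths from $s$ to $v_i$ in $(E_L\cup E_1)\setminus\hat{E}$ need not have length two and may overlap; your stated energy bound $1/|S'|+O(m\delta)$ still holds because the total flow on any $\delta$-edge is at most $1$, so the $\delta$-contribution is at most $3m\delta$, but you should say this explicitly rather than appealing to a ``single tiny-resistance path.''
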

\begin{proof}
For any instance $\mathcal{G}=(V,E)$ of the D$k$S and $S\subseteq V$, we can associate an edge cut $\hat{E}_S\subseteq E'$ in the corresponding instance $\mathcal{G}'=(V', E')$ of the ERIP as: 
\begin{align}\label{eq:E-S}
\hat{E}_S=\big\{\{s,v_{ij}\}: \{i,j\}\in E[S] \big\}\cup\big\{\{v_i,v_{ij}\}: i\in S, \{i,j\}\in \delta(S)\big\}.
\end{align}
Note that removing the edges $\hat{E}_S$ from $\mathcal{G}'$ completely eliminates the vertices $\{v_i: i\in S\}$ from the $s-t$ effective resistance computation.\footnote{For the rest of the proof, we often abuse the notation $S$ to refer to a subset of nodes in $V$ or its corresponding vertices $\{v_i, i\in S\}$ in $V_{R}$.} As all the edges in $E_{L}\cup E_1$ have resistance $\delta<<1$, the $s-t$ effective resistance in $\mathcal{G}'\setminus \hat{E}_S$ can be well approximated by $|V_R|-|S|=|V\setminus S|$ parallel unit-resistance edges between $s$ and $t$. More precisely,
\begin{align}\label{eq:approx-Es}
\frac{1}{|V\setminus S|}\leq \boldsymbol{\mbox{R}}_{\rm eff}(\mathcal{G}'\setminus \hat{E}_S)\leq \frac{1}{|V\setminus S|}+2\delta,
\end{align}
Here, the lower bound is obtained by Rayleigh's monotonicty law when all the $\delta$-resistance edges in $\mathcal{G}'\setminus \hat{E}_S$ are replaced by $0$-resistance edges. The upper bound holds because in $\mathcal{G}'\setminus \hat{E}_S$, each vertex $v_j\in V\setminus S$ is connected to $s$ via a path of length two with resistance at most $2\delta$. Moreover, the number of edges in $\hat{E}_S$ equals to 
\begin{align}\nonumber
|\hat{E}_S|=|E[S]|+|\delta(S)|=m-|E[V\setminus S]|.
\end{align}
Thus, if $\hat{E}_{S}$ defines an optimal edge cut for ERIP on $\mathcal{G}'$, then $\mathcal{G}[V\setminus S]$ has the fewest vertices of any  $|E[V\setminus S]|$-edge subgraph. In other words, $\mathcal{G}[V\setminus S]$ must be the densest $|E[V\!\setminus\!S]|$-edge subgraph. On the other hand, as is shown in Lemma \ref{lemm-hardness}, for any edge cut $\hat{E}$ (not necessarily the optimal edge cut), one can find in polynomial time an edge cut of the form $\hat{E}_{S}$ given in \eqref{eq:E-S} such that $|\hat{E}_S|\leq |\hat{E}|$ and 
\begin{align}\label{eq:approx-effective-epsilon}
\boldsymbol{\mbox{R}}_{\rm eff}(\mathcal{G}'\setminus \hat{E}_S)\ge \boldsymbol{\mbox{R}}_{\rm eff}(\mathcal{G}'\setminus \hat{E})-4\delta.  
\end{align}


To establish the hardness result, let us assume that the ERIP on a graph of $n$ nodes admits an $\alpha(n)$-approximation algorithm. Consider an instance of the D$k$S on a graph $\mathcal{G}=(V,E)$ with $n$ nodes, and denote the number of edges in its densest $k$-vertex subgraph by $\ell^*$. Using \eqref{eq:approx-Es}, the maximum effective resistance for the corresponding instance $\mathcal{G}'=(V',E',\{r_e\})$ in the ERIP with budget $|\hat{E}^*|\leq m-\ell^*$ is at least $\boldsymbol{\mbox{R}}_{\rm eff}(\mathcal{G}'\setminus \hat{E}^*)\ge \frac{1}{k}$, where $\hat{E}^*$ denotes the optimal edge cut. Now, let us apply the approximation algorithm on $\mathcal{G}'$ and denote its output by $\hat{E}$. Using \eqref{eq:approx-effective-epsilon}, we can find in polynomial time an edge cut $\hat{E}_S$ that satisfies
\begin{align}\nonumber
\boldsymbol{\mbox{R}}_{\rm eff}(\mathcal{G}'\setminus \hat{E}_S)\ge \boldsymbol{\mbox{R}}_{\rm eff}(\mathcal{G}'\setminus \hat{E})-4\delta\ge \frac{\boldsymbol{\mbox{R}}_{\rm eff}(\mathcal{G}'\setminus \hat{E}^*)}{\alpha(|V'|)}-4\delta\ge \frac{1}{k\alpha(n^2)}-4\delta,
\end{align}
where the last inequality holds as $|V'|=n+m+2\leq n^2$. Thus, we can find in polynomial time an edge cut $\hat{E}_S$, such that
\begin{align}\nonumber
&|\hat{E}_S|=m-|E[V\!\setminus\!S]|\leq m-\ell^* \hspace{2.65cm}\Rightarrow\qquad |E[V\setminus S]|\ge \ell^*, \cr 
&\frac{1}{k\alpha(n^2)}-4\delta\leq \boldsymbol{\mbox{R}}_{\rm eff}(\mathcal{G}'\setminus \hat{E}_S)\leq \frac{1}{|V\setminus S|}+2\delta\qquad\Rightarrow\qquad \frac{1}{|V\setminus S|}\geq \frac{1}{k\alpha(n^2)}-6\delta.
\end{align}   
Therefore, we can find a subgragh $\mathcal{G}[V\setminus S]$ in polynomial time with at least $\ell^*$ edges and at most $(\frac{1}{k\alpha(n^2)}-6\delta)^{-1}$ vertices.  If we sample $k$ vertices uniformly from $\mathcal{G}[V\setminus S]$, the expected number of edges in that $k$-vertex random subgraph equals
\begin{align}\nonumber
\frac{{k \choose 2}}{{|V\setminus S| \choose 2}}|E[V\setminus S]|\ge \frac{k(k-1)}{|V\setminus S|(|V\setminus S|-1)}\ell^*\ge k(k-1)\ell^*(\frac{1}{k\alpha(n^2)}-6\delta)^2. 
\end{align}
This shows that there exists a $k$-vertex subgraph of $\mathcal{G}[V\setminus S]$ (and hence a $k$-vertex subgraph of $\mathcal{G}$) with at least $k(k-1)\ell^*(\frac{1}{k\alpha(n^2)}-6\delta)^2$ edges,  which can be found in polynomial time using a standard derandomization.  Therefore, if the ERIP admits an $\alpha(n)=O(n^{\epsilon})$ approximation algorithm for some $\epsilon<1$, then the D$k$S admits a polynomial-time approximation algorithm with approximation factor at most 
\begin{align}\nonumber
\frac{\ell^*}{k(k-1)\ell^*(\frac{1}{k\alpha(n^2)}-6\delta)^2}=\frac{k^2}{k(k-1)}\times\frac{\alpha^2(n^2)}{(1-6\delta k\alpha(n^2))^2}=O(\alpha^2(n^2))=O(n^{4\epsilon}),
\end{align}
where the equality holds because $6\delta k \alpha(n^{2})=O(\frac{n^{2\epsilon}}{n^3})=o(1)$.
\end{proof}

In the following theorem, we show the same inapproximability result for the CIP. The proof is similar to that in Section \ref{sec:scaling}, and we only sketch the main steps here.

\begin{theorem}\label{thm:CIP-approximation}
Let $\gamma\in(0, 1)$. There is no $n^{\gamma}$-approximation algorithm for the CIP unless there is an $O(n^{4\gamma})$-approximation algorithm for the D$k$S. In particular, assuming the exponential time hypothesis, the CIP cannot be approximated within a factor better than $n^{\frac{1}{4(\log \log n)^{c}}}$ for some constant $c>0$. 
\end{theorem}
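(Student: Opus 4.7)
The plan is to chain together the D$k$S-to-ERIP reduction from Theorem \ref{thm:approximability} with the scaling trick used in Theorem \ref{thm:scaling-resistance}. By Lemma \ref{lemm:eff-cast}, the CIP with initial vector $x_0=\boldsymbol{{\rm e}}_s-\boldsymbol{{\rm e}}_t$ and symmetric stochastic input $Q$ is precisely $\max_{|\hat{E}|\leq \ell}\mbox{R}_{\rm eff}\big((Q\!\setminus\!\hat{E})^2\big)$, so it suffices to show that any $n^{\gamma}$-approximation for this squared-conductance problem yields an $O(n^{4\gamma})$-approximation for D$k$S.

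First, given a D$k$S instance $\mathcal{G}=(V,E)$ on $n$ vertices, build the bipartite network $\mathcal{G}'=(V',E',\{r_e\})$ exactly as in the proof of Theorem \ref{thm:approximability}, and then apply the reductions (I) and (II) of Section \ref{sec:scaling}: replace resistances by conductances, scale every conductance by the appropriate inverse polynomial factor, and pad each vertex with a self-loop. This produces a symmetric stochastic conductance matrix $P$ for which the ERIP value $\max_{|\hat{E}|\leq\ell}\mbox{R}_{\rm eff}(P\!\setminus\!\hat{E})$ equals the ERIP value on $\mathcal{G}'$ times a fixed polynomial in $n$; hence the $O(n^{4\gamma})$-versus-$O(n^{\gamma})$ reduction of Theorem \ref{thm:approximability} carries over to $P$ without change.

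Next, pick $\epsilon=n^{-c}$ for a sufficiently large constant $c$ (e.g.\ $c=21$, as in Theorem \ref{thm:scaling-resistance}) and set $Q=(1-\epsilon)I+\epsilon P$, which is again symmetric stochastic. Expanding $Q^2=(1-\epsilon)^2 I+2\epsilon(1-\epsilon)P+\epsilon^2 P^2$ and reasoning exactly as in \eqref{eq:lower-upper-n20}, one obtains, for every feasible cut $\hat{E}$,
\begin{align*}
\frac{\mbox{R}_{\rm eff}(P\!\setminus\!\hat{E})}{2\epsilon(1-\epsilon)}-n^{20}\;\leq\; \mbox{R}_{\rm eff}\big((Q\!\setminus\!\hat{E})^2\big)\;\leq\;\frac{\mbox{R}_{\rm eff}(P\!\setminus\!\hat{E})}{2\epsilon(1-\epsilon)}.
\end{align*}
Since the optimal ERIP value $\mbox{R}^*$ on $P$ is polynomially large (because of the scaling in step (II)) while $2\epsilon(1-\epsilon)n^{20}$ can be made polynomially smaller than $\mbox{R}^*$ by choosing $c$ large enough, the additive slack is negligible compared to $\mbox{R}^*/(2\epsilon(1-\epsilon))$.

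Now, feed $Q$ and $x_0=\boldsymbol{{\rm e}}_s-\boldsymbol{{\rm e}}_t$ to the hypothetical $n^{\gamma}$-approximation algorithm for the CIP. By Lemma \ref{lemm:eff-cast} its output $\hat{E}$ satisfies $\mbox{R}_{\rm eff}\big((Q\!\setminus\!\hat{E})^2\big)\ge \mbox{OPT}/n^{\gamma}$, where $\mbox{OPT}=\max_{|\hat{E}|\leq\ell}\mbox{R}_{\rm eff}((Q\!\setminus\!\hat{E})^2)\ge \mbox{R}^*/(2\epsilon(1-\epsilon))-n^{20}$. Combining this with the upper half of the sandwich bound,
\begin{align*}
\mbox{R}_{\rm eff}(P\!\setminus\!\hat{E})\;\ge\; \frac{\mbox{R}^*}{n^{\gamma}}\,-\,2\epsilon(1-\epsilon)n^{20}\;=\;\frac{\mbox{R}^*}{n^{\gamma}}\bigl(1-o(1)\bigr),
\end{align*}
so the same cut $\hat{E}$ is an $n^{\gamma}(1+o(1))$-approximation for the ERIP on $P$. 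Plugging this into the D$k$S-to-ERIP reduction of Theorem \ref{thm:approximability} (with $|V'|\le n^2$) produces an $O(n^{4\gamma})$-approximation for D$k$S, and the exponential-time-hypothesis statement then follows verbatim from \cite{manurangsi2017almost}.

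The main technical obstacle is to ensure that the additive $n^{20}$ slack introduced by the $\epsilon^2P^2$ term does not eat the approximation factor we are trying to preserve; this is precisely why $\epsilon$ must be chosen as a sufficiently small inverse polynomial and why the scaling in step (II) of Section \ref{sec:scaling} boosts $\mbox{R}^*$ to a polynomially large quantity. The rest is bookkeeping and a direct translation of the arguments in Theorem \ref{thm:scaling-resistance} and Theorem \ref{thm:approximability}.
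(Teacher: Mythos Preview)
Your proposal is correct and follows essentially the same route as the paper: reduce D$k$S to ERIP via Theorem~\ref{thm:approximability}, make the conductance matrix stochastic by scaling (step~(II)), pass to $Q=(1-\epsilon)I+\epsilon P$ with $\epsilon=n^{-21}$, and use the sandwich bound~\eqref{eq:lower-upper-n20} to show that an $n^{\gamma}$-approximation for CIP yields an $O(n^{\gamma})$-approximation for ERIP, hence $O(n^{4\gamma})$ for D$k$S. One minor remark: step~(I) of Section~\ref{sec:scaling} (replacing $0$-resistance edges) is not needed here, since the construction in Theorem~\ref{thm:approximability} already has minimum edge resistance $\delta=n^{-4}$; only the scaling in step~(II) is required to make $P$ stochastic.
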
 
\begin{proof} 
First, we note that the inapproximability results in Theorem \ref{thm:approximability} hold even for bipartite graphs of diameter three with minimum edge resistance $\delta=n^{-4}$. Thus, without any issue, we can work with conductance (rather than resistance), where the conductance of each edge is bounded above by $n^{4}$. Also, using the same scaling argument as in Case II in Section \ref{sec:scaling}, we may assume that the input conductance matrix $P\in [0,1]^{n\times n}$ is stochastic. Now, as in Theorem \ref{thm:scaling-resistance} let $\epsilon=n^{-21}$ and $Q=(1-\epsilon) I+\epsilon P$. Moreover, let us assume that there exists an $\alpha(n)$-approximation algorithm for the CIP and denote its output for the conductance input $Q$ by $\hat{E}^a$. By the approximation guarantee, for any edge cut $\hat{E}, |\hat{E}|\leq \ell$ over $Q$ we have
\begin{align}\nonumber
\mbox{R}_{\rm eff}\big((Q\setminus \hat{E}^a)^2\big)\ge \frac{\mbox{R}_{\rm eff}\big((Q\setminus \hat{E})^2\big)}{\alpha(n)}.
\end{align}
Thus, using \eqref{eq:lower-upper-n20} adapted for $P\setminus \hat{E}^a$ and $Q\setminus \hat{E}^a$, we get
\begin{align}\nonumber
\frac{\mbox{R}_{\rm eff}(P\setminus \hat{E}^a)}{2\epsilon(1-\epsilon)}\ge \mbox{R}_{\rm eff}\big((Q\setminus \hat{E}^a)^2\big)\ge \frac{\mbox{R}_{\rm eff}\big((Q\setminus \hat{E})^2\big)}{\alpha(n)}\ge \frac{1}{\alpha(n)}\big(\frac{\mbox{R}_{\rm eff}(P\setminus \hat{E})}{2\epsilon(1-\epsilon)}- n^{20}\big).
\end{align}   
Since $\epsilon=n^{-21}$, for any edge cut $\hat{E}$, we obtain
\begin{align}\nonumber
\mbox{R}_{\rm eff}(P\setminus \hat{E}^a)\ge \frac{\mbox{R}_{\rm eff}(P\setminus \hat{E})-2\epsilon(1-\epsilon) n^{20}}{\alpha(n)}\ge \frac{\mbox{R}_{\rm eff}(P\setminus \hat{E})-2n^{-1}}{\alpha(n)}\ge \frac{\mbox{R}_{\rm eff}(P\setminus \hat{E})}{2\alpha(n)},
\end{align}  
where the last inequality holds  because $\mbox{R}_{\rm eff}(P\setminus \hat{E})>>\frac{2}{n}$. (Recall that $P$ is a stochastic conductance matrix associated with a bipartite graph of diameter three with very small off-diagonal entries.) This shows that if the CIP admits an $\alpha(n)$-approximation, then one can approximate the ERIP over bipartite graphs of diameter three with conductance matrix $P$ within a factor of at most $2\alpha(n)$. 
\end{proof}

\section{Approximation Algorithms for the ERIP and CIP}\label{sec:quadratic}

As we showed in the previous section, the ERIP and CIP are NP-hard problems and are unlikely to admit approximation algorithms to a nearly polynomial factor.  In this section, we consider these problems and develop algorithms to approximate their optimal solutions.  

\subsection{An Approximation Algorithm for the ERIP}

In this section, we develop a polynomial-factor approximation algorithm (Algorithm \ref{alg:ERIP}) for the ERIP with arbitrary edge resistances. The algorithm removes high resistance edges using repeated application of the min $s-t$ cut problem to ensure that each path in the interdicted network contains at least one edge of high resistance.

\begin{algorithm}[H]
\caption{An Approximation Algorithm for the ERIP}\label{alg:ERIP}
\noindent 
{\bf Input:} A resistance network $\mathcal{G}=(V, E,\{r_e\}_{e\in E})$ with $|V|=n$ nodes, $|E|=m$ edges, two terminals $s,t\in V$, and interdiction budget $\ell$. 

-- Sort the edges based on their resistances such that $r_{e_1}\leq r_{e_2}\leq \ldots\leq r_{e_m}$. 

-- For $i=1,2,\ldots,m$, 
\begin{itemize} 
\item Let $\mathcal{G}_i=(V, \{e_1,\ldots,e_i\})$ be the unweighted graph obtained from the first $i$ edges with smallest resistance.  
\item Let $E_i$ be the set of edges in the unweighted min $s-t$ cut in $\mathcal{G}_i$.  
\item Let $k$ be the first time such that $|E_{k+1}|=\ell+1$.  Output $E^a:=E_k$ and stop.
\end{itemize}
\end{algorithm}

\begin{definition}
We say $E^a$ is an $\alpha$-approximate solution for the ERIP if 
\begin{align}\nonumber
{\rm R}_{\rm eff}(E\setminus E^a)\ge \frac{1}{\alpha}\max_{|\hat{E}|\leq \ell}{\rm R}_{\rm eff}(E\setminus \hat{E}).
\end{align}
\end{definition}

\begin{theorem}
For a network of $n$ nodes, $m$ edges, and arbitrary edge resistances, Algorithm \ref{alg:ERIP} returns an $nm$-approximate solution to the ERIP in time $O(n^2m^{\frac{3}{2}})$.
\end{theorem}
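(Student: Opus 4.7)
The plan is to pivot everything around the threshold resistance $r_{e_{k+1}}$, where $k$ is the index at which Algorithm~\ref{alg:ERIP} halts. By Assumption~\ref{ass:l-connectivity} the stopping condition $|E_{k+1}|=\ell+1$ is eventually triggered (the full graph's $s$--$t$ edge connectivity exceeds $\ell$), and since adding edges can only increase the unweighted min $s$--$t$ cut size, we have $|E_k|\le\ell$, so the output is feasible. Because the unweighted min $s$--$t$ cut of $\mathcal{G}_{k+1}$ has size exactly $\ell+1$, Menger's theorem produces $\ell+1$ edge-disjoint simple $s$--$t$ paths in $\mathcal{G}_{k+1}$, each of length at most $n-1$ and consisting only of edges of resistance at most $r_{e_{k+1}}$.

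First, I would derive an upper bound on the optimum. For an arbitrary feasible cut $\hat{E}$ with $|\hat{E}|\le\ell$, at least one of the $\ell+1$ edge-disjoint paths just exhibited survives in $\mathcal{G}\setminus\hat{E}$. Routing one unit of flow along this single surviving path and applying Thomson's principle (Lemma~\ref{lemm:thomson}) immediately gives
\begin{align*}
\mbox{R}_{\rm eff}(E\setminus\hat{E})\;\le\;(n-1)\,r_{e_{k+1}}\;\le\;n\,r_{e_{k+1}}.
\end{align*}

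Next, I would obtain a matching lower bound on the algorithm's output via a Nash--Williams-style cut argument. Since $E_k$ is a min $s$--$t$ cut of $\mathcal{G}_k$, there is a bipartition $(S,V\setminus S)$ with $s\in S$, $t\in V\setminus S$ whose crossing edges in $\mathcal{G}_k$ are exactly $E_k$. Consequently, every edge crossing $(S,V\setminus S)$ in $\mathcal{G}\setminus E^a$ lies in $\{e_{k+1},\dots,e_m\}$ and has conductance at most $1/r_{e_{k+1}}$, so the total conductance across the cut is at most $m/r_{e_{k+1}}$. Any unit $s$--$t$ current must cross this cut, and Ohm's law (the Nash--Williams inequality) then yields
\begin{align*}
\mbox{R}_{\rm eff}(E\setminus E^a)\;\ge\;\frac{r_{e_{k+1}}}{m}.
\end{align*}
Dividing the two bounds gives the $nm$ approximation ratio. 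For the runtime, sorting costs $O(m\log m)$ and the at most $m$ iterations each compute an unweighted min $s$--$t$ cut on a graph with $n$ vertices and $O(m)$ edges; using a push--relabel max-flow subroutine in $O(n^2\sqrt{m})$ per call yields the overall $O(n^2 m^{3/2})$ bound.

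The main obstacle is pairing the Thomson/flow upper bound with the appropriate cut-based (Nash--Williams) lower bound, both anchored at the same threshold $r_{e_{k+1}}$; the key insight is that the algorithm's stopping rule simultaneously guarantees a short low-resistance $s$--$t$ path (which drives the upper bound) and a bottleneck cut consisting entirely of high-resistance edges (which drives the lower bound).
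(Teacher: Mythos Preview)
Your argument is correct and follows essentially the same route as the paper: both pivot on the threshold $r_{e_{k+1}}$, obtain the upper bound from a surviving low-resistance $s$--$t$ path in $\mathcal{G}_{k+1}$ (the paper phrases this via a bottleneck quantity $\Phi(E)=\min_P\max_{e\in P}r_e$ and shows the algorithm is optimal for $\Phi$-interdiction, while you invoke Menger's theorem directly), and obtain the lower bound from a cut whose crossing edges all lie in $\{e_{k+1},\dots,e_m\}$. Your Nash--Williams formulation of the lower bound is in fact a bit crisper than the paper's ``at most $m$ parallel paths'' argument, but the content is the same.
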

\begin{proof}
Given a resistance network $\mathcal{G}=(V,E,\{r_e\}_{e\in E})$,  let $\Phi(E)$ be the minimum over all $s-t$ paths of the maximum resistance of an edge on the path, i.e., 
\begin{align}\nonumber
\Phi(E)=\min_{P\in \mathcal{P}_{st}(E)}\max_{e\in P} \ r_e,
\end{align}
where $\mathcal{P}_{st}(E)$ denotes the set of all $s-t$ paths that are supported over the edge set $E$.  We first argue that Algorithm \ref{alg:ERIP} returns an optimal solution to the $\Phi$-value interdiction problem, i.e., $E^a=\argmax_{|\hat{E}|\leq \ell}\Phi(E\setminus \hat{E})$, where $E^a$ denotes the solution returned by Algorithm \ref{alg:ERIP}. To show that, let $e_{k+1}$ be the last edge that is processed by the algorithm before its termination.  It means that the unweighted $s-t$ min cuts in networks $\mathcal{G}_k=(V,\{e_i\}_{i=1}^{k})$ and $\mathcal{G}_{k+1}=(V,\{e_i\}_{i=1}^{k+1})$ contain $\ell$ and $\ell+1$ edges, respectively.  In particular, the algorithm's output $E^a$ is the unweighted $s-t$ min cut for $\mathcal{G}_k=(V,\{e_i\}_{i=1}^{k})$.  Therefore, every $s-t$ path in $(V, E\setminus E^a)$ must contain at least one edge from $\{e_{k+1},\ldots,e_m\}$, and hence $\Phi(E\setminus E^a)\ge r_{e_{k+1}}$.  On the other hand, since the unweighted $s-t$ min cut in $\mathcal{G}_{k+1}=(V,\{e_i\}_{i=1}^{k+1})$ has $\ell+1$ edges, any interdiction set $\hat{E}$ that removes at most $\ell$ edges will leave at least one $s-t$ path whose edges all belong to $\{e_i\}_{i=1}^{k+1}$, and thus $\Phi(E\setminus \hat{E})\leq r_{e_{k+1}}$.  Therefore, we have 
\begin{align}\label{eq:phi-Ea}
\Phi(E\setminus E^a)=\max_{|\hat{E}|\leq \ell }\Phi(E\setminus \hat{E}).
\end{align}

Let us now denote the optimal interdiction set to the ERIP by $E^*$.  Using the definition of $\Phi(E\setminus E^*)$, there exists at least one $s-t$ path in the network $(V, E\setminus E^*)$, such that every edge on that path has resistance at most $\Phi(E\setminus E^*)$.  Since a path can have at most $n$ edges, we get $\mbox{R}_{\rm eff}(E\setminus E^*) \leq  n\Phi(E\setminus E^*)$. On the other hand,  using the definition of $\Phi(E\setminus E^a)$, every $s-t$ path in $(V, E\setminus E^a)$ has at least one edge $e$ with resistance $r_e\ge \Phi(E\setminus E^a)$.  Since $(V, E\setminus E^a)$ has at most $m$ edges,  $\mbox{R}_{\rm eff}(E\setminus E^a)$ can be lower-bounded by the effective resistance of at most $m$ parallel $s-t$ paths, where we just showed that each such path has a resistance of at least $\Phi(E\setminus E^a)$. Therefore, $\mbox{R}_{\rm eff}(E\setminus E^a)\ge \frac{\Phi(E\setminus E^a)}{m}$. Now, using \eqref{eq:phi-Ea} we can write
\begin{align}\nonumber
\mbox{R}_{\rm eff}(E\setminus E^a)\ge \frac{\Phi(E\setminus E^a)}{m}\ge \frac{\Phi(E\setminus E^*)}{m}\ge \frac{\mbox{R}_{\rm eff}(E\setminus E^*)}{nm}.
\end{align}  
Finally, we note that Algorithm \ref{alg:ERIP} terminates when at most all the $m$ edges are added one by one (in which case $E_m\ge \ell+1$ due to Assumption \ref{ass:l-connectivity}). Moreover, for each edge $e_i$, Algorithm \ref{alg:ERIP} needs to solve one max $s-t$ flow problem in order to find the min $s-t$ cut $E_i$. As each max $s-t$ flow problem can be solved in $O(n^2m^{\frac{1}{2}})$, the running time of Algorithm \ref{alg:ERIP} is at most $O(n^2m^{\frac{3}{2}})$.          
\end{proof}

\subsection{An Approximation Algorithm for the CIP}

Extending the approximation algorithm given in the previous section to the CIP is more complicated and faces additional challenges.  For instance, the initial vector $x_0$ that is part of the input to the CIP affects all the nodes in the network (rather than only nodes $s$ and $t$). Therefore,  one must deal with a generalized network flow problem with multiple sources and sinks.  In this section, we instead take a different approach by providing a quadratic program formulation for the CIP. This alternative formulation provides new insights on how to obtain good approximate solutions for the CIP and makes interesting connections between the CIP, spectral connectivity, and power dissipation in electric networks. 

Let us consider an arbitrary symmetric stochastic conductance matrix $P$. We represent a feasible interdiction set $\hat{E}$ using its (complement) characteristic vector $y\in\{0,1\}^m$, i.e., $y_{ij}=y_{ji}=0$ if $\{i,j\}\in \hat{E}$ and $y_{ij}=y_{ji}=1$, otherwise. As a feasible interdiction set $\hat{E}$ can break at most $\ell$ edges, thus $y'\boldsymbol{1}\ge m-\ell$.  Moreover, we can rewrite the interdicted matrix $P\setminus \hat{E}$ in terms of the decision variable $y$ as 
\begin{align}\label{eq:p-y}
[P(y)]_{ij}=\begin{cases}p_{ij}y_{ij}  &\mbox{if } i\neq j, p_{ij}>0, \\
1-\sum_{k\neq i}p_{ik}y_{ik}  &\mbox{if } i=j.
 \end{cases}
\end{align}
We can now formulate the CIP using a quadratic program with linear constraints. 

\begin{lemma}\label{thm:quadratic-CIP}
Let $\mathcal{L}(y)=I-[P(y)]^2$, and consider any optimal solution $(y^*, u^*)$ to the following quadratic program 
\begin{align}\label{eq:u-perpendic-CIP}
&\min u'\big(\mathcal{L}(y)+\frac{J}{n}\big)u\cr 
&\qquad u'x_0=1,\cr 
\mbox{s.t.}&\qquad y'\boldsymbol{1}\ge m-\ell, \cr 
&\qquad y\in [0, 1]^m, u\in \mathbb{R}^n.
\end{align}
Then, $y^*$ is complement of the incidence vector of the optimal edge cut $\hat{E}^*$ in CIP. 
\end{lemma}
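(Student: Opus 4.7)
The plan is to recognize the QP~\eqref{eq:u-perpendic-CIP} as the dual variational form of the objective in Lemma~\ref{lemm:eff-cast}, and then to show the continuous relaxation $y\in[0,1]^m$ is tight at integer points. First I recall the elementary identity that for a positive definite matrix $M$,
$$\min_{u:\,u'x_0=1} u'Mu \;=\; \frac{1}{x_0'M^{-1}x_0},$$
with minimizer $u=M^{-1}x_0/(x_0'M^{-1}x_0)$; this follows from Cauchy--Schwarz applied to $M^{1/2}u$ and $M^{-1/2}x_0$. For an integer feasible $y$, positive definiteness of $M(y)=\mathcal{L}(y)+J/n$ is guaranteed by Assumption~\ref{ass:l-connectivity}: connectedness ensures $1$ is a simple eigenvalue of the symmetric stochastic matrix $P(y)$, and the strictly positive diagonal of $P(y)$ (inherited from $P$) rules out $-1$ as an eigenvalue, so $\mathcal{L}(y)=I-P(y)^2$ is positive definite on $\mathbf{1}^{\perp}$ while $J/n$ contributes in the $\mathbf{1}$ direction. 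Thus, over $y\in K\cap\{0,1\}^m$ with $K=\{y\in[0,1]^m: y'\mathbf{1}\geq m-\ell\}$, the CIP objective $x_0'M(y)^{-1}x_0$ from Lemma~\ref{lemm:eff-cast} coincides with $1/\min_{u:u'x_0=1} u'M(y)u$.

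The central step is to prove that the relaxed QP has an integer optimum in $y$. For any fixed $u\in\mathbb{R}^n$, expanding the objective yields
$$u'M(y)u \;=\; \|u\|^2 + \frac{(\mathbf{1}'u)^2}{n} - \|P(y)u\|^2,$$
and since $P(y)u$ is affine in $y$ by~\eqref{eq:p-y}, the term $\|P(y)u\|^2$ is a convex quadratic in $y$. Hence $y\mapsto u'M(y)u$ is concave on $[0,1]^m$ for every fixed $u$. In parallel, a short vertex analysis shows that $K$ has only integer extreme points: any vertex of $K$ is the intersection of $m$ tight defining inequalities drawn from the $2m$ box constraints and the single knapsack $y'\mathbf{1}\geq m-\ell$. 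If the knapsack is not tight, all tight constraints are box constraints and the vertex lies in $\{0,1\}^m$; if it is tight, the remaining $m-1$ tight box constraints pin $m-1$ coordinates to values in $\{0,1\}$, and the last coordinate is forced by the knapsack to $m-\ell-k\in\{0,1\}$, where $k$ is the number of ones among the pinned coordinates.

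Combining these ingredients, a concave function attains its minimum over a bounded polytope at an extreme point, so for each fixed $u$, $\min_{y\in K} u'M(y)u = \min_{y\in K\cap\{0,1\}^m} u'M(y)u$. Interchanging the two minimizations gives
$$\min_{(u,y)} u'M(y)u \;=\; \min_{y\in K\cap\{0,1\}^m} \frac{1}{x_0'M(y)^{-1}x_0} \;=\; \frac{1}{\max_{y\in K\cap\{0,1\}^m} x_0'M(y)^{-1}x_0},$$
which by Lemma~\ref{lemm:eff-cast} equals the reciprocal of the CIP optimum. Therefore the QP minimum is attained at a $y^*$ equal to the complement of the incidence vector of the optimal edge cut $\hat{E}^*$, with $u^*=M(y^*)^{-1}x_0/(x_0'M(y^*)^{-1}x_0)$. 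The main obstacle is the concavity-in-$y$ step, which becomes clean once the objective is decomposed as a constant minus $\|P(y)u\|^2$; a minor technicality is the possible singularity of $M(y)$ at fractional $y$, but the concavity reduction sidesteps this by restricting the analysis to the integer vertices of $K$, where $M(y)\succ 0$ and the Cauchy--Schwarz identity applies unconditionally.
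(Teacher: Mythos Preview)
Your proof is correct and follows essentially the same architecture as the paper's: reduce the inner minimization in $u$ to the reciprocal of $x_0'M(y)^{-1}x_0$ (you use Cauchy--Schwarz, the paper uses Lagrangian duality), invoke concavity of $y\mapsto u'M(y)u$ together with integrality of the vertices of $K$ to restrict to binary $y$, and identify the result with the CIP via Lemma~\ref{lemm:eff-cast}. Your concavity argument via the decomposition $u'M(y)u=\|u\|^2+(\mathbf{1}'u)^2/n-\|P(y)u\|^2$ is a slightly cleaner variant of the paper's Lemma~\ref{lemm:concave}; the only minor imprecision is the claim that $P(y)$ inherits a strictly positive diagonal from $P$ (not guaranteed in general), but this is immaterial since aperiodicity of $P(y)$ is already implicit in the well-posedness of the CIP objective.
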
 
\begin{proof}
Using Lemma \ref{lemm:eff-cast}, solving the CIP is equivalent to solving
\begin{align}\label{eq:more-general-ERIP}
\max_{|\hat{E}|\leq \ell} \ x_0'\Big(I-(P\!\setminus\!\hat{E})^2+\frac{J}{n}\Big)^{-1}x_0=\max_{\substack{y\in \{0, 1\}^m\\ y'\boldsymbol{1}\ge m-\ell}} \ x_0'\Big(\mathcal{L}(y)+\frac{J}{n}\Big)^{-1}x_0.
\end{align}
Thus, we only need to show that the optimal solution $y^*$ to the quadratic program \eqref{eq:u-perpendic-CIP} can be obtained by solving \eqref{eq:more-general-ERIP}.  As is shown in Lemma \ref{lemm:concave}, for any vector $u$, the objective function $u'\big(\mathcal{L}(y)+\frac{J}{n}\big)u$ is concave with respect to $y$. Therefore, minimizing $u'\big(\mathcal{L}(y)+\frac{J}{n}\big)u$ over the integral polytope $Y=\{y\in [0,1]^{m}: y'\boldsymbol{1}\ge m-\ell\}$ would deliver a binary vector $y^*$. (Note that the constraint set $Y$ is independent of the $u$ variable.) Thus, without loss of generality, we can drop the binary constraints on $y$ to obtain the following equivalent program:
\begin{align}\label{eq:u-into-objective}
\min \{u'\big(\mathcal{L}(y)+\frac{J}{n}\big)u: \ u'x_0=1,\ u\in \mathbb{R}^n, y\in Y\}.
\end{align} 

As the matrix $\mathcal{L}(y)+\frac{J}{n}$ is positive-definite (and hence invertible), if we define $U=\{u\in \mathbb{R}^n: u'x_0=1\}$, the optimization problem \eqref{eq:u-into-objective} can be written as
\begin{align}\label{eq:u-y-separate}
\!\!\!\min_{y\in Y, u\in U} \!u'\big(\mathcal{L}(y)+\frac{J}{n}\big)u=\min_{y\in Y}\big\{\!\min_{u\in U} u'\big(\mathcal{L}(y)+\frac{J}{n}\big)u\big\},
\end{align}
where the equality holds because the constraint sets $U$ and $Y$ are uncoupled. Now, for any fixed $y\in Y$, the inner minimization $\min_{u\in U}u'\big(\mathcal{L}(y)+\frac{J}{n}\big)u$ can be solved in a closed-form using Lagrangian duality. We note that this inner minimization is a positive-definite quadratic program with linear constraint $u'x_0=1$, and hence has zero-duality gap. Therefore, for any fixed $y$, if we define the Lagrangian function $L(u,\lambda)=u'\big(\mathcal{L}(y)+\frac{J}{n}\big)u-\lambda(u'x_0-1)$, the optimal primal-dual solutions to the inner minimization are given by 
\begin{align}\nonumber 
&u^*=\argmin_{u\in \mathbb{R}^n}L(u,\lambda^*)=\frac{\lambda^*}{2} \big(\mathcal{L}(y)+\frac{J}{n}\big)^{-1}x_0,\cr 
&\lambda^*=\argmax_{\lambda\in \mathbb{R}}L(u^*,\lambda)=\frac{2}{x'_0(\mathcal{L}(y)+\frac{J}{n})^{-1}x_0}.
\end{align}
By combining the above relations, we get 
\begin{align}\label{eq:optimal-u}
u^*=\frac{(\mathcal{L}(y)+\frac{J}{n})^{-1}x_0}{x'_0(\mathcal{L}(y)+\frac{J}{n})^{-1}x_0}.
\end{align}
Thus, we can write
\begin{align}\nonumber
\min_{u\in U}u'\big(\mathcal{L}(y)+\frac{J}{n}\big)u&=u^{*'}\big(\mathcal{L}(y)+\frac{J}{n}\big)u^*=\frac{1}{x'_0(\mathcal{L}(y)+\frac{J}{n})^{-1}x_0}.
\end{align}
Finally, using the above relation together with \eqref{eq:u-y-separate}, we obtain
\begin{align}\nonumber
y^*&=\argmin_{y\in Y}\big\{\min_{u\in U} u'\big(\mathcal{L}(y)+\frac{J}{n}\big)u\big\}\cr 
&=\argmin_{y\in Y}\frac{1}{x'_0(\mathcal{L}(y)+\frac{J}{n})^{-1}x_0}\cr 
&=\argmax_{y\in Y}x'_0(\mathcal{L}(y)+\frac{J}{n})^{-1}x_0,
\end{align}    
which is precisely the solution to the CIP \eqref{eq:more-general-ERIP}.
\end{proof}

\begin{remark}
In fact, the program \eqref{eq:u-perpendic-CIP} can be written in an equivalent form of 
\begin{align}\label{eq:second-eigenvalue}
\min \big\{u'\mathcal{L}(y)u: u'x_0=1, u'\boldsymbol{1}=0, y'\boldsymbol{1}\ge m-\ell, y\in [0, 1]^m, u\in \mathbb{R}^n\big\}.
\end{align}
Interestingly, if the initial vector $x_0$ could be set to $u$, then by Courant-Fischer Theorem, the optimization \eqref{eq:second-eigenvalue} would become $\min \{\lambda_2\big(\mathcal{L}(y)\big): y\in Y\}$, where $\lambda_2\big(\mathcal{L}(y)\big)$ denotes the second smallest eigenvalue of the Laplacian matrix $\mathcal{L}(y)$. In that case,  the CIP would reduce to minimizing the second smallest eigenvalue value of the Laplacian matrix $I-P^2$ by removing at most $\ell$ edges.
\end{remark}

Using Lemma \ref{thm:quadratic-CIP}, a natural approach to solve the CIP is to use an iterative algorithm based on the block-coordinate descent (BCD) \cite{razaviyayn2013unified}. At each iteration $\tau=1,2,\ldots$, the algorithm fixes one variable and optimizes the objective function with respect to the second variable. For a fixed \emph{network} variable $y^{\tau}$, the optimal voltage variable $u$ is obtained from expression \eqref{eq:optimal-u} given in Lemma \ref{thm:quadratic-CIP}. Unfortunately, for a fixed \emph{voltage} variable $u^{\tau}$, the objective function $u'^{\tau}\big(\mathcal{L}(y)+\frac{J}{n}\big)u^{\tau}$ is concave with respect to $y$,  which, in general, it could be hard to minimize over the polytope $Y$. Instead, we update the network variable $y$ using an inexact BCD method by minimizing an upper approximation of the objective function $u'^{\tau}\big(\mathcal{L}(y)+\frac{J}{n}\big)u^{\tau}$ at the current network variable $y=y^\tau$. The overall procedure is summarized in Algorithm \ref{alg:iterative}. 

\begin{algorithm}[t]
\caption{An Adaptive Iterative Approximation Algorithm for the CIP}\label{alg:iterative}

\noindent
{\bf Input:} Initial vector $x_0$,  budget $\ell$, and a symmetric stochastic conductance matrix $P=(p_{ij})$ with $m$ edges $E$ (excluding the self-loops). 

\smallskip
For $\tau=0,1,2,\ldots$,  and an arbitrary initial vector $y^0\in \{0,1\}^m$,
\begin{itemize}
\item Let $u^{\tau}=\frac{(\mathcal{L}(y^{\tau})+\frac{J}{n})^{-1}x_0}{x'_0(\mathcal{L}(y^{\tau})+\frac{J}{n})^{-1}x_0}$, where $\mathcal{L}(y^{\tau})=I-[P(y^{\tau})]^2$.
\item Let $f(u,y)=u'\big(\mathcal{L}(y)+\frac{J}{n}\big)u$. For all the edges $\{i,j\}\in E, i\neq j$, sort gradients 
\begin{align}\label{eq:network-update}
\nabla_{y_{ij}} f(u^{\tau},y^{\tau})&=2p_{ij}(u_i^{\tau}-u_j^{\tau})^2\big([P(y^{\tau})]_{ii}+[P(y^{\tau})]_{jj}-2[P(y^{\tau})]_{ij}\big)\cr 
&\qquad+2p_{ij}\sum_{k\neq i,j}\big((u^{\tau}_k-u^{\tau}_j)^2-(u^{\tau}_k-u^{\tau}_i)^2\big)\big([P(y^{\tau})]_{ik}-[P(y^{\tau})]_{jk}\big),
\end{align}and set $y_{ij}^{\tau+1}=0$ for the $\ell$ links of the highest gradients, and $y_{ij}^{\tau+1}=1$ for the remaining $m-\ell$ edges. 
\item Output $y^{\tau}$ as an approximate solution if $y^{\tau}=y^{\tau+1}$.     
\end{itemize}
\end{algorithm}

\begin{definition}
Let $f(u,y)=u'\big(\mathcal{L}(y)+\frac{J}{n}\big)u$, $Y=\{y\in [0,1]^{m}: y'\boldsymbol{1}\ge m-\ell\}$, and $U=\{u\in \mathbb{R}^n: u'x_0=1\}$. We say $(u^o,y^o)\in U\times Y$ is a first-order stationary point for the CIP if for any $(d_1,d_2)$ such that $(u^o+d_1,y^o+d_2)\in U\times Y$, we have 
\begin{align}\nonumber
\lim_{\lambda\downarrow 0}\frac{f(u^o+\lambda d_1,y^o+\lambda d_2)-f(u^o,y^o)}{\lambda}\ge 0.
\end{align}  
\end{definition} 

In other words, $(u^o,y^o)$ is a first-order stationary point if the directional derivative of the CIP objective function along any feasible direction is nonnegative.

\begin{theorem}
Algorithm \ref{alg:iterative} converges to a first-order stationary point for the CIP after finitely many iterations.
\end{theorem}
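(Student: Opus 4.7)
The plan is to exploit the block-coordinate descent structure of Algorithm \ref{alg:iterative} together with the concavity of $f(u,y):=u'(\mathcal{L}(y)+J/n)u$ in $y$ (Lemma \ref{lemm:concave}) to obtain a monotone descent, and then to invoke finiteness of the discrete search space of $y$ to force termination at a first-order stationary point. First I would restrict attention to the polytope $\tilde{Y}=\{y\in[0,1]^m:\,y'\boldsymbol{1}=m-\ell\}$: by Rayleigh's monotonicity law (Corollary \ref{cor:ray}) the effective resistance is nondecreasing in the number of removed edges, so an optimum of the CIP saturates the budget, and after the first iteration Algorithm \ref{alg:iterative} produces iterates $y^\tau$ confined to the vertices of $\tilde{Y}$ (binary vectors with exactly $\ell$ zeros).

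Next I would establish two descent inequalities per iteration. The $u$-update returns $u^\tau$ as the unique minimizer of the positive-definite quadratic $f(\cdot,y^\tau)$ over the affine set $U=\{u:u'x_0=1\}$, so $f(u^{\tau+1},y^{\tau+1})\leq f(u^\tau,y^{\tau+1})$. For the $y$-update, concavity of $f(u^\tau,\cdot)$ gives the affine upper bound $f(u^\tau,y)\leq L(y):=f(u^\tau,y^\tau)+\nabla_y f(u^\tau,y^\tau)'(y-y^\tau)$; a direct check shows that zeroing out the $\ell$ coordinates of largest gradient is precisely the vertex of $\tilde{Y}$ minimizing the affine form $L(\cdot)$, whence $L(y^{\tau+1})\leq L(y^\tau)=f(u^\tau,y^\tau)$ and therefore $f(u^\tau,y^{\tau+1})\leq f(u^\tau,y^\tau)$. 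Chaining the two inequalities gives $f(u^{\tau+1},y^{\tau+1})\leq f(u^\tau,y^\tau)$ for every $\tau$.

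Since $y^\tau$ ranges over the at most $\binom{m}{\ell}$ vertices of $\tilde{Y}$ and $u^\tau$ is uniquely determined by $y^\tau$ via \eqref{eq:optimal-u}, the sequence $\{f(u^\tau,y^\tau)\}$ takes only finitely many values, and being non-increasing it must stabilize from some index $T$ onward. The equality $f(u^T,y^T)=f(u^T,y^{T+1})$ together with the chain $f(u^T,y^{T+1})\leq L(y^{T+1})\leq L(y^T)=f(u^T,y^T)$ forces $L(y^T)=L(y^{T+1})$, i.e., $y^T$ itself minimizes $L$ over $\tilde{Y}$; equivalently, $\nabla_y f(u^T,y^T)'(y-y^T)\geq 0$ for every $y\in\tilde{Y}$. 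Combined with $\nabla_u f(u^T,y^T)'(u-u^T)\geq 0$ for every $u\in U$ from exact $u$-minimization, and because the product structure of $U\times\tilde{Y}$ decouples directional derivatives into their $u$- and $y$-components, we conclude that $(u^T,y^T)$ satisfies the first-order stationarity condition of the CIP.

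The main obstacle is the bookkeeping at this last step: the argument above yields stabilization of the function values (and hence arrival at a stationary point), but the algorithm's explicit stopping test demands the stronger equality $y^{\tau+1}=y^\tau$. I would address this by adopting a consistent tie-breaking rule in the sorting step that retains the incumbent $y^T$ whenever it lies among the linearization-minimizing vertices of $\tilde{Y}$; this guarantees that Algorithm \ref{alg:iterative} actually triggers its stopping test in finitely many iterations exactly at the stationary point identified above, since strict descent combined with finiteness rules out revisiting any vertex before stabilization.
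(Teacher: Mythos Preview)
Your proposal is correct and follows the same core argument as the paper: concavity of $f$ in $y$ yields an affine majorant, the $y$-step minimizes that majorant, chaining with the exact $u$-step gives monotone descent, and finiteness of the binary vertex set forces termination at a stationary point.

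Two refinements distinguish your treatment from the paper's and are worth recording. First, you work on the equality face $\tilde{Y}=\{y\in[0,1]^m:y'\boldsymbol{1}=m-\ell\}$, justified via Rayleigh monotonicity; this makes the claim ``zeroing the $\ell$ largest-gradient coordinates minimizes the linearization'' exactly true, whereas the paper asserts $y^{\tau+1}=\argmin_{y\in Y}L_\tau(y)$ over the full polytope $Y$, which strictly speaking requires the top $\ell$ gradient entries to be nonnegative. Second, you explicitly separate stabilization of function values from the algorithm's stopping test $y^{\tau+1}=y^\tau$ and close the gap with a tie-breaking convention; the paper instead claims strict descent whenever $y^{\tau+1}\neq y^\tau$ via ``strict convexity and concavity,'' though Lemma~\ref{lemm:concave} only gives concavity (not strict) in $y$. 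One caveat: your stationarity conclusion is over $U\times\tilde{Y}$, while the paper's Definition is stated over $U\times Y$; a sentence reconciling the two (they coincide on the budget-saturating face by your Rayleigh argument) would make the match exact.
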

\begin{proof}
Let $f(u,y)=u'\big(\mathcal{L}(y)+\frac{J}{n}\big)u$. Given $(u^{\tau},y^{\tau})$ at iteration $\tau$, let us define
\begin{align}\label{eq:linearized-obj}
L_{\tau}(y):=f(u^{\tau},y^{\tau})+(y-y^{\tau})' \nabla_y f(u^{\tau},y^{\tau}),\ \forall y\in Y,
\end{align}
which can be viewed as a linearization of the objective function $f(u^{\tau},y)$ at the point $y=y^{\tau}$. Thanks to concavity of $f(u^{\tau},y)$ as a function of $y$ (Lemma \ref{lemm:concave}), we have
\begin{align}\label{eq:BCD-upper-approx}
f(u^{\tau},y)\leq L_{\tau}(y),\ \forall y\in Y.
\end{align}  
Minimizing the linearized function \eqref{eq:linearized-obj} over $Y=\{y\in [0,1]^{m}: y'\boldsymbol{1}\ge m-\ell\}$ admits a simple closed-form solution: sort the gradient components $\nabla_{y_{ij}} f(u^{\tau},y^{\tau})$ for all the links $\{i,j\}\in E$, and set $y_{ij}=0$ for the $\ell$ links of the highest gradient components. As is shown in Lemma \ref{lemm:gradient-expansion}, the gradient $\nabla_{y_{ij}} f(u^{\tau},y^{\tau})$ equals the right-hand side expression in \eqref{eq:network-update}, which gives us exactly the network update rule in Algorithm \ref{alg:iterative}. Therefore, $y^{\tau+1}=\argmin_{y\in Y} L_{\tau}(y)$ and we can write
\begin{align}\nonumber
f(u^{\tau+1},y^{\tau+1})&=\min_{u\in U}f(u,y^{\tau+1}) \leq f(u^{\tau},y^{\tau+1})\cr 
&\leq L_{\tau}(y^{\tau+1})=\min_{y\in Y} L_{\tau}(y)\leq L_{\tau}(y^{\tau})= f(u^{\tau},y^{\tau}),
\end{align}
where the second inequality is due to \eqref{eq:BCD-upper-approx}. In particular, due to strict convexity and concavity of $f(u,y)$ with respect to $u$ and $y$, respectively, at least one of the inequalities in the above expression is strict unless $y^{\tau+1}=y^{\tau}$. This shows that after each major iteration, the objective function $f(u,y)$ strictly decreases. As a result, no pair of points $(u^{\tau},y^{\tau})$ will be repeated twice during the execution of Algorithm \ref{alg:iterative}. Since $y^{\tau}$ is a binary vector that belongs to ${m \choose \ell}$ many extreme points of $Y$, the algorithm will terminate after finitely many iterations $\tau^*$ to some point $(u^{\tau^*},y^{\tau^*})$.  

Finally, using Taylor expansion and differentiability of $f(u,y)$ over the convex and compact set $U\times Y$, for any $(d_1,d_2)$ such that $(u^{\tau^*}\!\!\!+d_1,y^{\tau^*}\!\!\!+d_2)\in U\times Y$, we have
\begin{align}\label{eq:directional-derivative}
\lim_{\lambda\downarrow 0}\frac{f(u^{\tau^*}\!\!+\!\lambda d_1,y^{\tau^*}\!\!+\!\lambda d_2)-f(u^{\tau^*},y^{\tau^*})}{\lambda}&=d'_1\nabla_u f(u^{\tau^*},y^{\tau^*})+d'_2\nabla_y f(u^{\tau^*},y^{\tau^*})\cr 
&=d'_1\nabla_u f(u^{\tau^*},y^{\tau^*})+d'_2\nabla_y L_{\tau^*}(y^{\tau^*}).
\end{align}
Since $y^{\tau^*}=\argmin_{y\in Y}L_{\tau^*}(y)$ and $y^{\tau^*}+d_2\in Y$, we have
\begin{align}\label{eq:linear-y-final}
d'_2\nabla_y L_{\tau^*}(y^{\tau^*})=L_{\tau^*}(y^{\tau^*}\!+\!d_2)-L_{\tau^*}(y^{\tau^*})\ge 0.
\end{align} 
Moreover, $f(u,y^{\tau^*})$ is a convex function of $u$ such that $u^{\tau^*}=\argmin_{u\in U}f(u,y^{\tau^*})$ and $u^{\tau^*}+d_1\in U$. Thus, using the optimality condition for constrained convex optimization we must have $d'_1\nabla_u f(u^{\tau^*},y^{\tau^*})\ge 0.$ Substituting this relation and \eqref{eq:linear-y-final} into \eqref{eq:directional-derivative} shows that $(u^{\tau^*},y^{\tau^*})$ must be a first-order stationary point for the CIP.         
\end{proof} 

\begin{corollary}\label{rem:potential-theoretic}[Potential-Theoretic Algorithm] If at iteration $\tau$ we use a non-adaptive upper approximation given by the linearization of $f(u^{\tau},y)$ at the origin $y=0$, i.e., $L_{\tau}(y)=f(u^{\tau},0)+y' \nabla_y f(u^{\tau},0)$, then the update rule in \eqref{eq:network-update} degenerates to $\nabla_{y_{ij}} f(u^{\tau},0)=4p_{ij}(u^{\tau}_i-u^{\tau}_j)^2$. In that case, the network update at time $\tau+1$ simplifies as follows: sort power dissipations $p_{ij}(u^{\tau}_i-u^{\tau}_j)^2, \forall \{i,j\}\in E$, and break $\ell$ edges of the highest power dissipation. This is exactly the \emph{potential-theoretic} algorithm that was developed in \cite{khanafer2015robust}, which can be obtained as a special case of Algorithm \ref{alg:iterative}.
\end{corollary}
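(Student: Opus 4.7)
The plan is a direct specialization of the gradient formula \eqref{eq:network-update} at $y = 0$. By the definition in \eqref{eq:p-y}, for $i \neq j$ we have $[P(0)]_{ij} = p_{ij}\cdot 0 = 0$, and on the diagonal $[P(0)]_{ii} = 1 - \sum_{k\neq i} p_{ik}\cdot 0 = 1$, i.e.\ $P(0) = I$. Plugging this into \eqref{eq:network-update}, the parenthetical factor in the first term reduces to $[P(0)]_{ii} + [P(0)]_{jj} - 2[P(0)]_{ij} = 1 + 1 - 0 = 2$, while every summand in the $k$-sum of the second term carries the factor $[P(0)]_{ik} - [P(0)]_{jk} = 0 - 0 = 0$ for $k \neq i,j$, so the entire second term vanishes. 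This immediately gives
\[
\nabla_{y_{ij}} f(u^{\tau}, 0) \;=\; 2\, p_{ij}\,(u^{\tau}_i - u^{\tau}_j)^2 \cdot 2 \;=\; 4\, p_{ij}\,(u^{\tau}_i - u^{\tau}_j)^2,
\]
which is the claimed degeneration.

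Next I would show that minimizing the non-adaptive linearization $L_{\tau}(y) = f(u^{\tau}, 0) + y'\nabla_y f(u^{\tau}, 0)$ over $Y = \{y \in [0,1]^m : y'\boldsymbol{1} \geq m-\ell\}$ reduces to greedy edge selection. The constant $f(u^{\tau},0)$ is irrelevant to the argmin, and the remaining linear objective has nonnegative coefficients $4 p_{ij}(u^{\tau}_i - u^{\tau}_j)^2 \geq 0$. Since $Y$ is an integral polytope whose vertices are $0/1$ vectors with at least $m-\ell$ ones, a one-line exchange argument on the constraint $y'\boldsymbol{1} \geq m-\ell$ shows that the minimum is attained by setting $y^{\tau+1}_{ij} = 0$ for the $\ell$ edges with the largest coefficients and $y^{\tau+1}_{ij} = 1$ otherwise. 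Equivalently, the update removes the $\ell$ edges of highest power dissipation $p_{ij}(u^{\tau}_i - u^{\tau}_j)^2$ at the current voltage vector $u^{\tau}$, which is exactly the rule of the potential-theoretic algorithm in \cite{khanafer2015robust}.

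I do not expect any step to be a real obstacle: the corollary is essentially a one-line consequence of \eqref{eq:network-update} combined with the observation that a linear program over $Y$ with nonnegative coefficients is solved by zeroing the $\ell$ largest entries. The only point that warrants a brief comment is the role of the anchor: using the fixed anchor $y = 0$ (rather than the iterate-dependent $y = y^{\tau}$ used inside Algorithm \ref{alg:iterative}) is precisely what forces $P(y^{\tau})$ to be replaced by $I$, thereby annihilating the cross-term corrections in \eqref{eq:network-update} and recovering the memoryless rule of \cite{khanafer2015robust}.
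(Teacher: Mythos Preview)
Your proposal is correct and is exactly the computation the paper has in mind: the corollary is stated in the paper without proof, and your argument---evaluating \eqref{eq:network-update} at $P(0)=I$ so that the first bracket becomes $2$ and the second sum vanishes, then observing that minimizing a nonnegative linear form over $Y$ zeros out the $\ell$ largest coordinates---is the direct verification the paper leaves to the reader.
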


\section{Numerical Experiments}\label{sec:numerical}
In this section, we evaluate the performance of Algorithm \ref{alg:iterative} on various networks of different sizes. We show that the stationary solution obtained at the end of Algorithm \ref{alg:iterative} provides a high-quality approximate solution for the CIP both in terms of the objective value as well as the running time. We also compare the performance of Algorithm \ref{alg:iterative} with the non-adaptive potential-theoretic algorithm (Corollary \ref{rem:potential-theoretic}) and show its outperformance on most of the instances. In our experiments, we consider the following set of networks:  
\begin{itemize}
\item $K_n$: Complete graphs of $n$ nodes.
\item $K_{\frac{n}{2},\frac{n}{2}}$: Complete bipartite graphs with two part sizes $\lfloor \frac{n}{2} \rfloor$ and $\lceil\frac{n}{2}\rceil$ nodes.
\item $D_4$: $4$-Regular graphs in which $n$ nodes are arranged around a cycle and each node is connected to two of its immediate nodes on the left and on the right. 
\item $E_{\frac{1}{2}}$: Erdos-Renyi graphs on $n$ nodes with edge emergence probability $p=\frac{1}{2}$.\footnote{If necessary, we regenerate such graphs to assure that the output graph is $\ell$-edge connected.} 
\end{itemize} 

Next, we construct a symmetric stochastic conductance matrix $P$ associated with each network. To that end, we randomly assign an integer weight $w_{ij}=w_{ji}\sim \mbox{Unif}\{1,2,\ldots,10\}$ to each edge $\{i,j\}$ in the network, and set $w_{ij}=0$ if there is no edge between $i$ and $j$. We then normalize the edge wights using the well-known Metropolis matrices by setting
\begin{align}\nonumber
P_{ij}=\begin{cases}
\frac{w_{ij}}{\max\{\sum_{k=1}^nw_{ik},\sum_{k=1}^nw_{jk}\}}, & \mbox{if $i\neq j$}\\
1-\sum_{k\neq i}P_{ik}, & \mbox{if $i=j$}.
\end{cases}
\end{align}

Finally, we choose the initial vector $x_0\in \{0,1\}^n$ such that $[x_0]_i=0$ if $i$ is odd and $[x_0]_i=1$ if $i$ is even. In our first experiment, we set the edge budget to $\ell=3$ and increase the number of nodes for each of the above networks up to $n=25$. The results are reported in Figure \ref{fig:complete}. The optimal CIP objective value for each of these networks is shown by the greed dashed curve, and the objective value of the solution returned by Algorithm \ref{alg:iterative} and the potential-theoretic algorithm are depicted by the red and blue curves, respectively. As it can be seen from Figure \ref{fig:complete}, in most of the instances, Algorithm \ref{alg:iterative} outperforms the potential-theoretic algorithm, and its overall performance is very close to that of the optimal algorithm. In particular, the gap between the optimal curve and the potential-theoretic algorithm shows the suboptimality of this algorithm that was erroneously argued in \cite{khanafer2015robust} to be the globally optimal algorithm.

\begin{figure}[t]
\vspace{3.7cm}
\begin{center}
\includegraphics[totalheight=.3\textheight,
width=.45\textwidth,viewport=270 0 820 550]{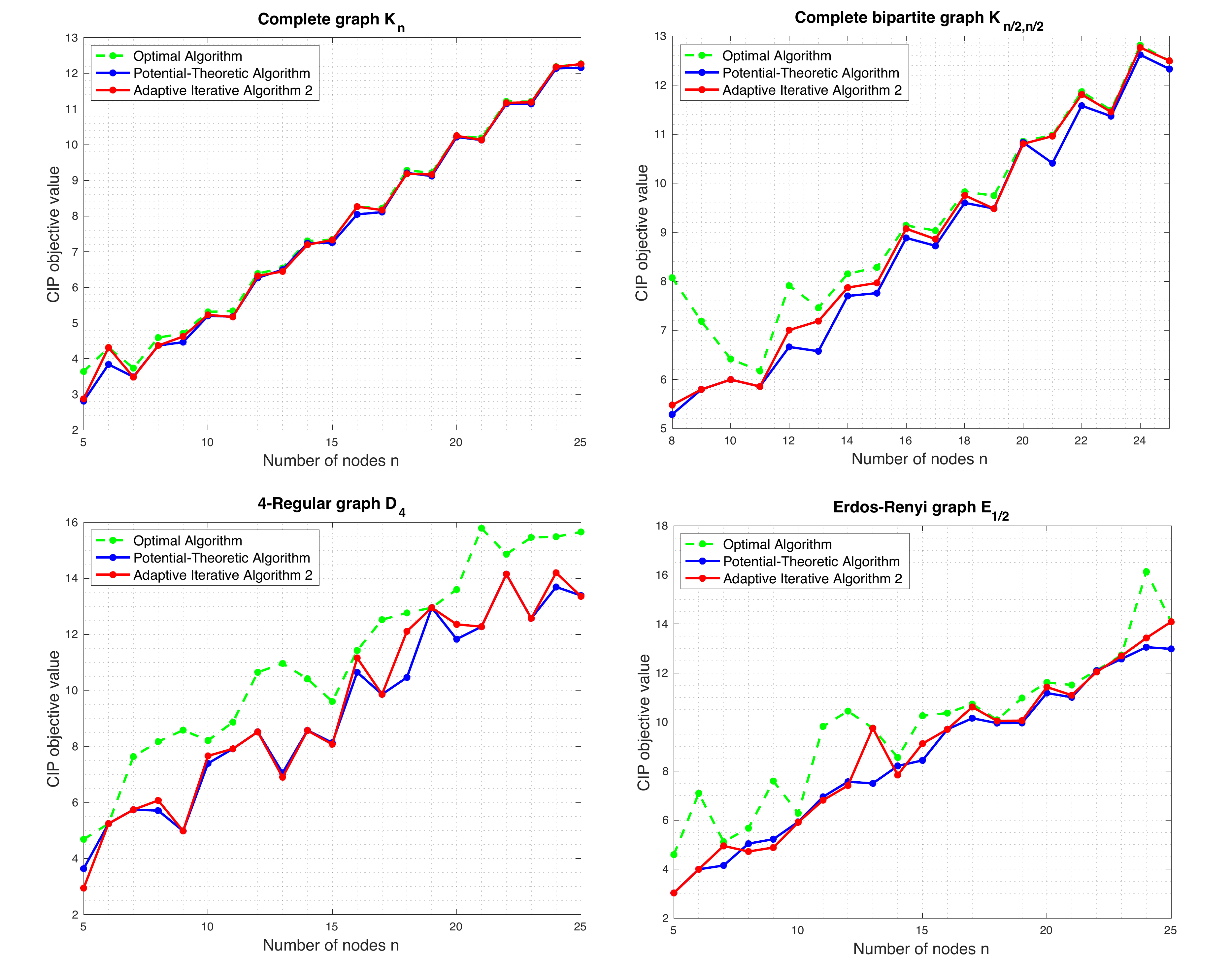} \hspace{0.4in}
\end{center}\vspace{-0.3cm}
\caption{Comparison between the performance of the optimal algorithm, Algorithm \ref{alg:iterative}, and the potential-theoretic algorithm for various network structures.}\label{fig:complete}
\end{figure}

In Figure \ref{fig:iterate} (left-side), we have compared the performance of Algorithm \ref{alg:iterative} and the potential-theoretic algorithm for a fixed number of nodes and different edge budgets. We have simulated the performance of those algorithms for the networks $E_{\frac{1}{2}}$ and $K_n$ with $n=70$ nodes while changing the edge budget from $\ell=1$ to $\ell=25$. As it can be seen, Algorithm \ref{alg:iterative} achieves a substantially better objective value than the potential-theoretic algorithm. Finally, in Figure \ref{fig:iterate} (right-side), we have evaluated the maximum number of iterations before Algorithm \ref{alg:iterative} terminates. To that end, we have fixed the edge budget to $\ell=20$ and increased the number of nodes from $n=50$ to $n=150$ for three different networks $K_n$, $K_{\frac{n}{2},\frac{n}{2}}$, and $E_{\frac{1}{2}}$. Moreover, for each fixed value of $n$, we have repeated our simulations for $t=200$ rounds and reported the maximum number of iterations over those $t=200$ instances. As it can be seen, the worst-case running time of Algorithm \ref{alg:iterative} is very small and does not exceed $\tau=22$ over all the instances. This shows that Algorithm \ref{alg:iterative} indeed converges very fast to a first-order stationary solution for the CIP even on large size networks.

\begin{figure}[t]
\vspace{-1.75cm}
\begin{center}
\includegraphics[totalheight=.3\textheight,
width=.45\textwidth,viewport=270 0 820 550]{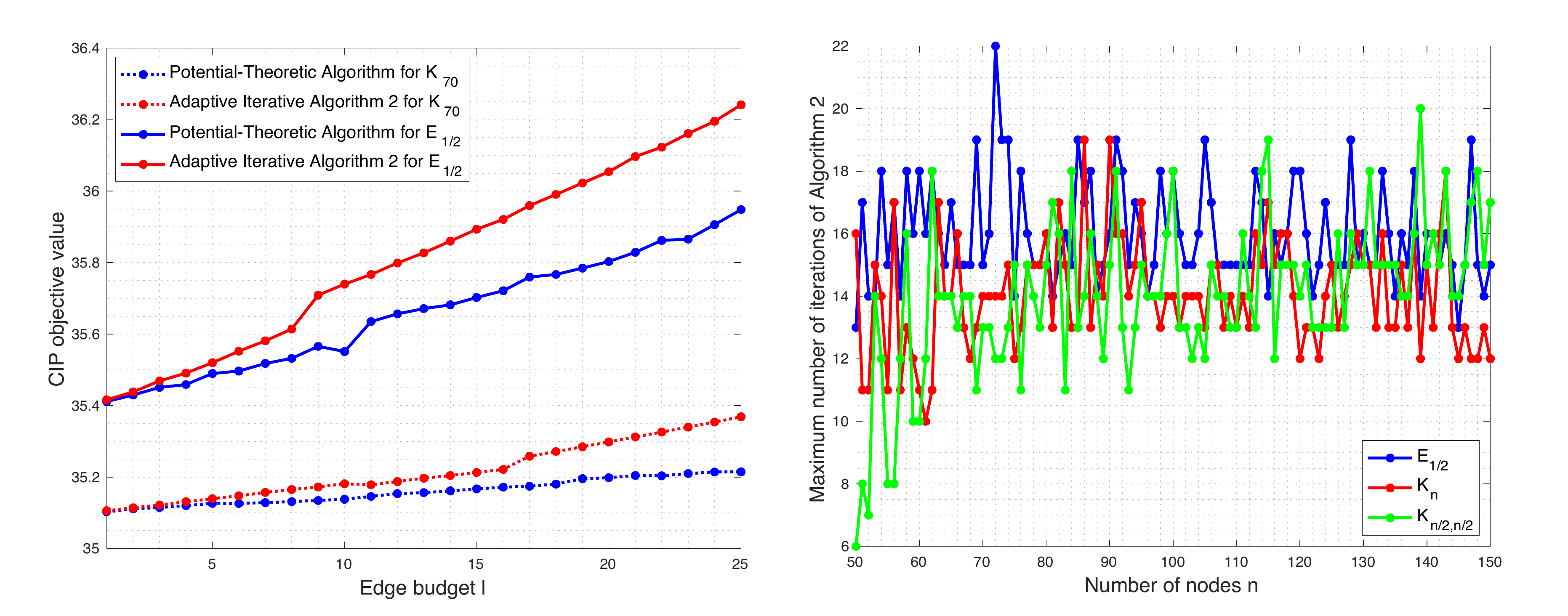} \hspace{0.4in}
\end{center}\vspace{-0.5cm}
\caption{The left figure shows the outperformance of Algorithm \ref{alg:iterative} over the potential-theoretic algorithm for different values of edge budget $\ell$. The right figure illustrates the worst-case iteration complexity of Algorithm \ref{alg:iterative} over different network structures.}\label{fig:iterate}
\end{figure}

\section{Conclusion}\label{sec:conclusion}
In this paper, we studied the consensus interdiction problem, where the goal is to maximize the convergence time of the consensus dynamics by removing a limited number of network edges.  By making a connection to the effective resistance interdiction problem, we showed that finding the optimal set of edges to interdict is strongly NP-hard, even on bipartite networks of diameter three with protected source/sink edges. We then showed that the effective resistance and the consensus interdiction problems are hard to approximate within an almost polynomial factor. Moreover, we devised a polynomial factor approximation algorithm for the effective resistance interdiction problem. Finally, we used a quadratic program to develop an iterative approximation algorithm for the consensus interdiction problem, and evaluate its good performance through numerical experiments.

\smallskip
\noindent
{\bf Acknowledgment:} I would like to thank Dr. Lap Chi Lau for pointing out a relevant reference on the problem. Also, I would like to thank anonymous reviewers for their helpful comments that allowed us to improve the paper. An earlier version of this paper has appeared in the 2021 American Control Conference (ACC) and is listed in \cite{etesami2021consensus}.

\bibliographystyle{IEEEtran}
\bibliography{thesisrefs}

\begin{thebibliography}{10}
\providecommand{\url}[1]{#1}
\csname url@samestyle\endcsname
\providecommand{\newblock}{\relax}
\providecommand{\bibinfo}[2]{#2}
\providecommand{\BIBentrySTDinterwordspacing}{\spaceskip=0pt\relax}
\providecommand{\BIBentryALTinterwordstretchfactor}{4}
\providecommand{\BIBentryALTinterwordspacing}{\spaceskip=\fontdimen2\font plus
\BIBentryALTinterwordstretchfactor\fontdimen3\font minus
  \fontdimen4\font\relax}
\providecommand{\BIBforeignlanguage}[2]{{%
\expandafter\ifx\csname l@#1\endcsname\relax
\typeout{** WARNING: IEEEtran.bst: No hyphenation pattern has been}%
\typeout{** loaded for the language `#1'. Using the pattern for}%
\typeout{** the default language instead.}%
\else
\language=\csname l@#1\endcsname
\fi
#2}}
\providecommand{\BIBdecl}{\relax}
\BIBdecl

\bibitem{olfati2004consensus}
R.~Olfati-Saber and R.~M. Murray, ``Consensus problems in networks of agents
  with switching topology and time-delays,'' \emph{IEEE Transactions on
  Automatic Control}, vol.~49, no.~9, pp. 1520--1533, 2004.

\bibitem{nedic2010convergence}
A.~Nedi{\'c} and A.~Ozdaglar, ``Convergence rate for consensus with delays,''
  \emph{Journal of Global Optimization}, vol.~47, no.~3, pp. 437--456, 2010.

\bibitem{xiao2007distributed}
L.~Xiao, S.~Boyd, and S.-J. Kim, ``Distributed average consensus with
  least-mean-square deviation,'' \emph{Journal of Parallel and Distributed
  Computing}, vol.~67, no.~1, pp. 33--46, 2007.

\bibitem{etesami2016convergence}
S.~R. Etesami and T.~Ba\c{s}ar, ``Convergence time for unbiased quantized
  consensus over static and dynamic networks,'' \emph{IEEE Transactions on
  Automatic Control}, vol.~61, no.~2, pp. 443--455, 2016.

\bibitem{bacsar2016convergence}
T.~Ba{\c{s}}ar, S.~R. Etesami, and A.~Olshevsky, ``Convergence time of
  quantized {M}etropolis consensus over time-varying networks,'' \emph{IEEE
  Transactions on Automatic Control}, vol.~61, no.~12, pp. 4048--4054, 2016.

\bibitem{kar2009distributed}
S.~Kar and J.~M. Moura, ``Distributed consensus algorithms in sensor networks:
  Quantized data and random link failures,'' \emph{IEEE Transactions on Signal
  Processing}, vol.~58, no.~3, pp. 1383--1400, 2009.

\bibitem{alpcan2010network}
T.~Alpcan and T.~Ba{\c{s}}ar, \emph{Network {S}ecurity: {A} {D}ecision and
  {G}ame-theoretic {A}pproach}.\hskip 1em plus 0.5em minus 0.4em\relax
  Cambridge University Press, 2010.

\bibitem{khanafer2015robust}
A.~Khanafer and T.~Ba{\c{s}}ar, ``Robust distributed averaging: {W}hen are
  potential-theoretic strategies optimal?'' \emph{IEEE Transactions on
  Automatic Control}, vol.~61, no.~7, pp. 1767--1779, 2015.

\bibitem{pasqualetti2011consensus}
F.~Pasqualetti, A.~Bicchi, and F.~Bullo, ``Consensus computation in unreliable
  networks: A system theoretic approach,'' \emph{IEEE Transactions on Automatic
  Control}, vol.~57, no.~1, pp. 90--104, 2011.

\bibitem{nedic2018multi}
A.~Nedi{\'c}, J.-S. Pang, G.~Scutari, and Y.~Sun, \emph{Multi-agent
  Optimization: Cetraro, Italy 2014}.\hskip 1em plus 0.5em minus 0.4em\relax
  Springer, 2018, vol. 2224.

\bibitem{nedic2010constrained}
A.~Nedic, A.~Ozdaglar, and P.~A. Parrilo, ``Constrained consensus and
  optimization in multi-agent networks,'' \emph{IEEE Transactions on Automatic
  Control}, vol.~55, no.~4, pp. 922--938, 2010.

\bibitem{schwager2008consensus}
M.~Schwager, J.-J. Slotine, and D.~Rus, ``Consensus learning for distributed
  coverage control,'' in \emph{2008 IEEE International Conference on Robotics
  and Automation}.\hskip 1em plus 0.5em minus 0.4em\relax IEEE, 2008, pp.
  1042--1048.

\bibitem{olfati2006flocking}
R.~Olfati-Saber, ``Flocking for multi-agent dynamic systems: Algorithms and
  theory,'' \emph{IEEE Transactions on Automatic Control}, vol.~51, no.~3, pp.
  401--420, 2006.

\bibitem{carli2008distributed}
R.~Carli, A.~Chiuso, L.~Schenato, and S.~Zampieri, ``Distributed {K}alman
  filtering based on consensus strategies,'' \emph{IEEE Journal on Selected
  Areas in Communications}, vol.~26, no.~4, pp. 622--633, 2008.

\bibitem{nedic2009distributed}
A.~Nedi\'c and A.~Ozdaglar, ``Distributed subgradient methods for multi-agent
  optimization,'' \emph{IEEE Transactions on Automatic Control}, vol.~54,
  no.~1, pp. 48--61, 2009.

\bibitem{teixeira2010networked}
A.~Teixeira, H.~Sandberg, and K.~H. Johansson, ``Networked control systems
  under cyber attacks with applications to power networks,'' in
  \emph{Proceedings of the 2010 American Control Conference}.\hskip 1em plus
  0.5em minus 0.4em\relax IEEE, 2010, pp. 3690--3696.

\bibitem{yan2019rumor}
R.~Yan, Y.~Li, W.~Wu, D.~Li, and Y.~Wang, ``Rumor blocking through online link
  deletion on social networks,'' \emph{ACM Transactions on Knowledge Discovery
  from Data (TKDD)}, vol.~13, no.~2, pp. 1--26, 2019.

\bibitem{jia2020blocking}
F.~Jia, K.~Zhou, C.~Kamhoua, and Y.~Vorobeychik, ``Blocking adversarial
  influence in social networks,'' in \emph{International Conference on Decision
  and Game Theory for Security}.\hskip 1em plus 0.5em minus 0.4em\relax
  Springer, 2020, pp. 257--276.

\bibitem{zareie2021minimizing}
A.~Zareie and R.~Sakellariou, ``Minimizing the spread of misinformation in
  online social networks: {A} survey,'' \emph{Journal of Network and Computer
  Applications}, p. 103094, 2021.

\bibitem{pare2018analysis}
P.~E. Par{\'e}, J.~Liu, C.~L. Beck, B.~E. Kirwan, and T.~Ba{\c{s}}ar,
  ``Analysis, estimation, and validation of discrete-time epidemic processes,''
  \emph{IEEE Transactions on Control Systems Technology}, vol.~28, no.~1, pp.
  79--93, 2018.

\bibitem{lovisari2013resistance}
E.~Lovisari, F.~Garin, and S.~Zampieri, ``Resistance-based performance analysis
  of the consensus algorithm over geometric graphs,'' \emph{SIAM Journal on
  Control and Optimization}, vol.~51, no.~5, pp. 3918--3945, 2013.

\bibitem{smith2020survey}
J.~C. Smith and Y.~Song, ``A survey of network interdiction models and
  algorithms,'' \emph{European Journal of Operational Research}, vol. 283,
  no.~3, pp. 797--811, 2020.

\bibitem{zenklusen2010matching}
R.~Zenklusen, ``Matching interdiction,'' \emph{Discrete Applied Mathematics},
  vol. 158, no.~15, pp. 1676--1690, 2010.

\bibitem{zenklusen2014connectivity}
------, ``Connectivity interdiction,'' \emph{Operations Research Letters},
  vol.~42, no. 6-7, pp. 450--454, 2014.

\bibitem{wood1993deterministic}
R.~K. Wood, ``Deterministic network interdiction,'' \emph{Mathematical and
  Computer Modelling}, vol.~17, no.~2, pp. 1--18, 1993.

\bibitem{chestnut2017hardness}
S.~R. Chestnut and R.~Zenklusen, ``Hardness and approximation for network flow
  interdiction,'' \emph{Networks}, vol.~69, no.~4, pp. 378--387, 2017.

\bibitem{zenklusen20151}
R.~Zenklusen, ``An $o(1)$-approximation for minimum spanning tree
  interdiction,'' in \emph{2015 IEEE 56th Annual Symposium on Foundations of
  Computer Science}.\hskip 1em plus 0.5em minus 0.4em\relax IEEE, 2015, pp.
  709--728.

\bibitem{bar1995complexity}
A.~Bar-Noy, S.~Khuller, and B.~Schieber, ``The complexity of finding most vital
  arcs and nodes,'' \emph{TRCS-TR-3539, Institute for Advanced Studies,
  University of Maryland, College Park, MD}, 1995.

\bibitem{karp1972reducibility}
R.~M. Karp, ``Reducibility among combinatorial problems,'' in \emph{Complexity
  of Computer Computations}.\hskip 1em plus 0.5em minus 0.4em\relax Springer,
  1972, pp. 85--103.

\bibitem{manurangsi2017almost}
P.~Manurangsi, ``Almost-polynomial ratio {ETH}-hardness of approximating
  densest $k$-subgraph,'' in \emph{Proceedings of the 49th Annual ACM SIGACT
  Symposium on Theory of Computing}, 2017, pp. 954--961.

\bibitem{ball1989finding}
M.~O. Ball, B.~L. Golden, and R.~V. Vohra, ``Finding the most vital arcs in a
  network,'' \emph{Operations Research Letters}, vol.~8, no.~2, pp. 73--76,
  1989.

\bibitem{khachiyan2008short}
L.~Khachiyan, E.~Boros, K.~Borys, K.~Elbassioni, V.~Gurvich, G.~Rudolf, and
  J.~Zhao, ``On short paths interdiction problems: {T}otal and node-wise
  limited interdiction,'' \emph{Theory of Computing Systems}, vol.~43, no.~2,
  pp. 204--233, 2008.

\bibitem{schieber1995complexity}
B.~Schieber, A.~Bar-Noy, and S.~Khuller, ``The complexity of finding most vital
  arcs and nodes,'' 1995.

\bibitem{toker1996complexity}
O.~Toker and H.~{\"O}zbay, ``Complexity issues in robust stability of linear
  delay-differential systems,'' \emph{Mathematics of Control, Signals and
  Systems}, vol.~9, no.~4, pp. 386--400, 1996.

\bibitem{chan2019network}
P.~H. Chan, L.~C. Lau, A.~Schild, S.~C.-w. Wong, and H.~Zhou, ``Network design
  for s-t effective resistance,'' \emph{arXiv preprint arXiv:1904.03219}, 2019.

\bibitem{wang2014improving}
X.~Wang, E.~Pournaras, R.~E. Kooij, and P.~Van~Mieghem, ``Improving robustness
  of complex networks via the effective graph resistance,'' \emph{The European
  Physical Journal B}, vol.~87, no.~9, p. 221, 2014.

\bibitem{li2019maximizing}
H.~Li, S.~Patterson, Y.~Yi, and Z.~Zhang, ``Maximizing the number of spanning
  trees in a connected graph,'' \emph{IEEE Transactions on Information Theory},
  vol.~66, no.~2, pp. 1248--1260, 2019.

\bibitem{levin2017markov}
D.~A. Levin and Y.~Peres, \emph{Markov {C}hains and {M}ixing {T}imes}.\hskip
  1em plus 0.5em minus 0.4em\relax American Mathematical Soc., 2017, vol. 107.

\bibitem{boyd2006convex}
S.~Boyd, ``Convex optimization of graph {L}aplacian eigenvalues,'' in
  \emph{Proceedings of the International Congress of Mathematicians}, vol.~3,
  no. 1-3, 2006, pp. 1311--1319.

\bibitem{etesamipotential}
S.~R. Etesami, \emph{Potential-{B}ased {A}nalysis of {S}ocial, {C}ommunication,
  and {D}istributed {N}etworks}.\hskip 1em plus 0.5em minus 0.4em\relax
  Springer, 2017.

\bibitem{ghosh2008minimizing}
A.~Ghosh, S.~Boyd, and A.~Saberi, ``Minimizing effective resistance of a
  graph,'' \emph{SIAM Review}, vol.~50, no.~1, pp. 37--66, 2008.

\bibitem{khot2006ruling}
S.~Khot, ``Ruling out {PTAS} for graph min-bisection, dense $k$-subgraph, and
  bipartite clique,'' \emph{SIAM Journal on Computing}, vol.~36, no.~4, pp.
  1025--1071, 2006.

\bibitem{impagliazzo2001complexity}
R.~Impagliazzo and R.~Paturi, ``On the complexity of $k$-{SAT},'' \emph{Journal
  of Computer and System Sciences}, vol.~62, no.~2, pp. 367--375, 2001.

\bibitem{bhaskara2010detecting}
A.~Bhaskara, M.~Charikar, E.~Chlamtac, U.~Feige, and A.~Vijayaraghavan,
  ``Detecting high log-densities: {A}n {O}$(n^{1/4})$-approximation for densest
  $k$-subgraph,'' in \emph{Proceedings of the Forty-Second ACM Symposium on
  Theory of Computing}, 2010, pp. 201--210.

\bibitem{razaviyayn2013unified}
M.~Razaviyayn, M.~Hong, and Z.-Q. Luo, ``A unified convergence analysis of
  block successive minimization methods for nonsmooth optimization,''
  \emph{SIAM Journal on Optimization}, vol.~23, no.~2, pp. 1126--1153, 2013.

\bibitem{etesami2021consensus}
S.~R. Etesami, ``Consensus under network interruption and effective resistance
  interdiction,'' in \emph{2021 American Control Conference (ACC)}.\hskip 1em
  plus 0.5em minus 0.4em\relax IEEE, 2021, pp. 814--819.

\bibitem{khanafer2013robust}
A.~Khanafer, B.~Touri, and T.~Ba{\c{s}}ar, ``Robust distributed averaging on
  networks with adversarial intervention,'' in \emph{52nd IEEE Conference on
  Decision and Control}.\hskip 1em plus 0.5em minus 0.4em\relax IEEE, 2013, pp.
  7131--7136.

\bibitem{khanafer2012consensus}
------, ``Consensus in the presence of an adversary,'' \emph{IFAC Proceedings
  Volumes}, vol.~45, no.~26, pp. 276--281, 2012.

\end{thebibliography}

\section*{Appendix I: Auxiliary Lemmas}\label{sec:appx1}

\smallskip
\begin{lemma}\label{lemm:paralel}
Let $\mathcal{G}''$ be a graph that is obtained from $\mathcal{G}'$ by replacing each edge in $E_R$ by $\frac{n^5}{\delta}$ parallel edges\footnote{If parallel edges are not allowed, one can replace each parallel edge by a path of length two.} of resistance $\frac{n^5}{\delta}$. Then, for any optimal cut $|\hat{E}|\leq m$, we must have $\hat{E}\subseteq E_{L}\cup E_1$. 
\end{lemma}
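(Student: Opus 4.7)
The strategy is to show that spending any of the $m$-edge budget on the newly-added parallel edges in $E_R$ produces only a negligible increase in the $s$-$t$ effective resistance (on the order of $n^{-7}$), whereas spending that budget on $E_L \cup E_1$ can produce a substantially larger gain (on the order of $n^{-2}$). An optimal cut therefore cannot afford to waste any of its $m$ slots on parallel-bundle edges.

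First, I would bound the gain from parallel-bundle removals via Thomson's principle. Partition any feasible cut as $\hat{E} = \hat{E}_1 \sqcup \hat{E}_2$ with $\hat{E}_1 \subseteq E_L \cup E_1$ and $\hat{E}_2$ consisting of parallel-bundle edges in $E_R$. Starting from the optimal unit electrical $s$-$t$ flow on $\mathcal{G}'' \setminus \hat{E}_1$, I would reroute the flow within each bundle equally over its surviving parallel edges. Since each bundle contains $n^5/\delta$ parallel edges of resistance $n^5/\delta$ and at most $m = O(n^2)$ are removed while $\delta = n^{-4}$, the per-bundle energy scales by a factor of at most $\frac{n^5/\delta}{n^5/\delta - q_i} \leq 1 + O(m\delta/n^5)$, where $q_i$ is the number of edges removed from bundle $i$. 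Summing over bundles (using $\sum_i f_i^2 \leq 1$ for a unit flow) and invoking Thomson's principle yields
\begin{align*}
\mbox{R}_{\rm eff}(\mathcal{G}'' \setminus \hat{E}) \le \mbox{R}_{\rm eff}(\mathcal{G}'' \setminus \hat{E}_1) + O(m\delta/n^5) = \mbox{R}_{\rm eff}(\mathcal{G}'' \setminus \hat{E}_1) + O(n^{-7}).
\end{align*}

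Next, I would exhibit a cut $\hat{E}^{\circ} \subseteq E_L \cup E_1$ whose gain strictly dominates the above slack. Taking $\hat{E}^{\circ} = \hat{E}_{\{i^*\}}$ in the notation of \eqref{eq:E-S} for a minimum-degree vertex $i^* \in V$, we have $|\hat{E}^{\circ}| = \deg(i^*) \leq m$, and the analog of \eqref{eq:approx-Es} on $\mathcal{G}''$ gives $\mbox{R}_{\rm eff}(\mathcal{G}'' \setminus \hat{E}^{\circ}) \geq \tfrac{1}{n-1} \geq \tfrac{1}{n} + \Omega(n^{-2})$, whereas $\mbox{R}_{\rm eff}(\mathcal{G}'') \leq \tfrac{1}{n} + O(\delta)$. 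Hence any cut in $E_L \cup E_1$ that subsumes $\hat{E}^{\circ}$ achieves an $\Omega(n^{-2})$ gain over the no-cut baseline, which dwarfs the $O(n^{-7})$ yield from any parallel-bundle manipulation. To close the argument, suppose an optimal cut $\hat{E}^*$ contains a parallel-bundle edge, and let $\hat{E}_1^* = \hat{E}^* \cap (E_L \cup E_1)$, so $|\hat{E}_1^*| \leq m - 1$. By Step 1, $\mbox{R}_{\rm eff}(\mathcal{G}'' \setminus \hat{E}_1^*) \geq \mbox{R}_{\rm eff}(\mathcal{G}'' \setminus \hat{E}^*) - O(n^{-7})$. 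Producing an edge $e \in E_L \cup E_1 \setminus \hat{E}_1^*$ whose removal from $\mathcal{G}'' \setminus \hat{E}_1^*$ raises the resistance by more than $O(n^{-7})$ then gives $\hat{E}^{**} = \hat{E}_1^* \cup \{e\} \subseteq E_L \cup E_1$ of size $\leq m$ with $\mbox{R}_{\rm eff}(\mathcal{G}'' \setminus \hat{E}^{**}) > \mbox{R}_{\rm eff}(\mathcal{G}'' \setminus \hat{E}^*)$, contradicting optimality.

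The main obstacle is this last existence claim: locating a concrete edge $e$ whose incremental gain exceeds the $O(n^{-7})$ slack. One route is an averaging argument on the unit electrical flow in $\mathcal{G}'' \setminus \hat{E}_1^*$, which must route $\Omega(1)$ total current through the $O(n)$ surviving $v_i$'s; this forces at least one $E_L$ or $E_1$ edge to carry $\Omega(1/n^2)$ of current, and the Dirichlet-energy-of-a-single-edge calculation then yields a resistance jump of $\Omega(1/n^4) \gg O(n^{-7})$. A cleaner alternative is to dichotomize on whether $\hat{E}^{\circ} \setminus \hat{E}_1^*$ is empty: if not, any edge from this set serves as the required $e$; if so, then $\hat{E}_1^* \supseteq \hat{E}^{\circ}$ already delivers the $\Omega(n^{-2})$ gain and is itself a cut of size $\leq m - 1$ in $E_L \cup E_1$, so any further augmentation within $E_L \cup E_1$ completes the contradiction.
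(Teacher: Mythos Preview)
Your Thomson-principle rerouting bound in Step~1 is correct: removing up to $m$ parallel edges from the $E_R$ bundles can raise the effective resistance by at most $O(m\delta/n^5)=O(n^{-7})$. The gap is in Step~2. You need a \emph{single} edge $e\in E_L\cup E_1$ whose removal from $\mathcal{G}''\setminus\hat E_1^*$ gains strictly more than $O(n^{-7})$, but no such edge is guaranteed to exist. An edge in $E_L\cup E_1$ has resistance $\delta=n^{-4}$; if it carries current $f_e=\Omega(1/n^2)$ as your averaging argument gives, its dissipated energy is $r_e f_e^2=\Omega(n^{-8})$, not $\Omega(n^{-4})$, and the Sherman--Morrison formula for edge deletion shows the resistance jump is of the same order $\Theta(n^{-8})$ (this matches the paper's lower bound $\Omega(\delta/m^2)=\Omega(n^{-8})$). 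Since $n^{-8}\ll n^{-7}$, your single replacement cannot beat the batched loss, and the dichotomy on $\hat E^{\circ}$ runs into exactly the same shortfall in both branches.

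The paper avoids this mismatch by doing a \emph{one-for-one} swap rather than a batched comparison: return \emph{one} parallel edge (loss at most $\frac{n^5/\delta}{p}-\frac{n^5/\delta}{p+1}\le \frac{4\delta}{n^5}=O(n^{-9})$, since each bundle still has $p\ge n^5/(2\delta)$ edges) and break \emph{one} edge $\{s,v_{ij}\}\in E_L$ with $v_{ij}$ still linked to $V_R$ (gain $\Omega(\delta/m^2)=\Omega(n^{-8})$). Now the orders line up ($n^{-9}\ll n^{-8}$) and optimality is contradicted directly. Your framework becomes correct if you either (i) adopt this per-edge swap, or (ii) keep the batched bound but replace all $k=|\hat E_2^*|$ parallel edges by $k$ edges in $E_L\cup E_1$, arguing a cumulative gain of $\Omega(k\,n^{-8})$ against a loss of $O(k\,n^{-9})$; option (i) is considerably cleaner.
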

\begin{proof}
To derive a contradiction, let us assume that the optimal edge cut $\hat{E}$ removes $k\ge 1$ parallel edges from $\mathcal{G}''$. Then, by returning any one of those broken edges, we can reduce the $s-t$ effective resistance by at most $O(\frac{4\delta}{n^5})$. The reason is that, if a broken edge belongs to a branch $(v_i, t)$ containing at least $p\ge \frac{n^5}{\delta}-m\ge \frac{n^5}{2\delta}$ parallel edges of resistance $\frac{n^5}{\delta}$, then, returning that edge can reduce the $s-t$ effective resistance by at most 
\begin{align}\nonumber
\frac{\frac{n^5}{\delta}}{p}-\frac{\frac{n^5}{\delta}}{p+1}\leq \frac{n^5}{\delta p^2}\leq \frac{4\delta}{n^5}.
\end{align}

On the other hand, if we instead remove from $\mathcal{G}''$ one of the (at most $m$) edges $\{s, v_{ij}\}\in E_L$ with either $\{v_{ij},v_i\}\notin \hat{E}$ or $\{v_{ij},v_j\}\notin \hat{E}$, then the $s-t$ effective resistance increases by at least $\Omega(\frac{\delta}{m-1}-\frac{\delta}{m})=\Omega(\frac{\delta}{n^4})$. 
Thus, by returning a parallel edge in $\hat{E}$ and instead breaking an edge in $E_L$, the $s-t$ effective resistance strictly increases by $\Omega(\frac{\delta}{n^4})-\frac{4\delta}{n^5}>0$, which contradicts the optimality of the edge cut $\hat{E}$.
\end{proof} 

\smallskip
The following lemma establishes relation \eqref{eq:approx-effective-epsilon} given in the proof of Theorem \ref{thm:approximability}.

\begin{lemma}\label{lemm-hardness}
For any edge cut $\hat{E}$, one can find in polynomial time an edge cut of the form $\hat{E}_{S}$ given in \eqref{eq:E-S} such that $|\hat{E}_S|\leq \hat{E}$ and $\boldsymbol{\mbox{{\rm R}}}_{\rm eff}(\mathcal{G}'\setminus \hat{E}_S)\ge \boldsymbol{\mbox{{\rm R}}}_{\rm eff}(\mathcal{G}'\setminus \hat{E})-4\delta$.
\end{lemma}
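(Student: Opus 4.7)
Given an arbitrary cut $\hat{E}\subseteq E_L\cup E_1$, I would set
\begin{align*}
S=\{i\in V:\ \text{for every } j \text{ with } \{i,j\}\in E,\ \{s,v_{ij}\}\in\hat{E} \text{ or } \{v_i,v_{ij}\}\in\hat{E}\},
\end{align*}
so that $S$ collects exactly those indices whose length-two $s$-$v_i$ paths through the $v_{ij}$ gadget are all severed by $\hat{E}$. This set is clearly computable in $O(nm)$ time and determines $\hat{E}_S$ via \eqref{eq:E-S}. It then suffices to verify (a) $|\hat{E}_S|\le|\hat{E}|$ and (b) $\boldsymbol{\mbox{R}}_{\rm eff}(\mathcal{G}'\setminus\hat{E}_S)\ge \boldsymbol{\mbox{R}}_{\rm eff}(\mathcal{G}'\setminus\hat{E})-4\delta$.

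For (a), I would exhibit an injection $\phi:\hat{E}_S\to\hat{E}$ built from the witness edges supplied by the definition of $S$. For a type-one element $\{s,v_{ij}\}\in\hat{E}_S$ (with $i,j\in S$), map it to $\{s,v_{ij}\}$ if this lies in $\hat{E}$, otherwise to $\{v_i,v_{ij}\}$ (which must then lie in $\hat{E}$ by the certificate at $i$). For a type-two element $\{v_i,v_{ij}\}\in\hat{E}_S$ (with $i\in S$, $j\notin S$), map it to $\{v_i,v_{ij}\}$ if in $\hat{E}$, otherwise to $\{s,v_{ij}\}$. Every image edge is incident to a unique $v_{ij}$, and the three possible preimages $\{s,v_{ij}\}$, $\{v_i,v_{ij}\}$, $\{v_j,v_{ij}\}$ on that fork require $\{i,j\}$ to lie in $E[S]$, in $\delta(S)$ with $i\in S$, or in $\delta(S)$ with $j\in S$, respectively -- mutually exclusive conditions -- so at most one preimage can coexist in $\hat{E}_S$, giving injectivity.

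For (b), combine a flow-based upper bound on $\boldsymbol{\mbox{R}}_{\rm eff}(\mathcal{G}'\setminus\hat{E})$ with the lower bound in \eqref{eq:approx-Es}. By construction, for every $i\notin S$ there is some $j(i)$ such that both $\{s,v_{ij(i)}\}$ and $\{v_i,v_{ij(i)}\}$ survive in $\mathcal{G}'\setminus\hat{E}$, yielding an $s$-$t$ path through $v_{ij(i)}$ and $v_i$ of total resistance $1+2\delta$. Sending $1/|V\setminus S|$ units of flow along each such path for $i\in V\setminus S$ gives a valid unit $s$-$t$ flow, so by Thomson's principle (Lemma~\ref{lemm:thomson}) we obtain $\boldsymbol{\mbox{R}}_{\rm eff}(\mathcal{G}'\setminus\hat{E})\le(1+2\delta)/|V\setminus S|\le 1/|V\setminus S|+2\delta$. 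On the other hand, the lower bound in \eqref{eq:approx-Es} gives $\boldsymbol{\mbox{R}}_{\rm eff}(\mathcal{G}'\setminus\hat{E}_S)\ge 1/|V\setminus S|$. Subtracting delivers the claim with slack $2\delta$, comfortably below the stated $4\delta$.

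The main obstacle is arranging for a single definition of $S$ to discharge both duties simultaneously: it must furnish a distinct witness in $\hat{E}$ for every element of $\hat{E}_S$ (for the injection) and guarantee surviving length-two paths from $s$ to every $v_i$ with $i\notin S$ (for the Thomson bound). Longer surviving $s$-$v_i$ paths to vertices with $i\in S$ in $\mathcal{G}'\setminus\hat{E}$ are irrelevant because the Thomson upper bound uses only the short paths indexed by $V\setminus S$; and the degenerate case $V\setminus S=\emptyset$ is harmless, since then $\hat{E}_S=E_L$ disconnects $s$ from $t$, both resistances are governed by trivial considerations, and the injection still certifies $|\hat{E}_S|\le|\hat{E}|$.
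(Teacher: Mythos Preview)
Your argument is correct and in fact cleaner than the paper's, though the two proofs ultimately pin down the same set $S$. A small caveat: in part (b), the paths $s\to v_{ij(i)}\to v_i\to t$ need not be edge-disjoint, since two indices $i,i'\in V\setminus S$ that are neighbours in $\mathcal{G}$ may choose each other as witnesses and then share the edge $\{s,v_{ii'}\}\in E_L$. Your energy computation $(1+2\delta)/|V\setminus S|$ tacitly assumes disjointness. The fix is immediate: each $E_L$ edge carries flow at most $2/|V\setminus S|$, so by convexity the $E_L$ contribution is at most $2\delta/|V\setminus S|$ rather than $\delta/|V\setminus S|$, giving the bound $(1+3\delta)/|V\setminus S|$, still well within the $4\delta$ slack.

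By way of comparison, the paper takes a procedural route. It defines the same $S$ (as the complement of those $v_i$ reachable from $s$ by a length-two path in $\mathcal{G}'\setminus\hat{E}$), but then \emph{transforms} $\hat{E}$ into $\hat{E}_S$ through a sequence of local edge swaps: vertices $v_{ij}$ whose $s$-edge is cut but which have a neighbour in $\bar{S}$ are processed one at a time, unbreaking the $s$-edge and possibly breaking the edge into $S$; a final cleanup pass unbreaks all remaining edges within $[\bar{S},\bar{N}]$ and $[S,N]$. The cardinality bound falls out because each swap is nonincreasing in $|\hat{E}|$, and the resistance bound is obtained by arguing that every swap either increases the resistance or creates a new short path that can be absorbed into an existing $2\delta$-path via a shortcut argument, with Thomson's principle bounding the total loss by $2\delta$ at each of two stages. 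Your approach bypasses this bookkeeping entirely: the injection $\phi$ handles the cardinality in one shot, and the single Thomson flow supported on the witness paths handles the resistance. The trade-off is that the paper's argument is more ``operational'' (it tells you how to edit $\hat{E}$), whereas yours is more structural and yields a tighter constant.
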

\begin{proof}
Let us consider an arbitrary edge cut $\hat{E}$ in $\mathcal{G}'$, and partition the vertices $V_L$ into $\bar{N}$ and $N=V_{L}\setminus \bar{N}$. Here, $\bar{N}$ contains all the vertices in $V_L$ that are adjacent to the source node $s$ after interdiction (i.e., in the network $\mathcal{G}'\setminus \hat{E}$). Moreover, we let $\bar{S}\subseteq V_R$ contain all the nodes that have at least one neighbor in $\bar{N}$ after interdiction, and $S=V_R\setminus \bar{S}$.

Next, we modify $\hat{E}$ to $\hat{E}_S$ as follows. In the network $\mathcal{G}'$, let $M=\big\{v_{ij}\in N: v_j\in \bar{S}\big\}$ denote all the nodes in $N$ with at least one end point $v_j\in \bar{S}$. At each time we take one vertex $v_{ij}\in M$ and move it into $\bar{N}$ by processing it as follows. We first unbreak all the edges adjacent to $v_{ij}$ by setting 
$\hat{E}\leftarrow \hat{E}\setminus \{\{s,v_{ij}\},\{v_i,v_{ij}\},\{v_j,v_{ij}\}\}$. We then check whether $v_i\in S$, in which case we break the edge $\{v_{ij},v_i\}$ by setting $\hat{E}\leftarrow \hat{E}\cup \{v_{ij},v_i\}$. Note that the above process does not increase $|\hat{E}|$, as each time we unbreak at least one edge,  that is $\{s,v_{ij}\}$. After processing all the nodes in $M$ we obtain a new edge cut $\hat{E}$ with corresponding partitions $S$, $\bar{S}$, $N=N\setminus M$, $\bar{N}=\bar{N}\cup M$. In the final stage, we update $\hat{E}$ by unbreaking any edge that lies between $S$ and $N$ or between $\bar{S}$ and $\bar{N}$, i.e., we set $\hat{E}\leftarrow \hat{E}\setminus(E'[S,N]\cup E'[\bar{S},\bar{N}])$. It is easy to see that after the above modifications, the resulting edge cut equals $\hat{E}_{S}$, and moreover, $N$ contains all the edges in $\mathcal{G}'$ with both endpoints in $S$.

Any time that a vertex $v_{ij}\in M$ is processed, either an edge is broken (in which case the effective resistance increases) or at most two paths of length two between $v_j\in \bar{S}$ and $s$ are created. The key point is that for any $v_j\in \bar{S}$, there exists a small $2\delta$-resistance path $v_j\to v_{kj}\to s$ for some $v_{kj}\in \bar{N}$. Thus, any time that a new path between $s$ and some $v_j$ is created, we can ignore that path and instead upper-bound the decrease in the $s-t$ effective resistance by setting the resistance of the edges in the path $v_j\to v_{kj}\to s$ to $0$ (equivalently, shortcut $s$ and $v_j$). Thus, after processing all the nodes in $M$, the decrease in the $s-t$ effective resistance is at most the difference between the $s-t$ effective resistance of two identical networks (i.e., $\mathcal{G}'\setminus \hat{E}$ after removing all the nodes in $M$) where in one of them a subset of $\delta$-resistance edges are replaced by $0$-resistance edges. Using Thomson's principle, the difference between the $s-t$ effective resistance of those networks can be at most $2\delta$. 

Finally, after processing all the nodes in $M$ and moving them to $\bar{N}$, there is no edge between $N$ and $\bar{S}$ (or between $\bar{N}$ and $S$) in the interdicted network $\mathcal{G}'\setminus \hat{E}$. Therefore, addition of any edge between $S$ and $N$ during the final stage has no effect on the $s-t$ effective resistance. Moreover, for any edge $\{v_{ij},v_j\}, v_j\in \bar{S}, v_{ij}\in \bar{N}$ that may be added during the final stage, there existed a path $v_{j}\to v_{jk}\to s$ of resistance $2\delta$ between $v_j$ and $s$. Thus, using a similar argument as above, any time that such an edge $\{v_{ij},v_j\}$ is added, we can ignore it and instead shortcut the path $v_{j}\to v_{jk}\to s$. Using Thomson's principle, the total reduction in the $s-t$ effective resistance due to such shortcuttings can again be at most $2\delta$, which completes the proof. 
\end{proof}

\begin{lemma}\label{lemm:concave}
For any fixed vector $u\in \mathbb{R}^{n}$, the objective function $u'\big(\mathcal{L}(y)+\frac{J}{n}\big)u$ is a concave function of $y$, where $\mathcal{L}(y)=I-[P(y)]^2$.
\end{lemma}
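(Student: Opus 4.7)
The plan is to reduce the claim to a simple observation: the squared Euclidean norm of an affine function is convex. The function being tested for concavity is
\[
u'\bigl(\mathcal{L}(y)+\tfrac{J}{n}\bigr)u = u'u + \tfrac{1}{n}(u'\boldsymbol{1})^2 - u'[P(y)]^2u,
\]
where the first two terms are constants with respect to $y$. Thus it suffices to show that $y \mapsto u'[P(y)]^2 u$ is convex.

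First I would observe that $P(y)$ depends affinely on $y$. Indeed, by \eqref{eq:p-y}, the off-diagonal entry $[P(y)]_{ij} = p_{ij} y_{ij}$ is linear in $y$, while $[P(y)]_{ii} = 1 - \sum_{k \neq i} p_{ik} y_{ik}$ is affine in $y$. Because we are interpreting $y$ symmetrically ($y_{ij}=y_{ji}$), the matrix $P(y)$ is symmetric. Consequently, for any fixed vector $u \in \mathbb{R}^n$, the map $y \mapsto P(y)u$ is affine, with value in $\mathbb{R}^n$.

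Next, using the symmetry $P(y)=P(y)'$, we can rewrite
\[
u'[P(y)]^2 u = u'P(y)'P(y)u = \bigl\|P(y)u\bigr\|^2.
\]
Since the Euclidean norm squared is a convex function on $\mathbb{R}^n$, and composition of a convex function with an affine map preserves convexity, the map $y \mapsto \|P(y)u\|^2$ is convex. Therefore $-u'[P(y)]^2u$ is concave in $y$, and adding the $y$-independent terms $u'u + \tfrac{1}{n}(u'\boldsymbol{1})^2$ preserves concavity, completing the proof.

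There is no real obstacle here beyond carefully exploiting the fact that $P(y)$ is both affine in $y$ and symmetric; the symmetry is what lets us turn the quadratic form $u'[P(y)]^2u$ into an honest squared norm, after which convexity is automatic.
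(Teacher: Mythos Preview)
Your proof is correct and follows essentially the same strategy as the paper: both reduce to showing $y\mapsto u'[P(y)]^2u$ is convex and both rely on the affinity of $P(y)$ in $y$. The only difference is in how the convexity of this quadratic form is established: the paper verifies it directly by computing
\[
\lambda [P(y^1)]^2+(1-\lambda)[P(y^2)]^2-\big[\lambda P(y^1)+(1-\lambda)P(y^2)\big]^2=\lambda(1-\lambda)\big[P(y^1)-P(y^2)\big]^2\succeq 0,
\]
whereas you use the symmetry of $P(y)$ to rewrite $u'[P(y)]^2u=\|P(y)u\|^2$ and then invoke the standard fact that a squared norm composed with an affine map is convex. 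Your route is a bit more streamlined; the paper's computation is more explicit and in fact yields the slightly stronger operator inequality (valid for all $u$ at once), though only the fixed-$u$ statement is needed here.
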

\begin{proof}
Since $u'\big(\mathcal{L}(y)+\frac{J}{n}\big)u=u'\mathcal{L}(y)u+u'\frac{J}{n}u$, we only need to show that $u'\mathcal{L}(y)u$ is a concave function of $y$. Using \eqref{eq:p-y},  for any $y^1,y^2\in \mathbb{R}^{m}$ and $\lambda\in [0,1]$, we have
\begin{align}\nonumber
P(\lambda y^1+(1-\lambda)y^2)=\lambda P(y^1)+(1-\lambda)P(y^2). 
\end{align}
For any fixed vector $u$, we have
\begin{align}\nonumber
&u'\mathcal{L}(\lambda y^1+(1-\lambda)y^2)u=u'u-u'\big[\lambda P(y^1)+(1-\lambda)P(y^2)\big]^2u,\cr 
&\lambda u'\mathcal{L}(y^1)u+(1-\lambda) u'\mathcal{L}(y^2)u=u'u-\lambda u' [P(y^1)]^2 u-(1-\lambda)u'[P(y^2)]^2 u.
\end{align}
Therefore, to show that $u'\mathcal{L}(y)u$ is a concave function of $y$, it is enough to show that 
\begin{align}\nonumber
u'\big[\lambda P(y^1)+(1-\lambda)P(y^2)\big]^2u\leq u'\big(\lambda [P(y^1)]^2+(1-\lambda)[P(y^2)]^2\big)u, \ \forall u.
\end{align}
The above inequality is also true because
\begin{align}\nonumber
\lambda [P(y^1)]^2+(1-\lambda)[P(y^2)]^2-\big[\lambda P(y^1)+(1-\lambda)P(y^2)\big]^2=\lambda(1-\lambda)\big[P(y^1)-P(y^2)\big]^2,
\end{align}
which is a positive semi-definite matrix.
\end{proof} 

\smallskip
\begin{lemma}\label{lemm:gradient-expansion}
Let $f(u,y)=u'\big(\mathcal{L}(y)+\frac{J}{n}\big)u$, where $\mathcal{L}(y)=I-[P(y)]^2$ and $P(y)$ is given by \eqref{eq:p-y}. Then, for any link $\{i,j\}\in E, i\neq j$, we have 
\begin{align}\label{eq:gradient-formula-lemma}
\nabla_{y_{ij}} f(u,y)&=2p_{ij}(u_i-u_j)^2\big([P(y)]_{ii}+[P(y)]_{jj}-2[P(y)]_{ij}\big)\cr 
&\qquad+\!2p_{ij}\sum_{k\neq i,j}\big((u_k-u_j)^2-(u_k-u_i)^2\big)\big([P(y)]_{ik}-[P(y)]_{jk}\big).
\end{align}
\end{lemma}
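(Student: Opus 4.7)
The plan is to compute the gradient directly via the chain rule and then massage the result into the stated form using the row-stochastic structure of $P(y)$. First, since $u'u$ and $u'(J/n)u$ do not depend on $y$, I have $\nabla_{y_{ij}} f(u,y) = -\nabla_{y_{ij}}\bigl(u'[P(y)]^2 u\bigr)$. From \eqref{eq:p-y}, the only entries of $P(y)$ that depend on $y_{ij}$ are $[P]_{ij}, [P]_{ji}, [P]_{ii}$, and $[P]_{jj}$, and a short calculation shows
\begin{align*}
\frac{\partial P(y)}{\partial y_{ij}} \;=\; -p_{ij}(\boldsymbol{{\rm e}}_i - \boldsymbol{{\rm e}}_j)(\boldsymbol{{\rm e}}_i - \boldsymbol{{\rm e}}_j)',
\end{align*}
i.e., the negative of the elementary edge Laplacian scaled by $p_{ij}$.

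Applying the chain rule to $u'[P(y)]^2 u = \|P(y) u\|^2$ and using symmetry of $P(y)$ and $\partial P/\partial y_{ij}$, I obtain
\begin{align*}
\nabla_{y_{ij}}\!\bigl(u'[P(y)]^2 u\bigr) \;=\; 2\, u'\, \frac{\partial P(y)}{\partial y_{ij}}\, P(y)\, u \;=\; -2p_{ij}(u_i - u_j)\bigl[(P(y)u)_i - (P(y)u)_j\bigr],
\end{align*}
so that $\nabla_{y_{ij}} f(u,y) = 2p_{ij}(u_i - u_j)\bigl[(P(y)u)_i - (P(y)u)_j\bigr]$.

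The main algebraic step is to rewrite $(P(y)u)_i - (P(y)u)_j = \sum_\ell ([P]_{i\ell} - [P]_{j\ell})\, u_\ell$ into the explicit form in \eqref{eq:gradient-formula-lemma}. The key observation is that $P(y)$ is row-stochastic, so $\sum_\ell ([P]_{i\ell} - [P]_{j\ell}) = 0$, which lets me replace $u_\ell$ by $u_\ell - c$ for any scalar $c$ without changing the sum. Carrying out the substitution once with $c = u_i$ and once with $c = u_j$, averaging the two resulting expressions, and separating the $\ell \in \{i, j\}$ terms from the $\ell \neq i, j$ terms gives
\begin{align*}
(P(y)u)_i - (P(y)u)_j \;=\; \tfrac{1}{2}(u_i - u_j)\bigl([P]_{ii} + [P]_{jj} - 2[P]_{ij}\bigr) + \tfrac{1}{2}\sum_{\ell \neq i, j}([P]_{i\ell} - [P]_{j\ell})(2u_\ell - u_i - u_j).
\end{align*}
Multiplying by $2p_{ij}(u_i - u_j)$ and using the identity $(u_\ell - u_j)^2 - (u_\ell - u_i)^2 = (u_i - u_j)(2u_\ell - u_i - u_j)$ to convert the second sum then yields the claimed formula \eqref{eq:gradient-formula-lemma}.

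The chain-rule step and the index bookkeeping are routine. The only mild obstacle is recognizing that the quadratic-difference form $(u_\ell - u_j)^2 - (u_\ell - u_i)^2$ appearing in the lemma is precisely what falls out when one symmetrizes the linear form $(Pu)_i - (Pu)_j$ around \emph{both} endpoints $u_i$ and $u_j$, using row-stochasticity to annihilate the constant that distinguishes the two natural choices of pivot. Once this symmetrization is identified, the rest is purely algebraic.
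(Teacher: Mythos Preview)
Your approach via matrix calculus is cleaner and genuinely different from the paper's: the paper expands $[P(y)]^2_{i'j'}$ explicitly, computes $\nabla_{y_{ij}}[P(y)]^2_{i'j'}$ for each of the seven index configurations of $(i',j')$ relative to $(i,j)$, and then sums via the identity $u'\mathcal{L}(y)u=\sum_{i'\neq j'}[P(y)]^2_{i'j'}(u_{i'}-u_{j'})^2$. Recognizing $\partial P/\partial y_{ij}=-p_{ij}(\boldsymbol{{\rm e}}_i-\boldsymbol{{\rm e}}_j)(\boldsymbol{{\rm e}}_i-\boldsymbol{{\rm e}}_j)'$ as a rank-one edge Laplacian and then symmetrizing $(Pu)_i-(Pu)_j$ via row-stochasticity bypasses all of that index bookkeeping and makes the structure transparent.

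However, your last sentence overshoots. Multiplying your expression for $(Pu)_i-(Pu)_j$ (which still carries the $\tfrac12$ from averaging the two pivots) by $2p_{ij}(u_i-u_j)$ actually gives
\begin{align*}
p_{ij}(u_i-u_j)^2\bigl([P]_{ii}+[P]_{jj}-2[P]_{ij}\bigr)+p_{ij}\!\sum_{k\neq i,j}\!\bigl((u_k-u_j)^2-(u_k-u_i)^2\bigr)\bigl([P]_{ik}-[P]_{jk}\bigr),
\end{align*}
which is exactly \emph{half} of \eqref{eq:gradient-formula-lemma}. A two-node sanity check confirms this: with a single edge of conductance $p=p_{12}y_{12}$ one has $u'\mathcal{L}(y)u=2p(1-p)(u_1-u_2)^2$, whose $y_{12}$-derivative is $2p_{12}(1-2p)(u_1-u_2)^2$, whereas \eqref{eq:gradient-formula-lemma} evaluates to $4p_{12}(1-2p)(u_1-u_2)^2$.

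The discrepancy is in the paper, not in your argument. The Laplacian quadratic form the paper invokes should be $u'\mathcal{L}(y)u=\tfrac12\sum_{i'\neq j'}[P(y)]^2_{i'j'}(u_{i'}-u_{j'})^2$ (equivalently, summed over $i'<j'$), so the formula in the lemma is off by a global factor of~$2$. This is harmless for Algorithm~\ref{alg:iterative} and its convergence analysis, since only the \emph{ordering} of the components $\nabla_{y_{ij}}f$ matters there; but you should not claim your computation reproduces \eqref{eq:gradient-formula-lemma} as written---it reproduces the correct gradient, which is half of it.
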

\begin{proof}
As $P(y)$ is a symmetric matrix, using direct calculation for any $i'\neq j'$,
\begin{align}\label{eq:P2-expansion}
[P(y)]^2_{i'j'}=\sum_{k}[P(y)]_{i'k}[P(y)]_{j'k}&=2p_{i'j'}y_{i'j'}+\sum_{k\neq i',j'}p_{i'k}p_{j'k}y_{i'k}y_{j'k}\cr 
&-\sum_{k\neq i'}p_{i'k}p_{i'j'}y_{i'k}y_{i'j'}-\sum_{k\neq j'}p_{j'k}p_{i'j'}y_{j'k}y_{i'j'}.
\end{align}
Since $\mathcal{L}(y)=I-[P(y)]^2$ is a Laplacian matrix, $u'\mathcal{L}(y)u=\sum_{i'\neq j'}[P(y)]^2_{i'j'}(u_{i'}-u_{j'})^2$. Using the fact that $\nabla_{y} f(u,y)=\nabla_{y}u'\mathcal{L}(y)u$, for any $i\neq j$, we can write
\begin{align}\label{eq:single-gradient-expression}
\nabla_{y_{ij}}f(u,y)=\sum_{i'\neq j'}\big(\nabla_{y_{ij}}[P(y)]^2_{i'j'}\big)(u_{i'}-u_{j'})^2.
\end{align}
Using \eqref{eq:P2-expansion}, one can compute $\nabla_{y_{ij}}[P(y)]^2_{i'j'}$ for all combinations of $i'\neq j', i\neq j$ as 
\begin{align}\nonumber
\nabla_{y_{ij}}[P(y)]^2_{i'j'}=\begin{cases}
p_{ij}\big(2-4p_{ij}y_{ij}\!-\!\sum_{k\neq i,j}p_{ik}y_{ik}\!-\!\sum_{k\neq i,j}p_{jk}y_{jk}\big) & \mbox{if}\ i'=i,\ j'=j\\
p_{ij}\big(2-4p_{ij}y_{ij}\!-\!\sum_{k\neq i,j}p_{ik}y_{ik}\!-\!\sum_{k\neq i,j}p_{jk}y_{jk}\big) & \mbox{if}\ i'=j,\ j'=i\\
p_{ij}\big(p_{ii'}y_{ii'}-p_{ji'}y_{ji'}\big) & \mbox{if}\ i'\neq i,\ j'=j\\
p_{ij}\big(p_{ji'}y_{ji'}-p_{ii'}y_{ii'}\big) & \mbox{if}\ i'\neq j,\ j'=i\\
p_{ij}\big(p_{jj'}y_{jj'}-p_{ij'}y_{ij'}\big) & \mbox{if}\ i'= i,\ j'\neq j\\
p_{ij}\big(p_{ij'}y_{ij'}-p_{jj'}y_{jj'}\big) & \mbox{if}\ i'= j,\ j'\neq i\\
0 & \mbox{if}\ i'\neq i,j,\ j'\neq i,j.
\end{cases} 
\end{align}
If we substitute the above expressions into \eqref{eq:single-gradient-expression} and sum over all $i'\neq j'$, we obtain
\begin{align}\nonumber
\nabla_{y_{ij}}f(u,y)&=2p_{ij}(u_i-u_j)^2\big(2-4p_{ij}y_{ij}\!-\!\sum_{k\neq i,j}p_{ik}y_{ik}\!-\!\sum_{k\neq i,j}p_{jk}y_{jk}\big)\cr 
&+2p_{ij}\!\!\!\sum_{k\neq i,j}(u_k-u_j)^2\big(p_{ik}y_{ik}-p_{jk}y_{jk}\big)\!+\!2p_{ij}\!\!\!\sum_{k\neq i,j}(u_k-u_i)^2\big(p_{jk}y_{jk}-p_{ik}y_{ik}\big)\cr 
&=2p_{ij}(u_i-u_j)^2\big(2-2p_{ij}y_{ij}\!-\!\sum_{k\neq i}p_{ik}y_{ik}\!-\!\sum_{k\neq j}p_{jk}y_{jk}\big)\cr 
&+2p_{ij}\sum_{k\neq i,j}\big((u_k-u_j)^2-(u_k-u_i)^2\big)\big(p_{ik}y_{ik}-p_{jk}y_{jk}\big)\cr
&=2p_{ij}(u_i-u_j)^2\big([P(y)]_{ii}+[P(y)]_{jj}-2[P(y)]_{ij}\big)\cr 
&+2p_{ij}\sum_{k\neq i,j}\big((u_k-u_j)^2-(u_k-u_i)^2\big)\big([P(y)]_{ik}-[P(y)]_{jk}\big),
\end{align} 
where the last equality uses $[P(y)]_{ij}=p_{ij}y_{ij}$, $[P(y)]_{ii}=1-\sum_{k\neq i}p_{ik}y_{ik}$, and $[P(y)]_{jj}=1-\sum_{k\neq j}p_{jk}y_{jk}$. 
\end{proof}

\section*{Appendix II: A Correction to the Past Literature}\label{sex:apx2}

Here, we first provide a counterexample to show that the potential-theoretic algorithm given for the adversary in a series of works \cite{khanafer2015robust,khanafer2013robust,khanafer2012consensus} is not optimal, and then point out the source of error in their proofs. Here, we note that the model in \cite{khanafer2015robust} is written for the continuous-time with the objective function $\int_{t=0}^{T} k(t) \|x(t) - \bar{x}\|^2$ and finite horizon $T$. For simplicity of presentation, we provide a counterexample for the discrete-time model with the objective function $\sum_{t=0}^{T} k(t) \|x(t) - \bar{x}\|^2$ and infinite horizon $T=\infty$. However, one can always choose $T$ to be a sufficiently large finite number and partition $[0, T]$ into equi-length discrete intervals of sufficiently small length. Since our analysis is robust to any $o(1)$-perturbations in the objective value (e.g., due to time truncation/discretization), our counterexample remains valid even for the continuous-time model.

Let us consider a special case of the problem proposed in \cite{khanafer2015robust}, where there is no network designer or, equivalently, there is a network designer with zero budget $b=0$.\footnote{Even for small $b>0$, the robustness of the proposed counterexample still invalidates the global optimality of the potential-theoretic strategy.} We set the time horizon to $T=\infty$, and the kernel function to $k(t)=2, \forall t\in [0, T)$. Moreover, we choose the budget of the adversary to be $\ell=1$ and set the dwell time in \cite[Assumption 1]{khanafer2015robust}, which is the minimum time between consecutive switching times of the adversary's strategy, to be $\tau=T$. In other words, we consider a simpler static version of the problem in \cite{khanafer2015robust}, wherein the adversary can interdict the network only once at time $t=0$. We define the averaging matrix $A$ in \cite{khanafer2015robust} to be the negative Laplacian $A:=-(I-P)$, where $P$ is the conductance matrix of a cycle with three nodes $\{1,2,3\}$ (i.e., a triangle) given by
\begin{align}\nonumber
P=\begin{pmatrix} \vspace{0.1cm}
\frac{17}{30} & \frac{1}{3} & \frac{1}{10} \\ \vspace{0.1cm}
\frac{1}{3} & \frac{1}{3} & \frac{1}{3}\\ 
\frac{1}{10} & \frac{1}{3} & \frac{17}{30} \\
\end{pmatrix}.
\end{align}
Finally, we choose the initial vector to be $x_0=\boldsymbol{{\rm e}}_1-\boldsymbol{{\rm e}}_3=(1,0,-1)'$. Based on the above setting, the potential-theoretic strategy given in \cite[Theorem 1]{khanafer2015robust} computes the power dissipation associated to each of the three edges in the cycle and breaks the one of highest power dissipation \cite[Remark 3]{khanafer2015robust}. Since the power dissipation associated to the edges $\{1,2\}, \{2,3\}$, and $\{1,3\}$ are $\frac{1}{3}(1-0)^2=\frac{1}{3}$, $\frac{1}{3}(0+1)^2=\frac{1}{3}$, and $\frac{1}{10}(1+1)^2=\frac{4}{10}$, respectively, the potential-theoretic strategy must break $\{1,3\}$.  

Next, we show that the optimal strategy for the adversary is to break the link $\{2,3\}$. (Note that by symmetry the case of breaking $\{1,2\}$ is the same as breaking $\{2,3\}$.) Using Lemma \ref{lemm:eff-cast}, the optimal strategy for the adversary is to break a link $\{i,j\}$ that maximizes $\mbox{R}_{\rm eff}\big((P\!\setminus\!\{i,j\})^2\big)$, where $\mbox{R}_{\rm eff}$ denotes the effective resistance between nodes $s=1$ and $t=3$. By considering either of the cases, we have
\begin{align}\nonumber
P\!\setminus\!\{2,3\}\!=\!\begin{pmatrix} \vspace{0.1cm}
\frac{17}{30} & \frac{1}{3} & \frac{1}{10} \\ \vspace{0.1cm}
\frac{1}{3} & \frac{2}{3} & 0\\ 
\frac{1}{10} & 0 & \frac{9}{10} \\
\end{pmatrix}, \  \
P\!\setminus\!\{1,3\}\!=\!\begin{pmatrix} \vspace{0.1cm}
\frac{2}{3} & \frac{1}{3} & 0 \\ \vspace{0.1cm}
\frac{1}{3} & \frac{1}{3} & \frac{1}{3}\\ 
0 & \frac{1}{3} & \frac{2}{3} \\
\end{pmatrix}.
\end{align}
By computing the square of the above matrices, we get
\begin{align}\nonumber
\big(\!P\!\setminus\!\{2,3\}\!\big)^2\!\!\!=\!\!\begin{pmatrix} \vspace{0.1cm}
\frac{199}{450} & \frac{37}{90} & \frac{11}{75} \\ \vspace{0.1cm}
\frac{37}{90} & \frac{5}{9} & \frac{1}{30}\\ 
\frac{11}{75} & \frac{1}{30} & \frac{41}{50} \\
\end{pmatrix}, \ \ 
\big(\!P\!\setminus\!\{1,3\}\!\big)^2\!\!\!=\!\!\begin{pmatrix} \vspace{0.1cm}
\frac{5}{9} & \frac{1}{3} & \frac{1}{9} \\ \vspace{0.1cm}
\frac{1}{3} & \frac{1}{3} & \frac{1}{3}\\ 
\frac{1}{9} & \frac{1}{3} & \frac{5}{9} \\
\end{pmatrix}.
\end{align}
Finally, an easy calculation reveals that
\begin{align}\nonumber
\mbox{R}_{\rm eff}\big((P\setminus\{2,3\})^2\big)=\frac{400}{71}>\frac{18}{5}=\mbox{R}_{\rm eff}\big((P\setminus\{1,3\})^2\big),
\end{align}
which shows that the adversary's optimal strategy is to break the link $\{2,3\}$. This completes the counterexample.  

The error in the work \cite{khanafer2015robust} is because of incorrect use of Taylor approximation to conclude global optimality from a local property of the consensus dynamics. In fact, in our counterexample, we purposefully chose the horizon's length sufficiently large so that the Taylor approximation error in \cite{khanafer2015robust} manifests itself by moving from local to global analysis.  More precisely, the Taylor approximation in \cite[Equation (17)]{khanafer2015robust} is only valid for a small length interval $[t_0, t_0+2\delta]$, and the analysis cannot be repeated and generalized to arbitrary length intervals. In particular, the mimicking behavior of the proposed policies in \cite[Theorem 1]{khanafer2015robust} is not sufficient to guarantee the closeness of the long-run trajectories. The same issue also exists in other works \cite{khanafer2013robust,khanafer2012consensus}. In fact, as we showed in Theorem \ref{thm:scaling-resistance}, such an error is fundamental and cannot be fixed unless P = NP. It is worth noting that although the potential-theoretic algorithm is not globally optimal, as we justified in our simulations, it still constitutes a good suboptimal algorithm for the CIP. 

\end{document}